\newtheorem{theorem}{Theorem}
\newtheorem{lemma}[theorem]{Lemma}			
\newtheorem{proposition}[theorem]{Proposition}	
\theoremstyle{definition}
\newtheorem{remark}{Remark}
\numberwithin{equation}{section}	
\numberwithin{theorem}{section}
\newcommand{\<}{\left\langle} 
\renewcommand{\>}{\right\rangle}
\renewcommand{\(}{\left(}				
\renewcommand{\)}{\right)}
\renewcommand{\[}{\left[}
\renewcommand{\]}{\right]}			
\def\E{\mathbb{E}}			
\def\P{\mathbb{P}}										
\def\R{\mathbb{R}}
\def\ZZ{\mathbb{Z}}
\def\A{\mathscr{A}}
\def\C{\mathscr{C}}
\def\F{\mathscr{F}}
\def\L{\mathscr{L}}	
\def\M{\mathscr{M}}		
\def\O{\mathscr{O}}
\def\V{\mathscr{V}}
\def\D{\mathscr{D}}
\def\N{\mathscr{N}}
\def\S{\mathscr{S}}
\def\eps{\varepsilon}
\def\sig{\sigma}
\def\Lam{\Lambda}
\def\Gam{\Gamma}
\def\gam{\Delta}
\def\del{\delta}
\def\sigb{\bar{\sig}}
\def\gamb{\bar{\gam}}
\def\epsb{\bar{\eps}}
\def\Ph{\widehat{P}}
\def\Ih{\widehat{I}}
\def\d{\partial}		
\def\Ped{P^{\eps,\del}}
\def\Led{\L^{\eps,\del}}
\def\half{\frac{1}{2}}
\newcommand{\EEE}{\E^\star}
\def\It{\widetilde{I}}
\def\Pt{\widetilde{P}}
\def\Qt{\widetilde{Q}}
\newcommand{\PPP}{\P^{\star}}
\begin{document}

\title{Second Order Multiscale Stochastic Volatility Asymptotics: Stochastic Terminal Layer Analysis \& Calibration}

\author{
Jean-Pierre Fouque
\thanks{Department of Statistics \& Applied Probability,
 University of California,
        Santa Barbara, CA 93106-3110, \emph{fouque@pstat.ucsb.edu}. Work  supported by NSF grants DMS-0806461 and  DMS-1107468.}
\and Matthew Lorig
\thanks{Department of Applied Mathematics, University of Washington, Lewis Hall, Seattle, WA  98195, \emph{mlorig@uw.edu}.  Work partially supported by NSF grant DMS-0739195.}
\and Ronnie Sircar
\thanks{ORFE Department, Princeton University, Sherrerd Hall, Princeton NJ 08544, \emph{sircar@princeton.edu}. Work partially supported by NSF grant
    DMS-1211906.}}
\date{\today}
\maketitle

\begin{abstract}
Multiscale stochastic volatility models have been developed as an efficient way to capture the principle effects on derivative pricing and portfolio optimization of randomly varying volatility. The recent book Fouque, Papanicolaou, Sircar and S{\o}lna (2011, CUP) analyzes models in which the volatility of the underlying is driven by two diffusions -- one fast mean-reverting and one slow-varying, and provides a \emph{first order} approximation for European option prices and for the implied volatility surface, which is calibrated to market data. Here, we present the full \emph{second order} asymptotics, which are considerably more complicated due to a terminal layer near the option expiration time. We find that, to second order, the implied volatility approximation depends quadratically on log-moneyness, capturing the convexity of the implied volatility curve seen in data.  We introduce a new probabilistic approach to the terminal layer analysis needed for the derivation of the second order singular perturbation term, and calibrate to S\&P 500 options data.
\end{abstract}


%
%

\section{Introduction}
Stochastic volatility models relax the constant volatility assumption of the Black-Scholes model for option pricing by allowing volatility to fluctuate randomly.  In this context the market is incomplete in the sense that volatility is not traded and volatility risk cannot be fully hedged. There are many risk-neutral measures and we take the usual point of view that the market is choosing one of them by pricing call and put options for instance without introducing an arbitrage. As a result, stochastic volatility models are able to capture some of the well-known features of the implied volatility surface, such as the volatility smile and skew.  While some single-factor diffusion stochastic volatility models such as Heston's \cite{heston}, enjoy wide success due to the existence of semi-analytic pricing formula for European options, it is known that such models are not adequate to match implied volatility levels across all strikes and maturities; see, for instance, \cite{gatheral2}.
Numerous empirical studies have identified at least a fast time scale in stock price volatility on the order of days, as well as a slow scale on the order of months, for example \cite{chernov2003,fpss1,hillebrand,lebaron}.  This has motivated the development of multiscale stochastic volatility models, in which instantaneous volatility levels are controlled by multiple driving factors running on different time scales.

A class of multiscale stochastic volatility models is analyzed in \cite{fouque2004multiscale}, where an approximation for European options and their induced implied volatilities is derived, which can capture the overall level of implied volatility, its skew across strike prices and its term-structure over a wide range of maturities.  However, the analysis there is limited to a first order approximation, which cannot pick up the slight convexity of the observed equity implied volatility surface.
In this paper we extend the results of \cite{fouque2004multiscale} to second order.  This extension is non-trivial, as it requires a careful terminal layer analysis, which we approach probabilistically.  For some related multiscale perturbation techniques in European option pricing, we refer for instance to \cite{conlonsullivan} and \cite{lorig2} (spectral methods), \cite{sam1} (matched asymptotic expansions), \cite{alos1}, \cite{gobetmiri}  and \cite{fukasawa} (Malliavian calculus), \cite{fukasawa2} (Edgeworth expansion), and
\cite{zubellisouza} (inner-outer expansions). For a recent related analysis within a different asymptotic regime, see \cite{lorig-pagliarani-pascucci-2}.

Our second order results allow us to capture the slight convexity of the implied volatility skew.  Additionally, we are able to maintain analytic tractability which is important for calibration to data, as we demonstrate.
Of course, numerous asymptotic regimes have been analyzed in recent years for the option pricing problem in incomplete markets: see \cite{fpss}, \cite{gatheral} and \cite{lewis2000} for some references. Here our focus is not just on deriving and proving convergence of the approximation in the appropriate limits, but in disentangling the calibration procedure that results from it. Compared to the first order theory, this is much more involved as there are many more group parameters and basis functions that have to be accommodated to implied volatility data. Despite the increase in complexity, we show this can be implemented successfully.

The rest of this paper proceeds as follows.  In Section \ref{sec:model}, we describe the class of multiscale stochastic volatility models that we will work with.  Using a formal singular and regular perturbation analysis, we derive a pricing approximation which is valid for any European-style option.  We establish the accuracy of our pricing approximation in Theorem \ref{thm:accuracy}, where we use a regularization to handle the non-smoothness of payoffs such as call and put option payoffs.  In Section \ref{sec:volatilities}, we present an explicit formula for the implied volatility surface induced by our option pricing approximation.  Additionally, we show how a parameter reduction, crucial for calibration purpose, can be achieved with no loss of accuracy.  In Section \ref{sec:calibration}, we outline a procedure for calibrating the class of multiscale stochastic volatility models to the empirically observed implied volatility surface of liquid calls and puts.  We carry out this calibration procedure on S\&P500 index call and put options data.  Section \ref{sec:conclusion} concludes.

%
%

\section{Second Order Option Pricing Asymptotics}
\label{sec:model}
We consider the class of multiscale stochastic volatility models studied in \cite{fpss}.  Let $X$ denote the price of a non-dividend-paying asset whose dynamics under the historical probability measure $\P$ is defined by the following system of stochastic differential equations (SDEs):
\begin{align}
\left. \begin{aligned}
dX_t
		&=		\mu \, X_t \, dt + f(Y_t,Z_t) \, X_t \, dW_t^{(0)} , \\
dY_t
		&=		\frac{1}{\eps} \alpha(Y_t) \, dt + \frac{1}{\sqrt{\eps}} \, \beta(Y_t) \, dW_t^{(1)} , \\
dZ_t
		&=		\del \, c(Z_t) \, dt + \sqrt{\del} \, g(Z_t) \, dW_t^{(2)}.
\end{aligned} \right\} \label{eq:Physical}
\end{align}
Here, ($W^{(0)}$, $W^{(1)}$, $W^{(2)}$) are $\P$-Brownian motions with correlation structure
$$
d\langle  W^{(0)},  W^{(1)} \rangle_t
		=		\rho_{1} \, dt ,\quad 
d\langle  W^{(0)},  W^{(2)} \rangle_t
		=		\rho_{2} \, dt ,	\quad
	d\langle  W^{(1)},  W^{(2)} \rangle_t
		=		\rho_{12} \, dt,	
		$$
where
 ($\rho_{1}$, $\rho_{2}$, $\rho_{12}$) satisfy $|\rho_{1}|, |\rho_{2}| ,|\rho_{12}| < 1$ and $1 + 2 \rho_{1} \rho_{2} \rho_{12} - \rho_{1}^2 - \rho_{2}^2 - \rho_{12}^2 > 0$, which guarantees that the correlation matrix of the Brownian motions is positive-semidefinite. The asset $X$ has geometric growth rate $\mu$ and stochastic volatility $f(Y_t,Z_t)$ which is driven by two factors, $Y$ and $Z$. Under the physical measure, the infinitesimal generators of $Y$ and $Z$ are scaled by factors of $1/\eps$ and $\del$ respectively.  Thus, $\eps>0$ and $1/\del>0$ represent the intrinsic time-scales of these processes.  We will work in the regime where $\eps << 1$ and $\del<<1$ so that $Y$ and $Z$ represent fast- and slow-varying factors of volatility respectively.  Most importantly, we assume the fast factor is \emph{mean-reverting}. Specifically,  $Y$ is an ergodic process, assumed reversible, and with a unique invariant distribution $\Pi$ under $\P$, which is independent of $\eps$. 

Under the risk-neutral pricing measure $\PPP$ (chosen by the market) the dynamics are described by
\begin{align}
\left. \begin{aligned}
dX_t
		&=		r \, X_t \, dt + f(Y_t,Z_t) \, X_t \, dW_t^{\star(0)} , 	
		\\
dY_t
		&=		\( \frac{1}{\eps} \alpha(Y_t) - \frac{1}{\sqrt{\eps}} \Lam(Y_t) \, \beta(Y_t) \) dt
					+ \frac{1}{\sqrt{\eps}} \, \beta(Y_t) \, dW_t^{\star(1)} , 	\\
dZ_t
		&=		\( \del \, c(Z_t) - \sqrt{\del} \, \Gam(Y_t,Z_t) \, g(Z_t) \) dt
					+ \sqrt{\del} \, g(Z_t) \, dW_t^{\star(2)} ,
\end{aligned} \right\} \label{eq:RiskNeutral}
\end{align}
where ($W^{\star(0)}$, $W^{\star(1)}$, $W^{\star(2)}$) are $\PPP$-Brownian motions with the same correlation structure as between their $\P$-counterparts, and $r {\geq 0}$ is the risk-free rate of interest.  The functions $\Lam(y)$ and $\Gam(y,z)$ represent market prices of volatility risk, which we have assumed such as to preserve the Markov structure of $(X,Y,Z)$,  the pair $(Y,Z)$, and $Y$ by itself.

\subsection{Assumptions}
\label{sec:assumptions}
Throughout this manuscript, we shall make the following assumptions which are stated here along with some of their immediate consequences essential to the paper: 
\begin{enumerate}
\item {For all starting points $(x,y,z)$, the systems of SDEs \eqref{eq:Physical} and \eqref{eq:RiskNeutral} have unique strong solutions} {$(X_t,Y_t,Z_t)$} for {all $0<\eps,\delta\leq1$.}  {Moreover,}  the coefficients are at most linearly growing.
\label{item:strong}

\item The volatility function $f$ of the two variables $(y,z)$ is measurable, bounded and bounded away from zero: there exist constants $\underline{c}$ and $\overline{c}$ such that $0<\underline{c}\leq f(y,z)\leq \overline{c}<\infty$ for all $(y,z)\in \R^2$.
\label{item:fbounded}

\item The market prices of volatility risk are bounded:  $|| \Lam ||_\infty <\infty$ and $|| \Gam ||_\infty < \infty $. \label{item:GammaLambdaBound} {In particular, combined with the previous assumption, $\P$ and $\PPP$ are equivalent and $\PPP$ is an Equivalent Martingale Measure.}

\item Let $Y^{(1)}$ be a diffusion process whose infinitesimal generator is {$\L_0:=\frac{1}{2} \beta^2(y) \d_{yy}^2 + \alpha(y) \d_y$} (so that, in distribution, $Y_t=Y^{(1)}_{t/\eps}$ under $\P$). We assume that $Y^{(1)}$ is a Feller process (that is, a Markov process with a Feller semigroup), that it is ergodic and its unique invariant distribution $\Pi$ has a strictly positive density denoted by $\pi$.
{
Furthermore, we assume the following specific {\it exponential ergodicity condition}: 
for every integer $k\geq 1$, there exist constants $c_k > 0$ and $d_k<\infty$ such that
 $$
 \L_0(y^{2k})\leq -c_ky^{2k}+d_k \qquad \forall y.
 $$
These conditions will enable us to use in Appendix \ref{AppA3} the exponential ergodic rates provided by Theorem 6.1 of \cite{MeynTweedie}.  
We note that two of the processes that are most commonly used as stochastic volatility drivers --- the Ornstein-Uhlenbeck (OU) and Cox-Ingersoll-Ross (CIR)  processes --- satisfy these conditions (in the case of CIR, the state space is $(0,\infty)$ with the classical condition on the coefficients ensuring that the process never hits zero).
}
\label{item:gap}
 
\item Let $Y^{(1,\eps)}$ be a diffusion process whose infinitesimal generator is {$\L_0-\sqrt{\eps}\Lambda(y)\beta(y) \d_y$} (so that, in distribution, $Y_t=Y^{(1,\eps)}_{t/\eps}$ under $\PPP$). We assume that $Y^{(1,\eps)}$ is a Feller process, that it is ergodic and its unique invariant distribution $\Pi_\eps$ has a strictly positive density denoted by $\pi_\eps$.
{ 
Furthermore, we assume the specific {\it exponential ergodicity condition}: 
for every integer $k\geq 1$, there exist constants $c_k> 0$ and $d_k<\infty$ independent of $\eps$ such that
 $$
\left[ { \L_0-\sqrt{\eps}\Lambda(y)\beta(y) \d_y} \right](y^{2k})\leq -c_ky^{2k}+d_k\qquad\forall y.
 $$ 
 Note that, for OU and CIR processes, this condition holds as a consequence of Assumptions \ref{item:GammaLambdaBound} and \ref{item:gap}.
}
\label{item:gapeps}
 
\item {The process $Y^{(1)}$ admits moments of any order uniformly bounded in $t<\infty$:}
\begin{align}
{ \sup_{t \geq 0}} \, \E \[ \left| Y_t^{(1)} \right|^k \] 	
	&\leq 		C(k).			\label{eq:Ybound}
\end{align}
Note that this assumption on moments is satisfied by OU and CIR processes (see \cite[Sections 3.3.3 and 3.3.4]{fpss} for more details on these processes).
 \label{item:Ybound}

\item Let $Z^{(1)}$ be a diffusion process whose infinitesimal generator is {$\M_2:=\frac{1}{2} g^2(z) \d_{zz}^2 + c(z) \d_z$} (so that, in distribution, $Z_t=Z^{(1)}_{\delta t}$ under $\P$). We assume that $Z^{(1)}$ admits moments of any order uniformly bounded in $t\leq T$, for fixed $T<\infty$:
\begin{align}
\sup_{t \leq T} \E \[ \left| Z_t^{(1)} \right|^k \] 
		&\leq 		C(T,k)	.\label{eq:Zbound}
\end{align}
\label{item:Zbound}
\item \label{item:Poisson} In addition to Assumption \ref{item:fbounded} ($f$ is bounded), we assume that {$f(y,\cdot) \in C^\infty(\R)$ for all $y\in \R$ with bounded derivatives.  
Note that consequently, the averaged square-volatility defined by 
\begin{align}
\sigb^2(z) 				&:= 		\int f^2(y,z) \, \Pi(dy) \label{eq:sigbar},
\end{align}
is finite and differentiable.}
Furthermore, consider Poisson equations of the form
\begin{align}
\L_0 \phi(\cdot,z) + \chi( \cdot, z)
	&= 0 , &
	&\text{where}&
\< \chi(\cdot,z) \>
	&:=	\int \chi(y,z) \Pi(dy) = 0 ,
\end{align}
and where $\chi$ is at most polynomially growing in $y$ and $z$.  
We assume solutions $\phi$ of such equations are at most polynomially growing in $y$ and $z$.  In particular, this applies to the solutions $\phi$ and $\{\psi_i, i=1,\dots,9\}$ to the Poisson equations \eqref{eq:phi}, \eqref{eq:psi1psi2} and \eqref{eq:psi3psi4}. In the cases that $Y$ is an OU or a CIR process, this follows from assumption \ref{item:fbounded} above and \cite[Lemmas 3.1 and 3.2]{fpss}.  
 \label{item:smooth}

\item \label{itm:h} We denote by $h: \R^+ \to \R$ the payoff function of a European option. 
The payoff $h$ is measurable, locally bounded ({\em i.e.} bounded on intervals $[a,b]$ for any $0<a,b<\infty$), and 
is at most polynomially growing at $0$ and $\infty$ (where, with a slight abuse of terminology, polynomially refers to inverse power law growth at $0$). In other words, there exist a finite constant $a\geq 0$ and an integer $k$ such that
\begin{align}
|h(x)|
	&\leq a(1+x^{k}+x^{-k}) , \qquad
\forall \, x 
	> 0 .
\end{align}
Note that $\log$-style payoffs (which are used to price variance swaps)
are in this class of payoffs, and of course it contains vanilla put and call payoffs (essential for calibration to implied volatilities), binary call and put payoffs, as well as other traded payoffs such as butterflies and straddles.

\begin{remark}\label{item:cases}
We will refer to $h$ as {\em smooth} in the case that $h\in C^\infty(0,\infty)$, and $h$ and all its derivatives are at most polynomially growing at $0$ and $\infty$.
The proof of accuracy for our second order pricing approximation (Theorem \ref{thm:accuracy}) will be separated into two parts.
First, in Appendix \ref{appendix:proof}, we establish the accuracy of the approximation for options with smooth payoffs. 
Results from the smooth case proof will be used in Appendix \ref{sec:call}, where we establish the accuracy of the approximation for options with payoffs which may have a finite number discontinuities in $h$ or its derivatives.
The proof that is given in Appendix \ref{sec:call} involves a regularization argument, which was used in  \cite{fouque2003proof} to establish the accuracy of the first order approximation with only a fast factor of volatility.
\end{remark}

\item
In what follows, we also assume that \eqref{eq:PDE+BC}, the linear pricing partial differential equation (PDE) given below, admits a unique classical solution.
\end{enumerate}

\subsection{Pricing PDE}
Consider a European option with expiration date $T$ and payoff $h(X_T)$.  The no-arbitrage pricing function of this option at time $t<T$ is given by the expectation of the discounted option payoff:
\begin{align}
\Ped(t,x,y,z)
		&=		\EEE \[ e^{-r(T-t)}h(X_T) \Big| X_t=x, Y_t=y, Z_t=z \] . 
\end{align}
Here, $\EEE$ denotes an expectation taken under the pricing measure $\PPP$, and we have used the Markov property of $(X,Y,Z)$.  The pricing function $\Ped$ {is the classical solution of} the following PDE and terminal condition:
\begin{align}
\Led \, \Ped
		&=		0 , &
\Ped(T,x,y,z)
		&=		h(x) , \label{eq:PDE+BC}
\end{align}
where, introducing the notation 
\begin{equation}
\D_k = x^k \d_{x \cdots x}^k,\qquad k=1,2,\cdots, \label{Ddef} 
\end{equation}
the operator $\Led$ is given by
\begin{align}
\Led
		&=		\( \frac{1}{\eps} \L_0 + \frac{1}{\sqrt{\eps}} \L_1 + \L_2 \)
					+ \sqrt{\del} \( \frac{1}{\sqrt{\eps}} \M_3 + \M_1 \) + \del \, \M_2 , \label{eq:L,eps,del}
\end{align}
with
\begin{align}
\L_0
		&=		\frac{1}{2} \beta^2(y) \d_{yy}^2 + \alpha(y) \d_y , \label{lzerodef}\\
\L_1
		&=		\rho_{1} \beta(y) f(y,z) \D_1 \d_y - \beta(y) \Lam(y) \d_y ,  \label{lonedef}\\
\L_2
		&=		\d_t  + \tfrac{1}{2} f^2(y,z) \D_2 + r \D_1 - r, \label{ltwodef}\\
\M_3
		&=		\rho_{12} \beta(y) g(z) \d_{yz}^2 ,  \label{mthreedef}\\
\M_1
		&=		\rho_{2} g(z) f(y,z) \D_1 \d_z - g(z) \Gam(y,z) \d_z , \\
\M_2
		&=		\frac{1}{2} g^2(z) \d_{zz}^2 + c(z) \d_z . \label{M2def}
\end{align}

For general coefficients $(f,\alpha,\beta,\Lam,c,g,\Gam)$, we do not have an explicit solution to \eqref{eq:PDE+BC}, and we seek an asymptotic approximation for the option price to make the calibration problem computationally tractable.  The fast factor asymptotic analysis is a singular perturbation problem, while the slow factor expansion is a regular perturbation. 
Thus, the small-$\eps$ and small-$\del$ regime gives rise to a combined singular-regular perturbation about the $\O(1)$ operator $\L_2$.  We expand $\Ped$ in powers of $\sqrt{\eps}$ and $\sqrt{\del}$ as follows
\begin{align}
\Ped(t,x,y,z)
		&=	\sum_{j \geq 0} \sum_{i \geq 0} \sqrt{\eps}^{\,i} \sqrt{\del}^{\,j} P_{i,j}(t,x,y,z) . \label{eq:Pexpand}
\end{align}
{This is a formal series expansion, for which we find $P_{i,j}$ for  $i+j\leq 2$ explicitly, and prove an accuracy result  for the truncated series in Section \ref{sec:proof}. As the combined regular-singular perturbation expansion is quite lengthy, we give a summary of the key results in Section \ref{summary}. 
We also point out that we are working within an infinite-dimensional family of models since the functions $(f,\alpha,\beta,\Lam,c,g,\Gam)$ are unspecified: the $18$ group parameters that are found in Section \ref{sec:parameters} and calibrated in Section \ref{sec:calibration} contain specific moments of these functions identified by the asymptotic analysis.
}


\subsection{{Formal Asymptotics}}
{We first construct a regular perturbation expansion 
in powers of $\sqrt{\del}$ by writing}
\begin{align}
\Led
		&=		\L^\eps + \sqrt{\del} \, \M^\eps + \del \, \M_2 , &
\Ped
		&=		\sum_{j \geq 0} \sqrt{\del}^{\,j} P_j^\eps, \label{eq:L,P,eps}
\end{align}
where, from \eqref{eq:L,eps,del},
\begin{align}
\L^\eps
		&=		\frac{1}{\eps} \L_0 + \frac{1}{\sqrt{\eps}} \L_1 + \L_2 , &
\M^\eps
		&=		\frac{1}{\sqrt{\eps}} \M_3 + \M_1 , &
P_j^\eps
		&=		\sum_{i \geq 0} \sqrt{\eps}^{\,i} P_{i,j} . \label{eq:PepsExpand}
\end{align}
Inserting \eqref{eq:L,P,eps} into \eqref{eq:PDE+BC} and collecting terms of like-powers of $\sqrt{\del}$, we find that the lowest order equations of the regular perturbation expansion are
\begin{align}
\O(1):&&
0
		&=		\L^\eps P_0^\eps , \label{eq:del0} \\
\O(\sqrt{\del}):&&		
0
		&=		\L^\eps P_1^\eps + \M^\eps P_0^\eps , \label{eq:del1} \\
\O(\del):&&
0
		&=		\L^\eps P_2^\eps + \M^\eps P_1^\eps + \M_2 \, P_0^\eps . \label{eq:del2}
\end{align}
Within each of these three equations, we now perform a singular perturbation analysis with respect to $\eps$.


\subsubsection{{First Order Fast Factor Term}\label{sec:one}}
From a fast factor expansion of equation \eqref{eq:del0}, we will now find {the} zeroth order term $P_{0,0}$ in our approximation \eqref{eq:Pexpand}, and the first term coming from the fast factor, $P_{1,0}$.  

We insert expansions \eqref{eq:PepsExpand} into \eqref{eq:del0} and collect terms of like-powers of $\sqrt{\eps}$.  
The resulting $\O(1/\eps)$ and $\O(1/\sqrt{\eps})$ equations are:
\begin{align}
\O(1/\eps):&&
0
		&=		\L_0 P_{0,0} , \\
\O(1/\sqrt{\eps}):&&
0
		&=		\L_0 P_{1,0} + \L_1 P_{0,0} .
\end{align}
We see from \eqref{lzerodef} and \eqref{lonedef} that all terms in $\L_0$ and $\L_1$ take derivatives with respect to $y$.  Thus, if we choose $P_{0,0}$ and $P_{1,0}$ to be independent of $y$, the above equations will automatically be satisfied.  Hence, we seek solutions of the form
\begin{align}
P_{0,0}
		&=		P_{0,0}(t,x,z) , &
P_{1,0}
		&=		P_{1,0}(t,x,z) ,
\end{align}
i.e., no $y$-dependence.  Continuing the asymptotic analysis, the $\O(1)$, $\O(\sqrt{\eps})$ and $\O(\eps)$ equations are:
\begin{align}
\O(1):&&
0
		&=		\L_0 P_{2,0} + \cancel{\L_1 P_{1,0}}+\L_2 P_{0,0} , \label{eq:u2poisson} \\
\O(\sqrt{\eps}):&&
0
		&=		\L_0 P_{3,0} + \L_1 P_{2,0} + \L_2 P_{1,0} , \label{eq:u3poisson} \\
\O(\eps):&&
0
		&=		\L_0 P_{4,0} + \L_1 P_{3,0} + \L_2 P_{2,0} , \label{eq:u4poisson}
\end{align}
where we have used the fact that $\L_1 P_{1,0}=0$.  

Equations \eqref{eq:u2poisson}, \eqref{eq:u3poisson} and \eqref{eq:u4poisson} are Poisson equations of the form
\begin{align}
0
		&=	\L_0 P +  \chi . \label{eq:poisson}
\end{align}
By the Fredholm alternative, equation \eqref{eq:poisson}, which is a linear ODE in $y$, admits a solution $P$ in $L^2(\Pi)$ only if the following solvability, or centering, condition holds:
\begin{align}
\< \chi \>  
		&:=			\int  \chi(y)\, \Pi(dy) 
		= 		0\, , \label{eq:center}
\end{align}
where we introduced the invariant distribution $\Pi$ in assumption \ref{item:gap} of Section \ref{sec:assumptions}. 
Note that two such solutions will differ by a constant (in $y$). 
We refer to \cite[Section 3.2]{fpss} for further details.

Applying the centering condition to equations \eqref{eq:u2poisson}, \eqref{eq:u3poisson} and \eqref{eq:u4poisson}, and using the fact that $P_{0,0}$ and $P_{1,0}$ do not depend on $y$, we find
\begin{align}
\O(1):&&
0
		&=		\< \L_2 \> P_{0,0} , \label{eq:center0} \\
\O(\sqrt{\eps}):&&
0
		&=		\< \L_1 P_{2,0} \> + \< \L_2 \> P_{1,0} , \label{eq:center1} \\
\O(\eps):&&
0
		&=		\< \L_1 P_{3,0} \> + \< \L_2 P_{2,0} \> , \label{eq:center2}
\end{align}
where, from \eqref{ltwodef}, the operator $\< \L_2 \>$ is given by
\begin{align}
\< \L_2 \>
		&=		\d_t + \tfrac{1}{2} \sigb^2(z) \D_2 + r \D_1 - r,    \label{eq:<L2>}
\end{align}  
with
\begin{equation}
		  \sigb^2(z):= \< f^2(\cdot,z)\>=\int f^2(y,z)\Pi(dy).
\end{equation} 
We observe that $\< \L_2 \> $ is the Black-Scholes pricing operator with {\it effective averaged} volatility 
$\sigb(z)$,  
in which the level $z$ of the slow factor appears as a parameter, {and we will express $P_{0,0}$ as a Black-Scholes option price in Proposition \ref{thm:prices}.}

Expanding the terminal condition in \eqref{eq:PDE+BC} leads to the terminal conditions
\begin{align}
\O(1):&&
P_{0,0}(T,x,z)
		&=		h(x) , \label{eq:u0BC} \\
\O(\sqrt{\eps}):&&
P_{1,0}(T,x,z)
		&=		0. \label{eq:u1BC} 
\end{align}
 
To find $P_{1,0}$ from equation \eqref{eq:center1}, we next compute $\< \L_1 P_{2,0} \>$. 
Using \eqref{eq:center0}, we re-write \eqref{eq:u2poisson} as follows
\begin{align}
\L_0 P_{2,0}
		=		- \L_2 P_{0,0} 
		=		- \( \L_2 - \< \L_2 \> \) P_{0,0} 
		=		-	\frac{1}{2}\( f^2 - \< f^2 \> \) \D_2 P_{0,0} .
\end{align}
Introducing a solution $\phi(y,z)$ to the Poisson equation
\begin{align}
\L_0 \, \phi
		&=		f^2 - \<f^2\>,  \label{eq:phi}
\end{align}
we deduce the following expression for $P_{2,0}$:
\begin{align}
P_{2,0}(t,x,y,z)
		&=		- \frac{1}{2} \, \phi(y,z)\, \D_2 P_{0,0}(t,x,z) + F_{2,0}(t,x,z) , \label{eq:P20,first}
\end{align}
{for some $F_{2,0}(t,x,z)$ that is independent of $y$, and which is yet to be determined.}  
Inserting \eqref{eq:P20,first} into \eqref{eq:center1} yields the following PDE for $P_{1,0}$ 
\begin{align}
\< \L_2 \> P_{1,0}
		&=		- \< \L_1 P_{2,0} \>
		=		- \< 	\bigg( \rho_{1} \beta \, f\, \D_1 \d_y - \beta \, \Lam \, \d_y \bigg)
								\( - \frac{1}{2} \phi \,\D_2 P_{0,0} + F_{2,0} \) \> 
		=		- \V \, P_{0,0} , \label{eq:u1PDE}
\end{align}
where the $z$-dependent operator $\V$ is given by
\begin{equation}
\V(z)
		=		V_3(z) \D_1 \, \D_2 + V_2(z) \D_2 , \label{Vdef}
\end{equation}
and we introduce the notation
\begin{align}\label{V2V3def}
V_2(z)
		&=		\frac{1}{2} \< \beta(\cdot) \Lam(\cdot) \d_y \phi(\cdot,z) \> , &
V_3(z)
		&=		- \frac12\rho_{1} \< \beta(\cdot) f(\cdot,z) \d_y \phi(\cdot,z) \> .
\end{align}
The solution $P_{1,0}$ of 
the PDE \eqref{eq:u1PDE} with terminal condition \eqref{eq:u1BC} will be given in Proposition \ref{thm:prices}.

\subsubsection{{Second Order Fast Factor Term $P_{2,0}$ and Terminal Layer}}\label{sec:P2}

The form of \eqref{eq:P20,first} shows that the natural terminal condition $P_{2,0}(T,x,y,z)=0$ is not enforceable because the singular perturbation with respect to the fast factor creates a terminal layer near $t=T$. However, as we will demonstrate in Section \ref{sec:proof}, the ergodic theorem enables us to impose the {\it averaged} terminal condition 
\begin{equation}
\< P_{2,0}(T,x,\cdot,z) \>
		=		0, \label{eq:u2BC}
\end{equation}
and to obtain the desired accuracy of our pricing approximation. In fact, we will see that this is the {\it only} appropriate choice for proof of convergence.
Moreover, the solution of the Poisson equation \eqref{eq:phi} is defined { in $L^2(\Pi)$} up to a constant in $y$. We choose this constant by imposing the condition 
\begin{align}
\<\phi(\cdot,z)\>&=0, \label{centeringphi}
\end{align} 
and we will show in Section \ref{sec:proof} that this choice 
{is needed in the proof of accuracy of our pricing approximation.}

{To determine $P_{2,0}$, given by \eqref{eq:P20,first}, we need a PDE and terminal condition for the unknown function $F_{2,0}$. 
These will be found from the centering conditions equation \eqref{eq:center2} and the terminal condition \eqref{eq:u2BC}.
Starting from the expression \eqref{eq:P20,first} for $P_{2,0}$, applying the operator $\L_2$ and averaging, we obtain: 
$$ \< \L_2 P_{2,0} \>
		=		\< \L_2 \( - \frac{1}{2} \phi \D_2 P_{0,0} + F_{2,0} \) \> 
		=		- \frac{1}{2} \< \phi \, \L_2 \> \D_2 P_{0,0} + \< \L_2 \> F_{2,0} . $$
Since $D_2$ and $\L_2$ commute when acting on functions independent of $y$, we have
$$ 	 \< \phi \, \L_2 \> \D_2 P_{0,0}= \D_2 \< \phi \, \L_2 \> P_{0,0}	= 	\D_2 \< \phi \, (\L_2-\<\L_2\>) \> P_{0,0} = \frac12D_2\< \phi f^2\>D_2P_{0,0}, $$
and therefore
\begin{equation}
 \< \L_2 P_{2,0} \>		=		A \, \D_2^2 \, P_{0,0} + \< \L_2 \> F_{2,0} , \label{eq:<L0u2>}
 \end{equation}
where $A(z)$ is given in \eqref{eq:Adef} below.}

To find $\< \L_1 P_{3,0} \> $ we first compute $P_{3,0}$.
From \eqref{eq:u3poisson}, \eqref{eq:center1}, \eqref{eq:phi}, \eqref{eq:P20,first}, and the definitions of $\L_1$ and $\L_2$, we have 
\begin{align}
\L_0 P_{3,0}
		&=		- \( \L_1 P_{2,0} + \L_2 P_{1,0} \) \\
		&=		- \( \L_1 P_{2,0} - \< \L_1 P_{2,0} \> \) - \( \L_2 - \< \L_2 \> \) P_{1,0} \\
		&=		- \L_1 \( - \frac{1}{2} \phi \D_2 P_{0,0} + F_{2,0} \) 
					+ \< \L_1 \( - \frac{1}{2} \phi \D_2 P_{0,0}+ F_{2,0} \) \> 
					- \( \frac{1}{2} \(f^2 - \< f^2 \> \) \D_2 P_{1,0} \) \\
		&=		- \(	-\frac{1}{2} \rho_{1} \Big( \beta f \d_y \phi - \< \beta f \d_y \phi \> \Big) \D_1 \D_2
								+\frac{1}{2} \Big( \beta \Lam \d_y \phi - \< \beta \Lam \d_y \phi\> \Big) \D_2 \) P_{0,0} 
					- \( \frac{1}{2} \L_0 \phi \) \D_2 P_{1,0} .
\end{align}
Therefore, we can write 
\begin{align}
P_{3,0}
		&=		\frac{1}{2} \rho_{1} \, \psi_1 \D_1 \D_2 P_{0,0} - \frac{1}{2} \psi_2 \D_2 P_{0,0}
					- \frac{1}{2} \phi \D_2 P_{1,0} + F_{3,0} , \label{eq:P3.explicit}
\end{align}
for some $F_{3,0}(t,x,z)$ which is independent of $y$, and where $\psi_1(y,z)$ and $\psi_2(y,z)$ satisfy the Poisson equations 
\begin{align}
\L_0 \, \psi_1
		&=		\beta f \d_y \phi - \< \beta f \d_y \phi \> , &
\L_0 \, \psi_2
		&=		\beta \Lam \d_y \phi - \< \beta \Lam \d_y \phi \>. \label{eq:psi1psi2}
\end{align}
Now, we can compute $\< \L_1 P_{3,0} \>$: 
\begin{align}
\< \L_1 P_{3,0} \> 
		&=		\< 	\Big(  \rho_{1} \beta f \D_1 - \beta \Lam \Big) \d_y
							\(  \frac{1}{2} \rho_{1} \, \psi_1 \D_1 \D_2 P_{0,0} 
									- \frac{1}{2} \psi_2 \D_2 P_{0,0}
									- \frac{1}{2} \phi \D_2 P_{1,0}
							\) \> {+\<\cancel{\L_1F_{3,0}}\>}\\
		&=		\Big( A_2 \D_1^2 \D_2 + A_1 \D_1 \D_2 + A_0 \D_2 \Big) \, P_{0,0} 
					+ \Big( V_3 \D_1 \D_2 + V_2 \D_2 \Big) \, P_{1,0} , \label{eq:<L-1u3>}
\end{align}
where $A_2(z)$, $A_1(z)$ and $A_0(z)$ are given in equation {\eqref{eq:Adef}} below.

Inserting \eqref{eq:<L0u2>} and \eqref{eq:<L-1u3>} into \eqref{eq:center2} yields the PDE for $F_{2,0}$ given in \eqref{eq:C2PDE} below. 
The terminal condition is found by averaging \eqref{eq:P20,first}, and using \eqref{eq:u2BC} and \eqref{centeringphi}:
\begin{align}
\< P_{2,0} (T,x,\cdot,z) \>
		&=		- \frac{1}{2} \cancel{ \< \phi \> } \D_2 P_{0,0}(T,x,z) + F_{2,0}(T,x,z) =  F_{2,0}(T,x,z) = 0,
\end{align}
{where we have used our choice on $\phi$ in equation \eqref{centeringphi}.}

In summary, we have that the function $F_{2,0}(t,x,z)$ satisfies the following PDE and terminal condition
\begin{align}
\< \L_2 \> F_{2,0}
		&=		- \A \, P_{0,0} - \V \, P_{1,0} , &
F_{2,0}(T,x,z)
		&=		0 , \label{eq:C2PDE} 
\end{align}
where the $z$-dependent operator $\A$ is given by
\begin{align}
\begin{aligned}
\A(z)
		&=		A_2(z) \D_1^2 \D_2 + A_1(z) \D_1 \D_2 + A_0(z) \D_2 + A(z) \D_2^2 ,  \\
A_2(z)
		&=	\frac{1}{2} \rho_{1}^2 \<\beta(\cdot) f(\cdot,z) \d_y \psi_1(\cdot,z) \> , \\
A_1(z)
		&=	- \frac{1}{2} \rho_{1} \( \< \beta(\cdot) \Lam(\cdot) \d_y \psi_1(\cdot,z) \>
				+ \< \beta(\cdot) f(\cdot,z) \d_y \psi_2(\cdot,z) \> \) , \\
A_0(z)
		&=	\frac{1}{2} \< \beta(\cdot) \Lam(\cdot) \d_y \psi_2(\cdot,z) \> , \\
{A(z)}
		&=		{- \frac{1}{4}  \< \phi(\cdot,z) f^2(\cdot,z) \>},		
\end{aligned} \label{eq:Adef}
\end{align}
The solution $F_{2,0}$ of the PDE with terminal condition \eqref{eq:C2PDE} will be given in Proposition \ref{thm:prices}.  This is as far as we will take the asymptotic analysis of the $\O(1)$ equation \eqref{eq:del0}.


\subsubsection{{First Order Slow and Fast-Slow Terms $P_{0,1}$ and $P_{1,1}$}
\label{sec:one.half}}
Proceeding as in Section \ref{sec:one}, we insert expansions \eqref{eq:PepsExpand} into \eqref{eq:del1} and collect terms of like-powers of $\sqrt{\eps}$. The resulting $\O(\sqrt{\del}/\eps)$ and $\O(\sqrt{\del}/\sqrt{\eps})$ equations are:
\begin{align}
\O(\sqrt{\del}/\eps):&&
0
		&=		\L_0 P_{0,1} , \\
\O(\sqrt{\del}/\sqrt{\eps}):&&
0
		&=		\L_0 P_{1,1} + \L_1 P_{0,1} + \cancel{ \M_3 P_{0,0} } ,
\end{align}
where we have used $\M_3 P_{0,0}=0$ since $\M_3$, given in \eqref{mthreedef}, contains $\d_y$, and $P_{0,0}$ is independent of $y$.  Recalling that all terms in $\L_0$ and $\L_1$ also contain $\d_y$, we seek solutions $P_{0,1}$ and $P_{1,1}$ of the form
\begin{align}
P_{0,1}
		&=		P_{0,1}(t,x,z) , &
P_{1,1}
		&=		P_{1,1}(t,x,z) .
\end{align}
Continuing the asymptotic analysis, the $\O(\sqrt{\del})$ and  $\O(\sqrt{\del}\sqrt{\eps})$ equations are:
\begin{align}
\O(\sqrt{\del}):&&
0
		&=		\L_0 P_{2,1} + \cancel{\L_1 P_{1,1}} + \L_2 P_{0,1} 
					+ \cancel{\M_3 P_{1,0}} + \M_1 P_{0,0}, \label{eq:u21poisson} \\
\O(\sqrt{\del}\sqrt{\eps}):&&
0
		&=		\L_0 P_{3,1} + \L_1 P_{2,1} + \L_2 P_{1,1} 
					+ \M_3 P_{2,0} + \M_1 P_{1,0} . \label{eq:u31poisson}
\end{align}
Equations \eqref{eq:u21poisson} and \eqref{eq:u31poisson} are Poisson equations of the form \eqref{eq:poisson}.  Applying the centering condition \eqref{eq:center} to \eqref{eq:u21poisson} and \eqref{eq:u31poisson} yields
\begin{align}
\O(\sqrt{\del}):&&
0
		&=		\< \L_2 \> P_{0,1} + \< \M_1 \> P_{0,0}, \label{eq:u21center} \\
\O(\sqrt{\del}\sqrt{\eps}):&&
0
		&=		\< \L_1 P_{2,1} \> + \< \L_2 \> P_{1,1} 
					+ \< \M_3 P_{2,0} \> + \< \M_1 \> P_{1,0} . \label{eq:u31center}
\end{align}
We  also have the following terminal conditions
\begin{align}
\O(\sqrt{\del}):&&
P_{0,1}(T,x,z)
		&=		0 , \label{eq:u01BC} \\
\O(\sqrt{\del}\sqrt{\eps}):&&
P_{1,1}(T,x,z)
		&=		0 . \label{eq:u11BC}
\end{align}
The PDE \eqref{eq:u21center} and terminal condition \eqref{eq:u01BC} can be used to find an expression for $P_{0,1}$, which will be given in Proposition \ref{thm:prices}.  

The operator $\<\M_1\>$ appearing in \eqref{eq:u21center} can be written as 
\begin{align}
\< \M_1 \>
		= 		\rho_{2} g \< f \> \D_1 \d_z - g \< \Gam \> \d_z
		= \frac{2}{\sigb'} (V_1(z) \D_1\d_z + V_0(z)\d_z), \label{avM1def}
\end{align}
where ${\sigb'}=\partial_z\sigb$ (recall that we have assumed that $\sigb(z)$ in \eqref{eq:sigbar} is differentiable) and we introduce the notation
\begin{align}\label{V0V1def}
V_1(z) 
		&=		\frac{1}{2} \rho_{2} \sigb'(z) g(z) \< f(\cdot,z) \>, &
V_0(z)
		&=		- \frac{1}{2} \sigb'(z) g(z) \< \Gam(\cdot,z) \> .
\end{align}

In order to make use of equation \eqref{eq:u31center} to find $P_{1,1}$, we need expressions for $\< \L_1 P_{2,1} \>$ and $\< \M_3 P_{2,0} \>$.
To get to $\< \L_1 P_{2,1} \>$, we first compute $P_{2,1}$.
Using \eqref{eq:u21poisson} and \eqref{eq:u21center}, we have
\begin{align}
\L_0 P_{2,1}
		&=		- \L_2 P_{0,1} - \M_1 P_{0,0} \\
		&=		- \( \L_2 - \< \L_2 \> \) P_{0,1} - \( \M_1 - \< \M_1 \> \) P_{0,0} \\
		&=		- \frac{1}{2} \( f^2 - \<f^2\> \) \D_2 P_{0,1}
					- \rho_{2} g \( f - \<f\> \) \D_1 \d_z P_{0,0}
					+ g \( \Gam - \< \Gam \> \) \d_z P_{0,0}. 
\end{align}
Thus, $P_{2,1}$ is given by
\begin{align}
P_{2,1}
		&=		- \frac{1}{2} \phi \D_2 P_{0,1}
					- \rho_{2} g \psi_3 \D_1 \d_z P_{0,0}
					+ g \psi_4 \d_z P_{0,0} + F_{2,1}(t,x,z) , \label{eq:P21}
\end{align}
for some $F_{2,1}(t,x,z)$ which does not depend on $y$, and where $\psi_3(y,z)$ and $\psi_4(y,z)$ satisfy the Poisson equations 
\begin{align}
\L_0 \psi_3
		&=		f - \<f\> , &
\L_0 \psi_4
		&=		\Gam - \< \Gam \> . \label{eq:psi3psi4}
\end{align} 
Consequently,
\begin{align}
\< \L_1 P_{2,1} \>
		&=		\< \Big( 	\rho_{1} \beta f \D_1 - \beta \Lam \Big) \d_y
							\( 	- \frac{1}{2} \phi \D_2 P_{0,1} \) \> 
					+ \< \Big( 	\rho_{1} \beta f \D_1 - \beta \Lam \Big) \d_y
							\Big( - \rho_{2} g \, \psi_3 \D_1 \d_z P_{0,0} \Big) \>  \\ 
			&		+ \< \Big( 	\rho_{1} \beta f \D_1 - \beta \Lam \Big) \d_y	
							\Big( g \, \psi_4 \d_z P_{0,0} \Big) \> 
							+ \< \cancel{ \L_1 F_{2,1} } \> \\
		&=		- \frac{1}{2} \rho_{1} \< \beta f \d_y \phi \> \D_1 \D_2 P_{0,1}
					+ \frac{1}{2} \< \beta \Lam \d_y \phi \> \D_2 P_{0,1}  
					- \rho_{1} \rho_{2} g  \< \beta f \d_y \psi_3 \> \D_1^2 \d_z P_{0,0}\\
		& 					+ \rho_{2} g  \< \beta \Lam \d_y \psi_3 \> \D_1 \d_z P_{0,0}					+ \rho_{1} g  \< \beta f \d_y \psi_4 \> \D_1 \d_z P_{0,0}
					- g \< \beta \Lam \d_y \psi_4 \> \d_z P_{0,0} ,
\end{align}
which leads to  
\begin{align}
\< \L_1 P_{2,1} \>
		&=		\( V_3 \D_1 \D_2 + V_2 \D_2 \) P_{0,1}
					+ \frac{1}{\sigb'}\( C_2 \D_1^2 + C_1 \D_1 + C_0 \) \d_z P_{0,0} , \label{eq:L1P21} 
\end{align}
where $(C_0,C_1,C_2)$ are defined in \eqref{Cdefs} below.

Next, using expression \eqref{eq:P20,first} for $P_{2,0}$ we find
\begin{equation}
\< \M_3 P_{2,0} \>
=		\< \Big( \rho_{12} \beta(\cdot) g(z) \d_{yz}^2 \Big) \( - \frac{1}{2} \phi \D_2 P_{0,0} + F_{2,0} \) \> 
=		-\frac{1}{2} \rho_{12} g \< \beta \d_y \phi \> \D_2 \d_z P_{0,0} ,
\end{equation}
which gives
\begin{align}
\< \M_3 P_{2,0} \>
		&=		 \frac{1}{\sigb'} C \D_2 \d_z P_{0,0} , \label{eq:M3P20} 
\end{align}
where
\begin{align}\label{Cdefs}
C_2(z)
		&=		- \rho_{1} \rho_{2} \sigb'(z) g(z)  \< \beta(\cdot) f(\cdot,z) \d_y \psi_3(\cdot,z) \> , \\
C_1(z)
		&=		\rho_{2} \sigb'(z) g(z)  \< \beta(\cdot) {\Lam(\cdot)}\d_y \psi_3 \> 
					+ \rho_{1} g  \< \beta(\cdot) f(\cdot,z) \d_y \psi_4(\cdot,z) \> , \\
C_0(z)
		&=		- \sigb'(z) g(z) \< \beta(\cdot) \Lam(\cdot) \d_y \psi_4(\cdot,z) \> , \\
C(z)
		&=		-\frac{1}{2} \rho_{12} \sigb'(z) g(z) \< \beta(\cdot) \d_y \phi(\cdot,z) \> .
\end{align}

Inserting \eqref{eq:L1P21} and \eqref{eq:M3P20} into \eqref{eq:u31center}, we find
\begin{align}
\< \L_2 \> P_{1,1}
		&=	- \V \, P_{0,1} - \frac{1}{\sigb'} \C \, \d_z P_{0,0} - \< \M_1 \> P_{1,0} , \label{eq:P11PDE,first}
\end{align}
where the $z$-dependent operator $\C$ is given by
\begin{align}
\C(z)
		&=		C_2(z) \D_1^2 + C_1(z) \D_1 + C_0(z) + C(z) \D_2 . \label{Cdef}
\end{align}
The solution $P_{1,1}$ of the PDE \eqref{eq:P11PDE,first} with terminal condition \eqref{eq:u11BC} will be given in Proposition \ref{thm:prices} .  This is as far as we will take the asymptotic analysis of equation \eqref{eq:del1}


\subsubsection{{Second Order Slow Term}\label{Odel}}
We now move on to the $\O(\del)$ equation \eqref{eq:del2}.  Proceeding as in Sections \ref{sec:one} and \ref{sec:one.half}, we insert expansions \eqref{eq:PepsExpand} into \eqref{eq:del2} and collect term of like-powers of $\sqrt{\eps}$. The resulting $\O(\del/\eps)$ and $\O(\del/\sqrt{\eps})$ equations are:
\begin{align}
\O(\del/\eps):&&
0
		&=		\L_0 P_{0,2} , \\
\O(\del/\sqrt{\eps}):&&
0
		&=		\L_0 P_{1,2} + \L_1 P_{0,2} + \cancel{ \M_3 P_{0,1} } ,
\end{align}
where we have used $\M_3 P_{0,1}=0$ since $\M_3$ contains $\d_y$ and $P_{0,1}$ is independent of $y$.  Recalling that all terms in $\L_0$ and $\L_1$ also contain $\d_y$, we seek solutions $P_{0,2}$ and $P_{1,2}$ of the form
\begin{align}
P_{0,2}
		&= P_{0,2}(t,x,z) , &
P_{1,2}
		&= P_{1,2}(t,x,z) .
\end{align}
Continuing the asymptotic analysis, the $\O(\del)$ equation is:
\begin{align}
\O(\del):&&
0
		&=		\L_0 P_{2,2} + \cancel{\L_1 P_{1,2}} + \L_2 P_{0,2} 
					+ \cancel{\M_3 P_{1,1}} + \M_1 P_{0,1} + \M_2 P_{0,0} . \label{eq:u22poisson}
\end{align}
Equation \eqref{eq:u22poisson} is a Poisson equation of the form \eqref{eq:poisson} whose centering condition \eqref{eq:center} is
\begin{align}
\O(\del):&&
0
		&=		\< \L_2 \> P_{0,2} + \< \M_1 \> P_{0,1} + \M_2 P_{0,0}. \label{eq:u22center}
\end{align}
We also have the following terminal condition
\begin{align}
\O(\del):&&
P_{0,2}(T,x,z)
		&=		0 . \label{eq:u02BC}
\end{align}
The solution $P_{0,2}$ of the PDE \eqref{eq:u22center} with terminal condition \eqref{eq:u02BC} will be given in Proposition \ref{thm:prices}.  This is as far as we will take the combined singular-regular perturbation analysis.


\subsection{Review of Asymptotic Analysis and Pricing Formulas\label{summary}}
In the previous sections we showed (formally) that the price of a European option can be approximated by
\begin{align}\label{eq:Phat} 
\Ped
		&\approx		\Pt^{\eps,\del}:=P_{0,0} + \sqrt{\eps} \, P_{1,0} + \sqrt{\del} \, P_{0,1} + \eps \, P_{2,0} + \del \, P_{0,2} 
												+ \sqrt{\eps\,\del} P_{1,1} ,
\end{align}
where
\begin{align}
\left. \begin{aligned}
\O(1):&&
\< \L_2 \> P_{0,0}
		&=		0 , &
P_{0,0}(T,x,z)
		&=		h(x) , \\
\O(\sqrt{\eps}):&&
\< \L_2 \> P_{1,0}
		&=		- \V P_{0,0} , &
P_{1,0}(T,x,z)
		&=		0 ,\\
\O(\sqrt{\del}):&&
\< \L_2 \> P_{0,1}
		&=		- \< \M_1 \> P_{0,0} , &
P_{0,1}(T,x,z)
		&=		0 , \\
\O(\eps):&&
P_{2,0}
		&=		- \frac{1}{2} \, \phi \D_2 P_{0,0} + F_{2,0} ,
\\ &&
\< \L_2 \> F_{2,0}
		&=		- \A \, P_{0,0} - \V \, P_{1,0} , &
F_{2,0}(T,x,z)
		&=		0 , \\
\O(\del):&&
\< \L_2 \> P_{0,2}
		&=		- \< \M_1 \> P_{0,1} - \M_2 P_{0,0} , &
P_{0,2}(T,x,z)
		&=		0 , \\
\O(\sqrt{\eps \, \del}):&&
\< \L_2 \> P_{1,1}
		&=		- \V \, P_{0,1} - \frac{1}{\sigb'} \C \, \d_z P_{0,0} - \< \M_1 \> P_{1,0} , &
P_{1,1}(T,x,z)
		&=		0 ,
\end{aligned} \right\} \label{eq:key}
\end{align}
and the $z$-dependent operators
in \eqref{eq:key} are given by
\begin{align}
\left. \begin{aligned}
\< \L_2 \>
		&= 		\d_t + \frac12\sigb^2 \D_2+ r \D_1 - r , \\
\V
		&= 		V_3 \D_1 \D_2 + V_2 \D_2, \\
\< \M_1 \>
		&=		\frac{2}{\sigb'} \( V_1 \D_1 + V_0 \) \d_z \\
\A
		&= 		A_2 \D_1^2 \D_2 + A_1 \D_1 \D_2 + A_0 \D_2 + A \D_2^2 , \\
\M_2
		&= 		\frac{1}{2} g^2 \, \d_{zz}^2 + c \, \d_z , \\
\C
		&= 		C_2 \D_1^2 + C_1 \D_1 + C_0 + C \D_2 .
\end{aligned} \right\} \label{eq:operators}
\end{align}

{
We introduce the Black-Scholes price of the option with volatility $\sigma$, time to maturity $\tau=T-t$, and payoff function $h$:
\begin{equation}
P_{BS}(\tau,x;\sigma) = e^{-r\tau}\int_\R h\(xe^{(r-\frac12\sigma^2)\tau + \sigma\sqrt{\tau}\,\xi}\)\frac{e^{-\xi^2/2}}{\sqrt{2\pi}}\,d\xi. \label{PBSdef}
\end{equation} 
Then we denote the solution to \eqref{eq:center0} with terminal condition \eqref{eq:u0BC} by
\begin{equation}
P_{0,0}(t,x,z) = P_{BS}(T-t,x;\sigb(z)), \label{P0formula}
\end{equation}
the Black-Scholes price with volatility $\sigb(z)$. }
In the following, we provide explicit expressions for the functions $P_{i,j}$ ($i+j\leq 2$) in terms of the contract's Black-Scholes price $P_{BS}$ and its derivatives (or ``Greeks''). 
\begin{proposition}
\label{thm:prices}
Let $\{P_{i,j},\, i+j\leq 2 \}$ be the unique classical solutions of the linear PDEs with terminal conditions given in \eqref{eq:key}.  Then we have the following expressions for the $\{ P_{i,j}\}$ {in terms of the Black-Scholes price $P_{BS}(T-t,x;\sigb(z))$ defined in \eqref{PBSdef}:
\begin{align}
&P_{0,0}(t,x,z)
		=		P_{BS}(T-t,x;\sigb(z)) , \qquad P_{1,0}(t,x,z)=\tau \, \V \, P_{BS},  \qquad P_{0,1}(t,x,z)=\tau \, \N_1 \d_\sig P_{BS}\label{P0}\\
&P_{2,0}(t,x,y,z)
		=		- \frac{1}{2} \, \phi(y,z)\D_2 P_{BS} + F_{2,0} , \quad \mbox{where}\quad 
F_{2,0}(t,x,z)
		=		\( \tau \, \A \, + \frac{1}{2} \tau^2 \V^2 \) P_{BS} , \label{P2}\\
&P_{0,2}(t,x,z)
		=		\( \frac{2\tau^{2}}{3\sigb'}  \N_1 \N_1' \, \d_\sig 
					+ \frac{\tau^{2}}{2}  \N_1^2 \( \d_{\sig\sig}^2 + \frac{1}{3\sigb} \d_\sig \)
					+	\frac{\tau}{3} B_2 \( \d_{\sig\sig}^2 + \frac{1}{2\sigb} \d_\sig \)
					+ \frac{\tau}{2} B_1 \, \d_\sig \) P_{BS} , \label{P02}\\
&P_{1,1}(t,x,z)
		=		\( \tau^2 \V \, \N_1 \, \d_\sig + \frac{\tau}{2} \C \, \d_\sig + \frac{\tau^2}{\sigb'} \N_1 \V' \) P_{BS} \label{P11}.
\end{align}}
Here $\tau= T-t$ is the time-to-maturity, and we have introduced the $z$-dependent operators
\begin{align}
\N_1
		&=		V_1\, \D_1 + V_0, &
\N_1'
		&=		V_1' \, \D_1 + V_0', &
\V'
		&= 		V_3' \D_1 \D_2 + V_2' \D_2 ,
\end{align}
and $z$-dependent parameters
\begin{align}\label{Bdefs}
V_j'&=\d_zV_j, \quad j=0,1,2,3 &
B_2
		&=		\frac{1}{2} g^2 (\sigb')^2 , &
B_1
		&=		\frac{1}{2} g^2 \sigb'' + c \sigb' ,
		\end{align}
where $(V_0(z), V_1(z), V_2(z), V_3(z))$ were defined in \eqref{V0V1def} and \eqref{V2V3def}.
\end{proposition}
{We re-iterate that all the terms are functions of $(t,x,z)$, except $P_{2,0}$, which also depends on the current level $y$ of the fast volatility factor. This is what creates the need for the terminal layer analysis in this paper.}

{In \eqref{P0formula}, we have already found that 
$P_{0,0} = P_{BS}(\sigb(z))$.}   
In order to derive expressions for the higher order terms $\{P_{i,j}, 1\leq i+j\leq 2\}$, we need the following two lemmas. 
\begin{lemma}[Vega-Gamma Relation]\label{thm:commute}
{The Black-Scholes pricing function $P_{BS}(\tau,x;\sigma)$ of a European option with time to maturity $\tau>0$ and payoff function $h$ satisfying Assumption \ref{itm:h} in Section \ref{sec:assumptions}, obeys the following relationship 
between its Vega $\d_\sigma P_{BS}$ and its Gamma $\D_2  P_{BS}$:}
\begin{align}
\d_\sigma P_{BS}(\tau,x;\sigma)
		&=		\tau  \sigma  \D_2  P_{BS}(\tau,x;\sigma) \quad \mbox{for all} \quad x>0 . \label{eq:Dsig}
\end{align}
\end{lemma}
\begin{proof}
{We have that 
$$ P_{BS}(\tau,x;\sigma)=e^{-r\tau}\int_{\R^+}h(y)\,p(\tau,x,y;\sig)\,dy, $$
where}
\begin{align}
p(\tau,x,y;\sig) = \frac{1}{y\sqrt{2\pi\sig^2\tau}}\exp\(-\frac{1}{2\sig^2\tau}\(\log(y/x)-(r-\frac12\sig^2)\tau\)^2\).
\end{align}
A direct computation shows that $\tau  \sigma  \D_2 p(\tau,x,y;\sig) = \d_\sig p(\tau,x,y;\sig)$.  Thus, we compute
\begin{align}
\tau  \sigma  \D_2  P_{BS}(\sigma)
	&=	e^{-r\tau}\tau \sig x^2 \d_{xx}^2 \int_{\R^+} p(\tau,x,y;\sig) h(y) \,dy 
	=	e^{-r\tau}\tau \sig x^2 \int_{\R^+} \d_{xx}^2 p(\tau,x,y;\sig) h(y) \,dy \\
	&=	e^{-r\tau}\int_{\R^+} \d_{\sig} p(\tau,x,y;\sig) h(y) \,dy
	=	\d_{\sig} \({e^{-r\tau}} \int_{\R^+} p(\tau,x,y;\sig) h(y) \,dy\) 
	=	\d_{\sig}  P_{BS}(\sigma) ,
\end{align}
where passing the derivative operators through the integrals is justified by the polynomial growth assumption (at $0$ and $+\infty$) on  the option payoff $h$. 
\end{proof}
\begin{remark}
Another way to derive the Vega-Gamma relationship \eqref{eq:Dsig} is to write a linear PDE with source for the Vega  $\d_\sigma P_{BS}(\sigma)$ by differentiating the Black-Scholes PDE for $P_{BS}(\sigma)$ and checking that the unique classical solution is given in terms of the Gamma by $\tau  \sigma  \D_2  P_{BS}(\sigma)$.
\end{remark}
Using Lemma \ref{thm:commute} and the fact that the logarithmic derivative operators $\D_k$ in \eqref{Ddef} commute
($\D_k \D_m = \D_m \D_k$), which implies that $\< \L_2 \>$ and any $\D_k$ commute ($\< \L_2 \>\D_k=\D_k\< \L_2 \>$), one can show:
\begin{lemma} \label{thm:recipe}
The Black-Scholes price $P_{BS}(\tau,x;\sigma)$ of a European option with time to maturity $\tau>0$, current stock price $x>0$, and payoff function $h$ satisfying 
Assumption \ref{itm:h} in Section \ref{sec:assumptions}, satisfies for positive integers $k$ and $n$,
\begin{align}
\< \L_2 \> \frac{\tau^{n+1}}{n+1} P(\{\D_k\}) \, P_{BS}(\tau,x;\sigma) 
		&=		- \tau^n \, P(\{\D_k\}) \, P_{BS}(\tau,x;\sigma)		, \label{eq:Dk} \\
\< \L_2 \> \frac{\tau^{n+1}}{n+2} P(\{\D_k\}) \, \d_\sig \, P_{BS}(\tau,x;\sigma)
		&=		- \tau^n \, P(\{\D_k\}) \, \d_\sig \, P_{BS}(\tau,x;\sigma)		 , \label{eq:DkDsig} \\
\< \L_2 \> \frac{\tau^{n+1}}{n+3} P(\{\D_k\}) \( \d_{\sig\sig}^2 + \frac{1}{\sig \, (n+2)} \d_\sig \) P_{BS}(\tau,x;\sigma)
		&=		- \tau^n \, P(\{\D_k\}) \, \d_{\sig\sig}^2 \, P_{BS}(\tau,x;\sigma)		, \label{eq:DkD2sig}
\end{align}
where $P(\{\D_k\})$ is some polynomial of $\D_1, \D_2, \cdots, \D_k$.
\end{lemma}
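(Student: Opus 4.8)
The strategy is to deduce all three identities from three facts about the Black--Scholes price: (i) by \eqref{eq:<L2>}, $\< \L_2 \>$ is the Black--Scholes operator $\L_{BS}(\sig)$ evaluated at the constant volatility $\sig=\sigb(z)$ (with $z$ merely a parameter), so $\< \L_2 \> P_{BS}(\sigb(z))=0$; (ii) the operators $\D_1,\D_2,\dots$ commute with one another and, since the coefficient $\sigb(z)$ of $\< \L_2 \>$ depends on neither $t$ nor $x$, any polynomial $P(\{\D_k\})$ commutes with $\< \L_2 \>$ and with $\d_\sig$; and (iii) the elementary product rule $\< \L_2 \>(\tau^{m}F)=-m\,\tau^{m-1}F+\tau^{m}\< \L_2 \>F$, which holds because $\tau=T-t$ forces $\d_t\tau=-1$ while the remaining terms of $\< \L_2 \>$ do not act on $t$.

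Next I would derive ``eigen-relations'' for the Greeks. Differentiating $\L_{BS}(\sig)P_{BS}(\sig)=0$ once in $\sig$ and using $\d_\sig\L_{BS}(\sig)=\sig\D_2$ gives $\L_{BS}(\sig)\d_\sig P_{BS}=-\sig\D_2 P_{BS}$, and Lemma \ref{thm:commute} rewrites the right-hand side as $-\frac1\tau\d_\sig P_{BS}$. Differentiating a second time, and substituting both Lemma \ref{thm:commute} and its $\sig$-derivative (which yields $\sig\D_2\d_\sig P_{BS}=\frac1\tau\d_{\sig\sig}^2 P_{BS}-\frac1{\tau\sig}\d_\sig P_{BS}$), gives $\L_{BS}(\sig)\d_{\sig\sig}^2 P_{BS}=-\frac2\tau\d_{\sig\sig}^2 P_{BS}+\frac1{\tau\sig}\d_\sig P_{BS}$. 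Evaluating at $\sig=\sigb(z)$, so that $\L_{BS}(\sigb)=\< \L_2 \>$, yields
\begin{align}
\< \L_2 \>\,\d_\sig P_{BS}(\sigb)
		&= -\frac1\tau\,\d_\sig P_{BS}(\sigb), &
\< \L_2 \>\,\d_{\sig\sig}^2 P_{BS}(\sigb)
		&= -\frac2\tau\,\d_{\sig\sig}^2 P_{BS}(\sigb)+\frac1{\tau\sigb}\,\d_\sig P_{BS}(\sigb).
\end{align}

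With these in hand the three identities follow by applying the product rule with $m=n+1$ and pulling $P(\{\D_k\})$ through $\< \L_2 \>$. For \eqref{eq:Dk}, $\< \L_2 \>(P(\{\D_k\})P_{BS})=0$, so only the $-(n+1)\tau^n$ term survives, and dividing by $n+1$ finishes. For \eqref{eq:DkDsig}, the extra contribution $\tau^{n+1}\< \L_2 \>\d_\sig P_{BS}=-\tau^n\d_\sig P_{BS}$ bumps the coefficient to $-(n+2)\tau^n$, and dividing by $n+2$ finishes. For \eqref{eq:DkD2sig} I would set $v:=\d_{\sig\sig}^2 P_{BS}+\frac1{\sigb(n+2)}\d_\sig P_{BS}$ and compute, from the two eigen-relations, $\< \L_2 \>v=-\frac2\tau\d_{\sig\sig}^2 P_{BS}+\frac{n+1}{n+2}\frac1{\tau\sigb}\d_\sig P_{BS}$; then $\< \L_2 \>(\tau^{n+1}v)=-(n+1)\tau^n v+\tau^{n+1}\< \L_2 \>v$, in which the two $\d_\sig P_{BS}$ contributions, namely $-(n+1)\tau^n\frac1{\sigb(n+2)}\d_\sig P_{BS}$ and $+\frac{n+1}{n+2}\tau^n\frac1\sigb\d_\sig P_{BS}$, cancel exactly, leaving $-(n+3)\tau^n\d_{\sig\sig}^2 P_{BS}$; dividing by $n+3$ and reinstating $P(\{\D_k\})$ gives the claim.

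The only delicate point is \eqref{eq:DkD2sig}: one must verify that the lower-order correction $\frac1{\sigb(n+2)}\d_\sig$ built into the left-hand side is precisely the amount needed for the $\d_\sig P_{BS}$ terms to cancel. This rests on getting the second $\sig$-derivative of the Black--Scholes PDE exactly right, and in particular on correctly using the differentiated form of Lemma \ref{thm:commute} to eliminate $\sig\D_2\d_\sig P_{BS}$; the first two identities are routine bookkeeping once the product rule and the eigen-relations are established.
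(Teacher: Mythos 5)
Your proof is correct; all three identities check out, and in particular the delicate cancellation in \eqref{eq:DkD2sig} works exactly as you compute it (the $\d_\sig P_{BS}$ contributions $-(n+1)\tau^n\tfrac{1}{\sigb(n+2)}\d_\sig P_{BS}$ and $+\tfrac{n+1}{n+2}\tfrac{\tau^n}{\sigb}\d_\sig P_{BS}$ do cancel, leaving $-(n+3)\tau^n\d_{\sig\sig}^2P_{BS}$). The route is, however, organized differently from the paper's. The paper pushes everything in the opposite direction: it uses Lemma \ref{thm:commute} (and its $\sig$-derivative) to convert $\d_\sig P_{BS}$ and $\d_{\sig\sig}^2P_{BS}$ entirely into expressions of the form $\tau^m Q(\{\D_k\})P_{BS}$, after which only the single product-rule identity $\< \L_2 \>\bigl(\tau^m Q(\{\D_k\})P_{BS}\bigr)=-m\,\tau^{m-1}Q(\{\D_k\})P_{BS}$ together with $\< \L_2 \>P_{BS}=0$ is needed, and the three stated coefficients emerge from collecting powers of $\tau$. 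You instead keep the Greeks as Greeks and derive the two ``eigen-relations'' $\< \L_2 \>\d_\sig P_{BS}=-\tfrac1\tau\d_\sig P_{BS}$ and $\< \L_2 \>\d_{\sig\sig}^2P_{BS}=-\tfrac2\tau\d_{\sig\sig}^2P_{BS}+\tfrac{1}{\tau\sigb}\d_\sig P_{BS}$ by differentiating the Black--Scholes PDE in $\sig$ (still invoking Lemma \ref{thm:commute} to simplify $\sig\D_2P_{BS}$ and $\sig\D_2\d_\sig P_{BS}$). The two computations are equivalent, but yours has the advantage of making the denominators $n+1$, $n+2$, $n+3$ and the correction term $\tfrac{1}{\sigb(n+2)}\d_\sig$ appear as forced by an eigenvalue count ($\< \L_2 \>$ shifts $\tau^{n+1}$ by $-(n+1)\tau^n$ and each Greek contributes its own ``eigenvalue'' $-1/\tau$ or $-2/\tau$), whereas the paper's version is a more mechanical term-collection; the paper's version, on the other hand, needs only the single fact $\< \L_2 \>P_{BS}=0$ rather than two differentiated consequences of it. One cosmetic caution: your claim that $P(\{\D_k\})$ ``commutes with $\< \L_2 \>$'' is correct only because its coefficients are independent of $t$ and $x$ (they may depend on $z$), and you rightly do \emph{not} claim that $\d_\sig$ commutes with $\< \L_2 \>$ --- the failure of that commutation is precisely what the eigen-relations encode.
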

\begin{proof}
The proof is a straightforward calculation {of the left sides of the expressions \eqref{eq:Dk}, \eqref{eq:DkDsig} and\eqref{eq:DkD2sig}.} In showing the second and third relations, the $\d_\sigma$ partial derivatives acting on $P_{BS}$ are first converted into $\D_2$ using Lemma \ref{thm:commute} which now commute with any $\D_k$ operators and $\< \L_2 \>$. The final step uses that $\< \L_2 \>P_{BS}(\sigb(z))=0$.
\end{proof}

\begin{proof}[Proof of Proposition \ref{thm:prices}]
Using Lemmas \ref{thm:commute} and \ref{thm:recipe}, a direct computation shows that the $\{P_{i,j}\}$ of Proposition  \ref{thm:prices} satisfy the PDEs of \eqref{eq:key} and their associated terminal conditions.
\end{proof}

\subsection{Accuracy of the Approximation}\label{sec:proof}
The accuracy of our pricing approximation $\Pt^{\eps,\del}$ defined in \eqref{eq:Phat} is as follows.

\begin{theorem}
\label{thm:accuracy}
We recall the standing assumptions in Section \ref{sec:assumptions}.
Then, for fixed $t<T$, $x$, $y$, and $z$, the model price $P^{\eps,\del}$ solution of \eqref{eq:PDE+BC} and our price approximation, $\Pt^{\eps,\del}$ defined by \eqref{eq:Phat}, satisfy
\begin{align}
|P^{\eps,\del}(t,x,y,z)-\Pt^{\eps,\del}(t,x,y,z)|=\O(\eps^{3/2-}+\eps\sqrt{\del}+\del\sqrt{\eps}+\delta^{3/2}), 
\end{align}
where we use the notation $\O(\eps^{3/2-})$ to indicate terms that are of order $\O(\eps^{1+q/2})$ for any $q<1$.
\end{theorem}
\begin{proof}
The proof is divided into two parts.  First, in Appendix \ref{appendix:proof}, we provide a proof for options with smooth payoffs $h$, as described in Remark \ref{item:cases}.
Elements of the proof in the smooth case will be used to prove the accuracy for options with payoffs $h$ satisfying 
Assumption \ref{itm:h}, which is given in Appendix \ref{sec:call}.
\end{proof}

\begin{remark}[Terminal Layer Analysis]
\label{rmk:layer}
The main difficulty in Theorem \ref{thm:accuracy} in extending the accuracy of our pricing approximation from first order to second order is the treatment of the terminal condition for the second order term $P_{2,0}$ arising from the singular expansion due to the fast factor $Y$. In \cite{sam1}, the solution $P_{2,0}$ is derived by a formal matched asymptotic expansion with a terminal layer of size ${\eps}$. Here, in Appendix \ref{appendix:proof},  we provide a probabilistic proof for 
options with smooth payoffs $h$, which is based on the ergodic property of the fast factor $Y$, and justifies the choice of terminal condition made in \eqref{eq:u2BC}.  The proof of accuracy for options with payoffs $h$ satisfying Assumption \ref{itm:h}, is given in Appendix \ref{sec:call}.  
The proof makes use of the results derived in Appendix \ref{appendix:proof} and additionally relies on a payoff-regularization argument.
\end{remark}

\subsection{Group Parameters}
\label{sec:parameters}
We now summarize the parameters needed in the pricing approximation formulas derived in the previous section.
We begin by separating the $y$-dependent part in $\Pt^{\eps,\del}$ given by \eqref{eq:Phat}, by writing
\begin{align}
 \Pt^{\eps,\del}(t,x,y,z) = -\frac{1}{2} \, \eps \, \phi(y,z) \D_2 P_{0,0}(t,x,z) + \Qt^{\eps,\del}(t,x,z), 
 \end{align}
where
\begin{align}
\Qt^{\eps,\del}(t,x,z)
		&:=		P_{0,0} + \sqrt{\eps} \, P_{1,0} + \sqrt{\del} \, P_{0,1} + \sqrt{\eps \,\del} \, P_{1,1} + \eps \, F_{2,0}
									+ \del \, P_{0,2} .\label{expans}
\end{align}
  Using \eqref{eq:Phat}, \eqref{eq:key} and the linearity of the operator $\<\L_2\>$, we find that $\Qt^{\eps,\del}$ satisfies the following PDE and terminal condition
\begin{align}
\< \L_2 \> \Qt^{\eps,\del}
		&=		S^{\eps,\del} , &
\Qt^{\eps,\del}(T,x,z)
		&=		h(x) , \label{eq:L2Q=S}
\end{align}
where the source term $S^{\eps,\del}$ is given by
\begin{align}
S^{\eps,\del}
		=&		- \sqrt{\eps} \, \V P_{0,0} - \sqrt{\del} \, \< \M_1 \> P_{0,0} 
					- \sqrt{\eps \,\del} \, \( \V \, P_{0,1} + \frac{1}{\sigb'} \C \, \d_z P_{0,0} + \< \M_1 \> P_{1,0}\) \\ &
					- \eps \, \Big( \A \, P_{0,0} + \V \, P_{1,0} \Big) - \del \, \Big( \< \M_1 \> P_{0,1} + \M_2 P_{0,0}\Big)\\
		=&		- (\sqrt{\eps} \, \V) P_{0,0} - (\sqrt{\del} \, \< \M_1 \>) P_{0,0} 
					- (\sqrt{\eps}\,\V)(\sqrt{\del}\,P_{0,1}) - (\sqrt{\eps\del} \,\C) \, \frac{1}{\sigb'}\d_z P_{0,0} - (\sqrt{\del}\< \M_1 \> )(\sqrt{\eps}\,P_{1,0}) \\ &
					- (\eps \A)  P_{0,0} - (\sqrt{\eps}\,\V) (\sqrt{\eps}\,P_{1,0})  - (\sqrt{\del}\< \M_1 \>) (\sqrt{\del}\,P_{0,1}) - (\del \M_2) P_{0,0}.
\end{align}
To extract which group parameters are needed for the price expansion, we
absorb a half-integer power of $\eps$ and/or $\del$ into the corresponding group parameters and define:
\begin{align}
V_i^\eps			&:=		\sqrt{\eps} \, V_i, &
V_i^\del			&:=		\sqrt{\del} \, V_i, &
A_i^\eps								&:= 	\eps \, A_i, &
B_i^\del								&:= 	\del \, B_i, &
C_i^{\eps,\del}	&:=		\sqrt{\eps\del} \, C_i, \label{sizing}
\end{align}
{where the $V_i$ were defined in \eqref{V0V1def} and \eqref{V2V3def}, and the $A_i$, $B_i$ and $C_i$ in \eqref{eq:Adef}, \eqref{Bdefs} and \eqref{Cdefs} respectively. }
Similarly, we absorb the appropriate $\eps$ or $\delta$ pre-multiplier into the terms of the expansion
\eqref{expans} by defining $P_{1,0}^\eps$ and $P_{0,1}^\delta$ through
$$  \sqrt{\eps} \, P_{1,0}(t,x,z)=P_{1,0}^\eps(t,x;\sigb(z),V_2^\eps(z),V_3^\eps(z)) , \qquad  \sqrt{\delta}P_{0,1}(t,x,z)=P_{0,1}^\delta(t,x;\sigb(z),V_0^\del(z),V_1^\del(z)). $$
Substituting from \eqref{eq:operators} the expressions for $\M_2,\V,\A,\<\M_1\>$ and $\C$, and changing the $\d_z$ derivatives in $\<\M_1\>$ and $\M_2$ acting on $P_{0,0}$ into $\d_\sig$ derivatives acting on $P_{BS}(\sigb(z))$, we finally have
\begin{align}					
S^{\eps,\del}=	&- \(V_3^\eps \D_1 \D_2 + V_2^\eps \D_2 \) P_{BS} - 2\(V_1^\del \D_1 + V_0^\del\) \d_\sig P_{BS} \\ 
			&- \(V_3^\eps \D_1 \D_2 + V_2^\eps \D_2 \)  \, P_{0,1}^\del
			- \( C_2^{\eps,\del}\D_1^2 + C_1^{\eps,\del}\D_1 + C_0^{\eps,\del}+ C^{\eps,\del}\D_2 \)	\d_\sig P_{BS} \\ 
			&- 2\(V_1^\del \D_1 + V_0^\del\)\(\d_\sig + \frac{{V_3'}^{\eps}}{\sigb'} \d_{V_3^\eps} 
							+ \frac{{V_2'}^{ \eps}}{\sigb'} \d_{V_2^\eps} \) P_{1,0}^\eps \\ 
			&- \Big( A_2^\eps \D_1^2 \D_2 + A_1^\eps \D_1 \D_2 + A_0^\eps \D_2 + A^\eps \D_2^2 \Big) \, P_{BS}
			- \(V_3^\eps \D_1 \D_2 + V_2^\eps \D_2 \) P_{1,0}^\eps \\ 
			&- 2\(V_1^\del \D_1 + V_0^\del\) \(\d_\sig + \frac{{V_1'}^{\del}}{\sigb'} \d_{V_1^\del}
							+ \frac{{V_0'}^{\del}}{\sigb'} \d_{V_0^\del} \) P_{0,1}^\del - 
							\( B_2^\del \d_{\sig\sig}^2 + B_1^\del \d_\sig \) P_{BS} .
\end{align}
Here our notation is ${V_i'}^\eps(z)=\d_zV_i^\eps(z)$, and similarly ${V_i'}^{\del}$.
Since $P_{1,0}$ is linear in $V_3$ and $V_2$ and $P_{0,1}$ is linear in $V_1$ and $V_0$, neither $\d_{V_3^\eps} P_{1,0}$, $\d_{V_2^\eps} P_{1,0}$, $\d_{V_1^\del} P_{0,1}$ nor $\d_{V_0^\del} P_{0,1}$ contain any of the $V_i$'s (that is, they are order one quantities).  

As such, the {\it group parameters} that appear in the source term $S^{\eps,\del}$ and therefore, in the price approximation \eqref{eq:Phat} are
\begin{align}
V_3^\eps, V_2^\eps, V_1^\del, V_0^\del, C_2^{\eps,\del}, C_1^{\eps,\del}, C_0^{\eps,\del}, C^{\eps,\del}, A_2^\eps, A_1^\eps, A_0^\eps, A^\eps, B_2^\del, B_1^\del, \frac{{V_3'}^\eps}{\sigb'}, \frac{{V_2'}^\eps}{\sigb'}, \frac{{V_1'}^\del}{\sigb'}, \frac{{V_0'}^\del}{\sigb'} .
\label{eq:unobservables}
\end{align}
These 18 parameters, which move with the slow volatility factor $Z_t$, as well as $\phi^\eps(y,z) := \eps \, \phi(y,z)$ needed in \eqref{eq:Phat}, can be obtained by calibrating the class of multiscale stochastic volatility models to the implied volatility surface of (liquid) European options, as described the Section \ref{sec:calibration}. Note from \eqref{sizing} that the $V_i^\eps$  are order $\sqrt{\eps}$, the $V_i^\del$ order $\sqrt{\del}$ and that they appeared in the first order asymptotic theory in \cite{fouque2004multiscale}. The new parameters $(A_i^\eps,B_i^\del,C_i^{\eps,\del})$ come from the order $\eps$, order $\del$ and order $\sqrt{\eps\del}$ terms in the the second order expansion respectively.  

\subsubsection{Parameter Reduction}
\label{sec:parameter reduction}
The group parameters in \eqref{eq:unobservables}
 depend on the current
level $z$ of the slow volatility factor and, in the case of $\phi^\eps$, on the fast factor too. 
In order to calibrate completely from the implied volatility surface and not use historical returns data to estimate $\sigb(z)$, we replace it by a quantity $\sig^*(z)$ which absorbs the term $V_2^\eps(z)$. In so doing, there is now one less parameter (listed explicitly for calibration purposes in \eqref{parameters*}), and we show {in Appendix \ref{sec:reduction}} that the accuracy of the second order approximation is unchanged.

{We define
\begin{align}
\sig^*(z) := \sqrt{\sigb(z)^2 + 2  V_2^\eps(z)} ,  \label{eq:replacement}
\end{align}
and $P_{i,j}^*$ as the solutions to 
\begin{align}
\left. \begin{aligned}
\O(1):&&
\< \L_2^* \> P_{0,0}^*
		&=		0 , &
P_{0,0}^*(T,x,z)
		&=		h(x) , \\
\O(\sqrt{\eps}):&&
\< \L_2^* \> P_{1,0}^*
		&=		- \V^* P_{0,0}^* , &
P_{1,0}^*(T,x,z)
		&=		0 ,\\
\O(\sqrt{\del}):&&
\< \L_2^* \> P_{0,1}^*
		&=		- \< \M_1 \> P_{0,0}^* , &
P_{0,1}^*(T,x,z)
		&=		0 , \\
\O(\eps):&&
P_{2,0}^*
		&=		- \frac{1}{2} \, \phi \D_2 P_{0,0}^* + F_{2,0}^* ,
\\ &&
\< \L_2^* \> F_{2,0}^*
		&=		- \A \, P_{0,0}^* - \V^* \, P_{1,0}^* , &
F_{2,0}^*(T,x,z)
		&=		0 , \\
\O(\del):&&
\< \L_2^* \> P_{0,2}^*
		&=		- \< \M_1 \> P_{0,1}^* - \M_2 P_{0,0}^* , &
P_{0,2}^*(T,x,z)
		&=		0 , \\
\O(\sqrt{\eps \, \del}):&&
\< \L_2 \> P_{1,1}^*
		&=		- \V^* \, P_{0,1}^* - \frac{1}{\sigb'} \C \, \d_z P_{0,0}^* - \< \M_1 \> P_{1,0}^* , &
P_{1,1}^*(T,x,z)
		&=		0 ,
\end{aligned} \right\} \label{eq:key2}
\end{align}
where
\begin{align}
\< \L_2^* \>
		&:=		\< \L_2 \> + \sqrt{\eps} V_2 \D_2 , &
\V^*
		&:=		\V - V_2 \D_2 .
\end{align}
These correspond to the PDEs and terminal conditions in \eqref{eq:key} of the asymptotic approximation to second order with $\sigb(z)$ replaced by $\sig^*(z)$, and the terms containing $V_2$ removed. Their solutions are exactly as in Proposition \ref{thm:prices} with $\sig^*(z)$ in place of $\sigb(z)$ and both $V_2$ and $V_2'$ set to zero.}

\begin{proposition}[Parameter Reduction]
\label{thm:reduction}
For payoff functions $h$ as described in Assumption \ref{itm:h}, the price approximation 
\begin{align}
P^{*,\eps,\del}:=P_{0,0}^* + \sqrt{\eps} \, P_{1,0}^*  + \sqrt{\del} \, P_{0,1}^*  + \eps \, P_{2,0}^*  + \del \, P_{0,2}^* + \sqrt{\eps\,\del} P_{1,1} ^* , \label{Pstardef}
\end{align}
has the same accuracy as obtained in Theorems \ref{thm:accuracy}:
\begin{align}
|P^{\eps,\del}(t,x,y,z)-P^{*,\eps,\del}(t,x,y,z)|&=  \O(\eps^{3/2-}+\eps \sqrt{\del}+ \del\sqrt{\eps} + \delta^{3/2}) . 
\end{align}
\end{proposition}
\begin{proof}
The proof is given in Appendix \ref{sec:reduction}.
\end{proof}


%
%

%
%

\section{Asymptotics for Implied Volatilities and Calibration}\label{sec:volatilities} 
It is common practice to quote option prices in units of implied volatility, by inverting the Black-Scholes formula for European call options with respect to the volatility parameter. This does not imply that the Black-Scholes assumptions of constant volatility are adopted, it is merely a convenient change of unit through which to view the departure of market data from the Black-Scholes theory, and to assess improvements due to multiscale stochastic volatility as we use here. In what follows, we translate the second order expansion of options prices found in the previous section, to a corresponding expansion in implied volatility units.

\subsection{Implied Volatility Expansion}
\label{sec:imp.vol}
{We seek an implied volatility expansion of the form}
\begin{align}
I^{\eps,\del}
		&=		\sum_{j \geq 0} \sum_{i \geq 0} \sqrt{\eps}^{\,i} \sqrt{\del}^{\,j} I_{i,j} 
		&\text{such that} &&
\Ped
		&=		P_{BS}\( I^{\eps,\del} \) .
\end{align}
Performing a Taylor expansion of $P_{BS}(I^{\eps,\del})$ about  $I_{0,0}$ and rearranging terms yields
\begin{align}
	P_{0,0} + \sqrt{\eps} \, P_{1,0} + \sqrt{\del} \, P_{0,1} &+ \sqrt{\eps \,\del} \, P_{1,1} + \eps \, P_{2,0}
									+ \del \, P_{0,2} + \cdots \\ 
		&=		P_{BS}(I_{0,0} + \sqrt{\eps} \, I_{1,0} + \sqrt{\del} \, I_{0,1} + \sqrt{\eps \,\del} \, I_{1,1} + \eps \, I_{2,0} + \del \, I_{0,2} + \cdots ) \\ 
		&=		P_{BS}(I_{0,0}) + \sqrt{\eps} \, I_{1,0} \d_\sig P_{BS}(I_{0,0}) + \sqrt{\del} \, I_{0,1} \d_\sig P_{BS}(I_{0,0}) \\ 
		&		\qquad + \sqrt{\eps \, \del} \, \Big( I_{1,0} I_{0,1} \d_{\sig\sig}^2 P_{BS}(I_{0,0}) 
											+ I_{1,1} \d_\sig P_{BS}(I_{0,0}) \Big) \\
				&\qquad + \eps \( \frac{1}{2} I_{1,0}^2 \d_{\sig\sig}^2 P_{BS}(I_{0,0}) + I_{2,0} \d_\sig P_{BS}(I_{0,0}) \) \\
				&\qquad+ \del \( \frac{1}{2} I_{0,1}^2 \d_{\sig\sig}^2 P_{BS}(I_{0,0}) + I_{0,2} \d_\sig P_{BS}(I_{0,0}) \) + \cdots.
				\label{eq:Ped=Pbs}
\end{align}
Equating terms in \eqref{eq:Ped=Pbs} of like powers in the parameters $\eps$ and $\del$, and using $P_{0,0}=P_{BS}(\sigb)$, we find
\begin{align}
\left. \begin{aligned}
\O(1):&&
I_{0,0}
		&=		\sigb , &
\O(\eps):&&
I_{2,0}
		&=		\frac{P_{2,0}}{\d_\sig P_{0,0}} - \frac{1}{2} I_{1,0}^2 \frac{\d_{\sig\sig}^2 P_{0,0}}{\d_\sig P_{0,0}} , \\		
\O(\sqrt{\eps}):&&
I_{1,0}
		&=		\frac{P_{1,0}}{\d_\sig P_{0,0}} , &
\O(\del):&&
I_{0,2}
		&=		\frac{P_{0,2}}{\d_\sig P_{0,0}} - \frac{1}{2} I_{0,1}^2 \frac{\d_{\sig\sig}^2 P_{0,0}}{\d_\sig P_{0,0}} , \\		
\O(\sqrt{\del}):&&
I_{0,1}
		&=		\frac{P_{0,1}}{\d_\sig P_{0,0}} , &
\O(\sqrt{\eps\del}):&&
I_{1,1}
		&=		\frac{P_{1,1}}{\d_\sig P_{0,0}} - I_{1,0} I_{0,1} \frac{\d_{\sig\sig}^2 P_{0,0}}{\d_\sig P_{0,0}} .
\end{aligned} \right\} \label{eq:Iij}
\end{align}
For a European call or put option with strike price $K$ and time to maturity $\tau$ it is convenient to express  the $I_{i,j}$'s as  functions of {forward log-moneyness}
\begin{align}
d
		&:=		\log \( K / x e^{r \tau} \) &
\text{(forward log-moneyness).}
\end{align}
Setting the payoff function $h(x) = (x - K)^{+}$ for a call option  and using the expressions given for $\{ P_{i,j} \}$ in Theorem \ref{thm:prices},  the $I_{i,j}$'s in \eqref{eq:Iij} become
\begin{flalign}
\O(1):&&
I_{0,0}
		&=		\sigb , \label{eq:Icall}\\
\O(\sqrt{\eps}):&&
I_{1,0}
		&=		V_2 \frac{1}{\sigb}
					+ V_3 \( \frac{1}{2 \, \sigb} + \frac{d}{\tau \, \sigb^3} \) , \\
\O(\sqrt{\del}):&&
I_{0,1}
		&=		V_0 \, \tau
					+ V_1 \( \frac{\tau}{2} + \frac{d}{\sigb^2} \) , \\
\O(\eps):&&
I_{2,0}
		&=		\frac{-\phi}{2 \, \tau \, \,\sigb}
					+ V_2^2 \( -\frac{1}{2 \,\sigb^3} \)
					+ V_2 V_3 \( -\frac{3 d}{\tau \,\sigb^5}-\frac{1}{2 \,\sigb^3} \) \\ && &\qquad
					+	V_3^2 \( -\frac{3 d^2}{\tau^2 \,\sigb^7}+\frac{3}{2 \tau \,\sigb^5}-\frac{3 d}{2 \tau \,\sigb^5} \) \\ && &\qquad
					+ A \( \frac{d^2}{\tau^2 \,\sigb^5}-\frac{1}{\tau \,\sigb^3}-\frac{1}{4 \,\sigb} \)
					+ A_0 \( \frac{1}{\sigb} \) 
					+ A_1 \( \frac{d}{\tau \,\sigb^3}+\frac{1}{2 \,\sigb} \) \\ && &\qquad
					+ A_2 \( \frac{d^2}{\tau^2 \,\sigb^5}-\frac{1}{\tau \,\sigb^3}+\frac{d}{\tau \,\sigb^3}+\frac{1}{4 \,\sigb} \) , \\
\O(\del):&&
I_{0,2}
		&=		V_0^2 \( \frac{\tau^2}{6 \,\sigb} \)
					+ V_0 V_1 \( - \frac{5 \,d \,\tau}{3 \,\sigb^3}+\frac{\tau^2}{6 \,\sigb} \)
					+ V_1^2  \( -\frac{7 \, d^2}{3 \,\sigb^5}+\frac{5 \tau}{6 \,\sigb^3}- \frac{5 \, d \,\tau}{6 \,\sigb^3}
							+\frac{\tau^2}{6 \,\sigb} \) \\&& &\qquad
					+ V_0 \frac{V_0'}{\sigb'} \( \frac{2 \tau^2}{3} \)
					+ V_0 \frac{V_1'}{\sigb'} \( \frac{\tau^2}{3}+\frac{2 \, d\, \tau}{3 \,\sigb^2} \)
					+ V_1 \frac{V_0'}{\sigb'} \( \frac{\tau^2}{3}+\frac{2 \, d \, \tau}{3 \,\sigb^2} \) \\&& &\qquad
					+ V_1 \frac{V_1'}{\sigb'} \( \frac{\tau^2}{6}+\frac{2 \, d^2}{3 \,\sigb^4}-\frac{2 \tau}{3 \,\sigb^2}
							+\frac{2 d\,\tau}{3 \,\sigb^2} \)
					+ B_2 \( \frac{d^2}{3 \,\sigb^3}+\frac{\tau}{6 \,\sigb}-\frac{\tau^2 \,\sigb}{12} \)
					+ B_1 \( \frac{\tau}{2} \) , \\
\O(\sqrt{\eps\,\del}):&&
I_{1,1}
		&=		V_0 V_2 \( -\frac{\tau}{\,\sigb^2} \)
					+ V_0 V_3 \( - \frac{3 \, d}{\,\sigb^4}-\frac{\tau}{2 \,\sigb^2} \) \\ && &\qquad
					+ V_1 V_2 \( - \frac{3 \, d}{\,\sigb^4}-\frac{\tau}{2 \,\sigb^2} \)
					+ V_1 V_3 \( -\frac{6 \, d^2}{\tau \,\sigb^6}+\frac{3}{\,\sigb^4}-\frac{3 \, d}{\sigb^4} \) \\ && &\qquad
					+ V_0 \frac{V_2'}{\sigb'} \( \frac{\tau}{\,\sigb} \)
					+ V_0 \frac{V_3'}{\sigb'} \( \frac{d}{\,\sigb^3}+\frac{\tau}{2 \,\sigb} \)
					+ V_1 \frac{V_2'}{\sigb'} \( \frac{d}{\,\sigb^3}+\frac{\tau}{2 \,\sigb} \) \\ && &\qquad
					+ V_1 \frac{V_3'}{\sigb'} \( \frac{d^2}{\tau \,\sigb^5}-\frac{1}{\,\sigb^3}+\frac{d}{\sigb^3}+\frac{\tau}{4 \,\sigb} \)
					+ C_2 \( \frac{\tau}{8}+\frac{d^2}{2 \tau \,\sigb^4}-\frac{1}{2 \,\sigb^2}+\frac{d}{2 \,\sigb^2} \) \\ && &\qquad
					+ C_1 \( \frac{\tau}{4}+\frac{d}{2 \,\sigb^2} \)
					+ C_0 \( \frac{\tau}{2} \)
					+ C \(  -\frac{\tau}{8}+\frac{d^2}{2 \tau \,\sigb^4}-\frac{1}{2 \,\sigb^2}\) .
\end{flalign}

Observe that this second order expansion produces an implied volatility curve which is {\it quadratic} in log-moneyness $d$ and therefore accounts for the slight turn in the skew that is most prominent in shorter maturity options data, as we will see in Figure \ref{fig:ImpVolFit2006}. The first order approximation derived in \cite{fouque2004multiscale} is linear in $d$ and therefore only accounted for the skew effect.
Note also that the parameter reduction outlined in {Section \ref{sec:parameter reduction}} can be applied to this implied volatility expansion as well ($\sigb$ replaced by $\sig^*$ and $V_2$-terms removed), and this will be used in the calibration in the next section.
We also remark that  the formal second order expansion for  the case of a
\emph{single slow} volatility factor had previously been
considered in \cite{FLT}, \cite{roger} and \cite{smile},  for instance.

%
%

\subsection{Calibration} \label{sec:calibration}
In this section we discuss how the  parameters \eqref{eq:unobservables}, 
can be obtained by calibrating the multiscale class of models to liquid European options data.
We define
\begin{align}
\It^{\eps,\del}
		&:=		I_{0,0} + \sqrt{\eps} \, I_{1,0} + \sqrt{\del} \, I_{0,1} + \sqrt{\eps \,\del} \, I_{1,1} + \eps \, I_{2,0}
									+ \del \, I_{0,2} . \label{eq:Ihat2}
\end{align}
Using \eqref{eq:Icall} and the parameter reduction described in Proposition \ref{thm:reduction}, we have
\begin{align}
\It^{\eps,\del}
		&=		\( \frac{1}{\tau} \, k + l + \tau \, m + \tau^2 \, n \) 
					+ \frac{d}{\tau} \( p + \tau \, q + \tau^2 \, s \) 
					+ \frac{d^2}{\tau^2} \( u + \tau \, v + \tau^2 \, w \) , \label{eq:Imodel2}
\end{align}
where 
\begin{align}
\O(1/\tau):&& k &=
		\frac{3 (V_3^\eps)^2}{2 (\sigma^*)^5}-\frac{A_2^\eps}{(\sigma^*)^3}-\frac{A^\eps}{(\sigma^*)^3}-\frac{\phi^\eps}{2 {\sigma^*}} , \label{eq:k2w2}\\
\O(1):&& l &=
		\frac{3 V_1^\del V_3^\eps}{(\sigma^*)^4}
		-\frac{C_2^{\eps,\del}}{2 (\sigma^*)^2}-\frac{C^{\eps,\del}}{2 (\sigma^*)^2} \\ && &\qquad
		+\frac{A_0^\eps}{{\sigma^*}}+\frac{A_1^\eps}{2 {\sigma^*}}+\frac{A_2^\eps}{4 {\sigma^*}}-\frac{A^\eps}{4 {\sigma^*}}-\frac{V_1^\del {V_3'}^\eps}{(\sigma^*)^3 {\sigma^*}'}
		+{\sigma^*} + \frac{V_3^\eps}{2{\sigma^*}} , \\
\O(\tau):&& m &=
		\frac{B_1^\del}{2}+\frac{C_0^{\eps,\del}}{2}+\frac{C_1^{\eps,\del}}{4}+\frac{C_2^{\eps,\del}}{8}-\frac{C^{\eps,\del}}{8}+\frac{5 (V_1^\del)^2}{6 (\sigma^*)^3} \\ && &\qquad
		-\frac{V_0^\del V_3^\eps}{2 (\sigma^*)^2}
		+\frac{B_2^\del}{6 {\sigma^*}}-\frac{2 V_1^\del {V_1'}^\del}{3 (\sigma^*)^2 {\sigma^*}'} 
		+\frac{V_0^\del {V_3'}^\eps}{2 {\sigma^*} {\sigma^*}'}
		+\frac{V_1^\del {V_3'}^\eps}{4 {\sigma^*} {\sigma^*}'} + V_0^\del + \frac{V_1^\del}{2} , \\
\O(\tau^2):&& n &=
		\frac{(V_0^\del)^2}{6 {\sigma^*}}+\frac{V_0^\del V_1^\del}{6 {\sigma^*}}+\frac{(V_1^\del)^2}{6 {\sigma^*}}-\frac{B_2^\del {\sigma^*}}{12}+\frac{2 V_0^\del {V_0'}^\del}{3 {\sigma^*}'} \\ && &\qquad
		+\frac{{V_0'}^\del V_1^\del}{3 {\sigma^*}'}+\frac{V_0^\del {V_1'}^\del}{3 {\sigma^*}'}+\frac{V_1^\del {V_1'}^\del}{6 {\sigma^*}'} , \\
\O(d/\tau):&& p &=
		- \frac{3 (V_3^\eps)^2}{2 (\sigma^*)^5} + \frac{A_1^\eps}{(\sigma^*)^3} + \frac{A_2^\eps}{(\sigma^*)^3} 
		+ \frac{V_3^\eps}{(\sigma^*)^3}, \\
\O(d):&& q &=
		-\frac{3 V_0^\del V_3^\eps}{(\sigma^*)^4}-\frac{3 V_1^\del V_3^\eps}{(\sigma^*)^4}
		+\frac{C_1^{\eps,\del}}{2 (\sigma^*)^2}+\frac{C_2^{\eps,\del}}{2 (\sigma^*)^2} \\ && &\qquad
		+\frac{V_0^\del {V_3'}^\eps}{(\sigma^*)^3 {\sigma^*}'}
		+\frac{V_1^\del {V_3'}^\eps}{(\sigma^*)^3 {\sigma^*}'}
		+ \frac{V_1^\del}{(\sigma^*)^2}, \\
\O(d \, \tau):&& s &=
		-\frac{5 V_0^\del V_1^\del}{3 (\sigma^*)^3}-\frac{5 (V_1^\del)^2}{6 (\sigma^*)^3}+\frac{2 {V_0'}^\del V_1^\del}{3 (\sigma^*)^2 {\sigma^*}'}+\frac{2 V_0^\del {V_1'}^\del}{3 (\sigma^*)^2 {\sigma^*}'}
		+\frac{2 V_1^\del {V_1'}^\del}{3 (\sigma^*)^2 {\sigma^*}'} , \\
\O(d^2/\tau^2):&& u &=
		-\frac{3 (V_3^\eps)^2}{(\sigma^*)^7}+\frac{A_2^\eps}{(\sigma^*)^5}+\frac{A^\eps}{(\sigma^*)^5}, \\
\O(d^2/\tau):&& v &=
		-\frac{6 V_1^\del V_3^\eps}{(\sigma^*)^6}+\frac{C_2^{\eps,\del}}{2 (\sigma^*)^4}+\frac{C^{\eps,\del}}{2 (\sigma^*)^4}+\frac{V_1^\del {V_3'}^\eps}{(\sigma^*)^5 {\sigma^*}'} , \\
\O(d^2):&& w &=
		-\frac{7 (V_1^\del)^2}{3 (\sigma^*)^5}+\frac{B_2^\del}{3 (\sigma^*)^3}+\frac{2 V_1^\del {V_1'}^\del}{3 (\sigma^*)^4 {\sigma^*}'} .
\end{align}
In total, we have ten ``basis functions'' with which to fit the empirically observed implied volatility surface:
\begin{align}
\left\{ \frac{1}{\tau},1,\tau,\tau^2,\frac{d}{\tau},d,d\tau,\frac{d^2}{\tau^2},\frac{d^2}{\tau},d^2 \right\}.
\end{align}
It will be  helpful to define
\begin{align}
\Theta
		&:= \{k,l,m,n,p,q,s,u,v,w \} , \\
\Phi
		&:= \{\sig^*,V_3^\eps, V_1^\del, V_0^\del, C_2^{\eps,\del}, C_1^{\eps,\del}, C_0^{\eps,\del}, C^{\eps,\del}, A_2^\eps, A_1^\eps, A_0^\eps, A^\eps, B_2^\del, B_1^\del, \frac{{V_3'}^\eps}{\sigb'}, \frac{{V_1'}^\del}{\sigb'}, \frac{{V_0'}^\del}{\sigb'},\phi^\eps \} .\label{parameters*}
\end{align}
We let $I(\tau,d)$ be the implied volatility of a European call option with time to maturity $\tau$ and forward log-moneyness $d$ as observed from option prices on the market.  We let $\Ih^{\eps,\del}(\tau,d;\Theta)$ be the implied volatility of a European call as calculated using \eqref{eq:Imodel2}.  The calibration procedure consists of the following steps:
\begin{enumerate}
\item Find $\Theta^{*}$ such that
\begin{align}
\min_{\Theta} \sum_i \sum_j \( I(\tau_i,d_j) - \Ih^{\eps,\del}(\tau_i,d_j;\Theta) \)^2
		&=		\sum_i \sum_j \( I(\tau_i,d_j) - \Ih^{\eps,\del}(\tau_i,d_j;\Theta^{*}) \)^2 ,
\end{align}
where the double sum runs over all maturities $\tau_i$ and strikes $K_j$ (corresponding to forward log-moneyness $d_j$) for which a call or put is liquidly traded. This is the least-squares fit of formula \eqref{eq:Imodel2} resulting in estimated $k,l,m,\cdots,w$.
\item Next the ten constraints of equation \eqref{eq:k2w2} are used to find the minimal $L_2$ set of parameters $\Phi^*$. That is, we find $\Phi^{*}$ such that
\begin{align}
\min_{\Phi \in \mathscr{I}} \left\| \Phi \right\|^2 
		&= 		\left\| \Phi^{*} \right\|^2 , &
\mathscr{I}
		&=		\left\{ \Phi : \text{equation \eqref{eq:k2w2} holds with $\Theta = \Theta^{*}$}\right\} .
\end{align}
\end{enumerate}
We emphasize that our calibration procedure encompasses all maturities, that is we do not fit maturity-by-maturity.
Note that the implied volatility approximation $\widetilde{I}^{\eps,\del}$, defined in \eqref{eq:Ihat2}, retains the same order of accuracy as the price approximation $\Pt^{\eps,\del}$ in the case of a non-smooth payoff.
This follows directly from smoothness of the Black-Scholes formula as a function of the volatility.


\subsection{Data\label{datasec}}
We perform the described calibration procedure on European call and put options on the S\&P500 index on two separate dates, one pre-crisis on October 19, 2006, and one post-crisis on March 18, 2010.    
In Figure \ref{fig:ImpVolFit2006} we plot the implied volatility fit from October 19, 2006.  The parameters obtained from the above calibration procedure are
\begin{align}
\sig^* &= 0.2051, &
V_3^\eps &= -0.0034, & 
V_1^\del &= 0.0023,  &
V_0^\del &= -0.0064,  &
C_2^{\eps,\del} &= -0.0073, & 
C_1^{\eps,\del} &= -0.0171, \\
C_0^{\eps,\del} &= 0.0183,  &
C^{\eps,\del} &= 0.0047,  &
A_2^\eps &= -0.0002,  &
A_1^\eps &= 0.0038,  &
A_0^\eps &= -0.0183,  &
A^\eps &= 0.0011, \\
B_2^\del &= 0.0080,  &
B_1^\del &= 0.0183,  &
\frac{{V_3'}^\eps}{\sigb'} &= 0.0146, & 
\frac{{V_1'}^\del}{\sigb'} &= -0.3104,  &
\frac{{V_0'}^\del}{\sigb'} &= 0.9856, &
\phi^\eps  &= -0.0181 .
\end{align}
In Figure \ref{fig:ImpVolFit2010} we plot the implied volatility fit from March 18, 2010.  The parameters obtained from the above calibration procedure are
\begin{align}
\sig^* &= 0.2269, &
V_3^\eps &= -0.0062, & 
V_1^\del &= -0.0026, & 
V_0^\del &= 0.0208, & 
C_2^{\eps, \del} &= -0.0031, & 
C_1^{\eps, \del} &= -.00034, \\
C_0^{\eps, \del} &= -0.0035, & 
C^{\eps, \del} &= 0.0033, & 
A_2^\eps &= 0.0034, & 
A_1^\eps &= 0.0034, & 
A_0^\eps &= -0.0004, & 
A^\eps &= -0.0012, \\
B_2^\del &= 0.0012, & 
B_1^\del &= -0.0035, & 
\frac{{V_3'}^\eps}{\sigb'} &= -0.1590, & 
\frac{{V_1'}^\del}{\sigb'} &= 0.0914, &
\frac{{V_0'}^\del}{\sigb'} &= -0.0729, &
\phi^\eps  &= -0.0443.
\end{align}
Notice that, in both cases, the obtained parameters other than $\sig^*$ are small, as expected in the regime of validity of our expansion (i.e., small $\eps$ and small $\del$). 

%
%

\begin{figure}
\centering
\begin{tabular}{ | c | c |}
\hline
\includegraphics[width=0.5\textwidth,height=0.23\textheight]{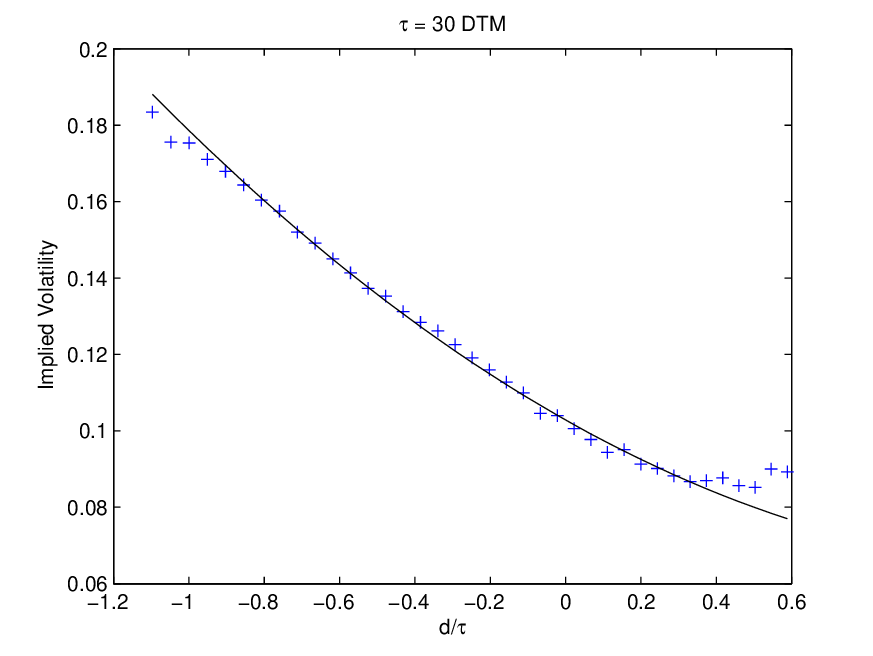} &
\includegraphics[width=0.5\textwidth,height=0.23\textheight]{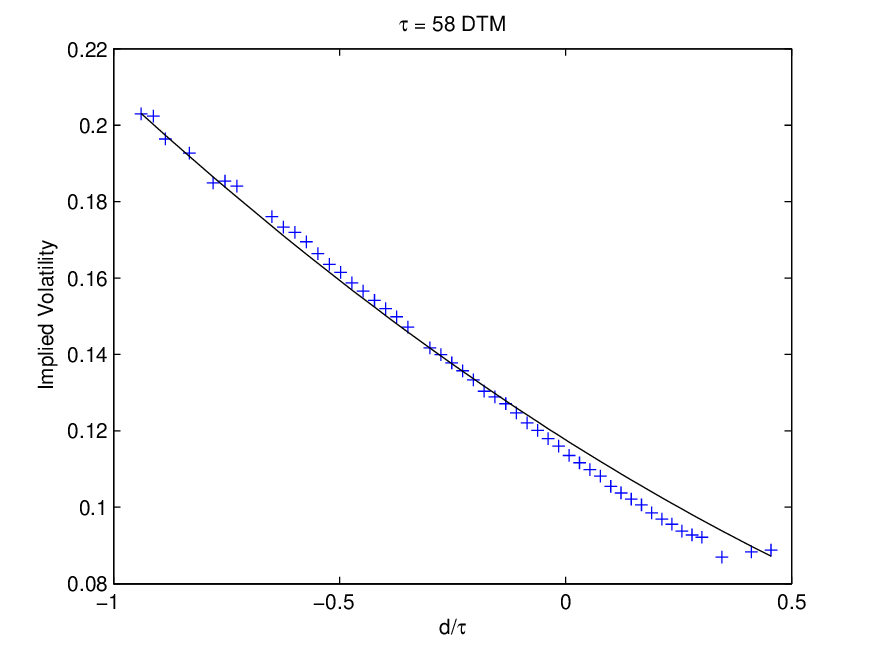} \\ \hline
\includegraphics[width=0.5\textwidth,height=0.23\textheight]{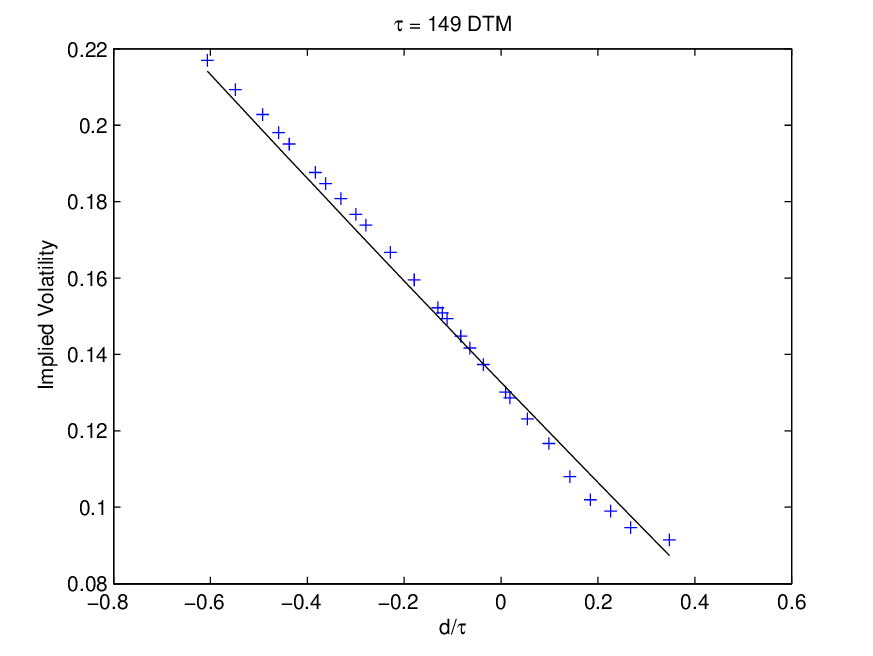} &
\includegraphics[width=0.5\textwidth,height=0.23\textheight]{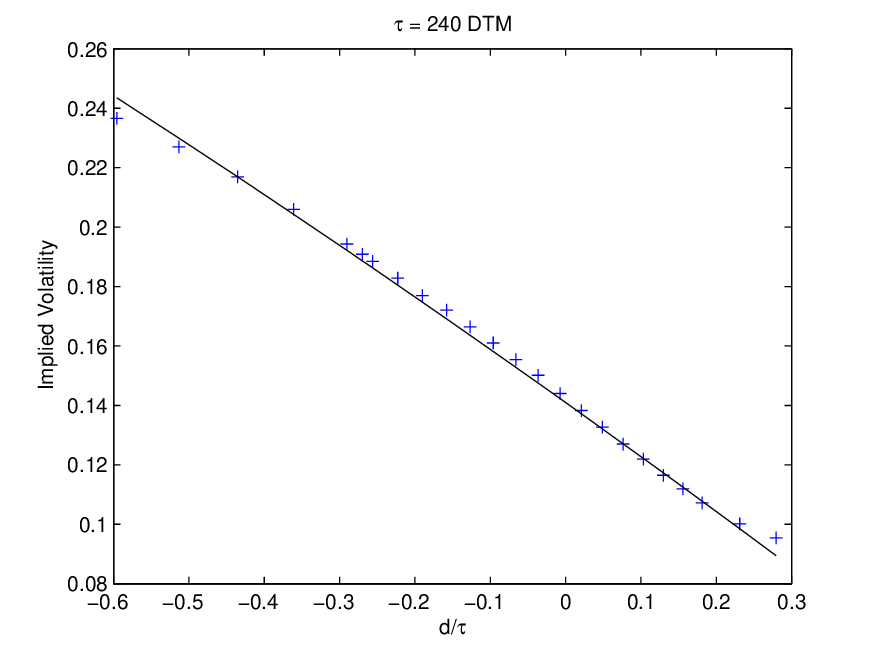} \\ \hline
\includegraphics[width=0.5\textwidth,height=0.23\textheight]{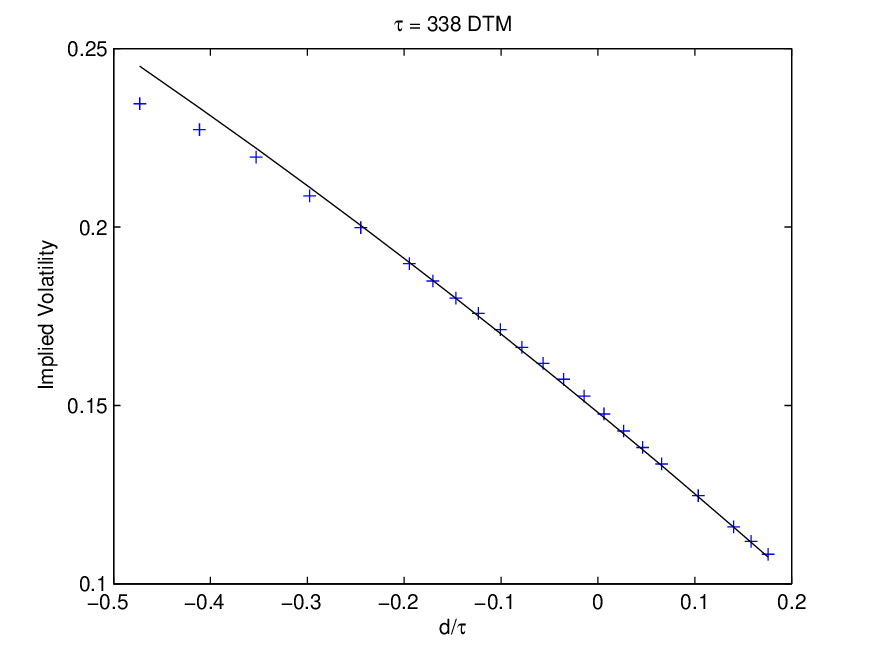} &
\includegraphics[width=0.5\textwidth,height=0.23\textheight]{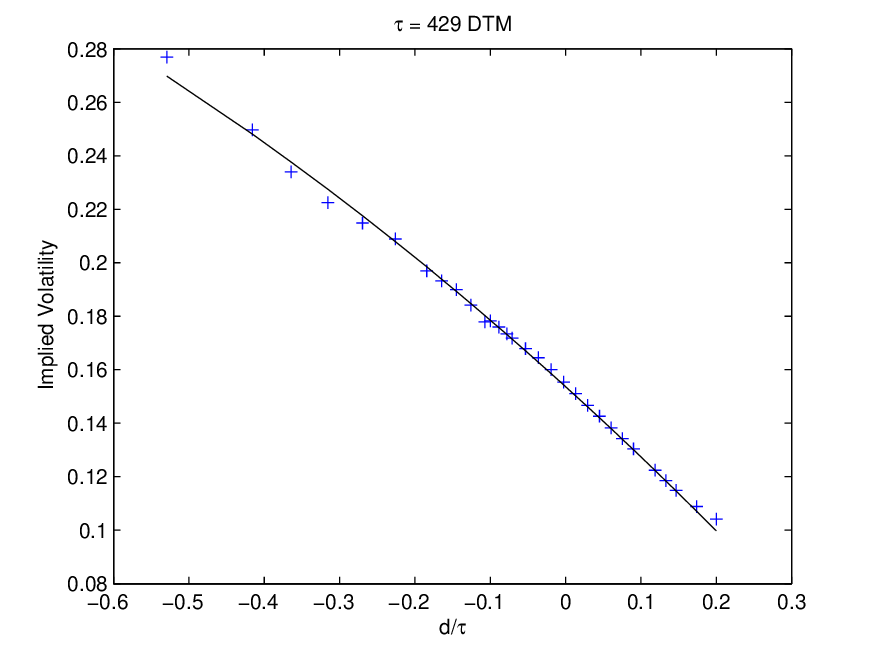} \\ \hline
\includegraphics[width=0.5\textwidth,height=0.23\textheight]{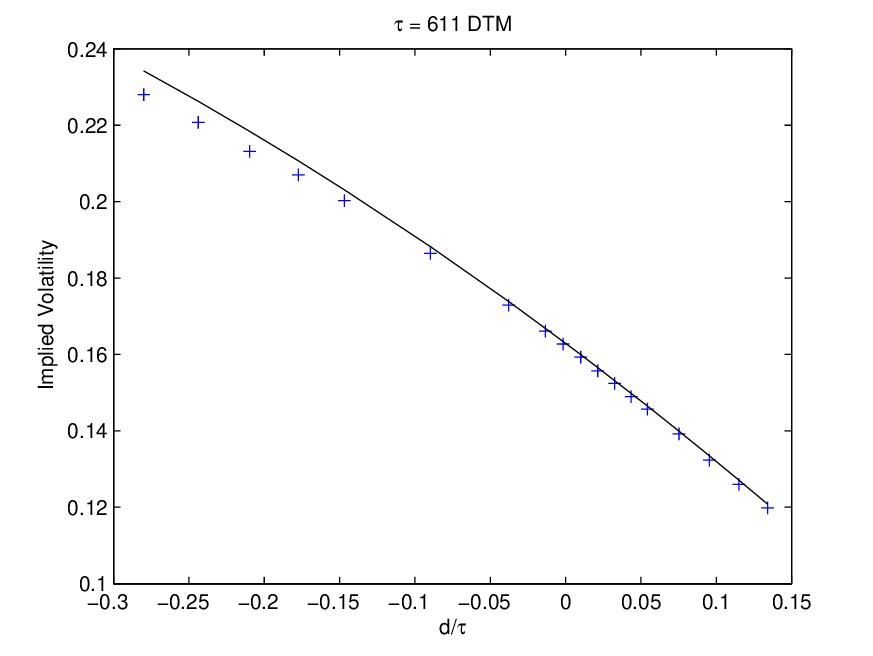} &
\includegraphics[width=0.5\textwidth,height=0.23\textheight]{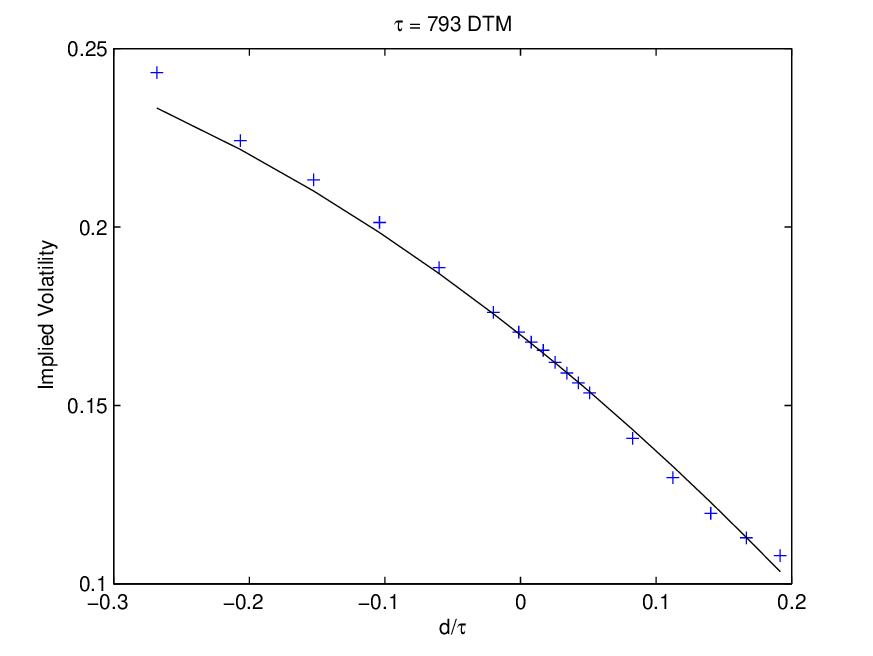} \\ \hline
\end{tabular}
\caption{\small{\emph{Implied volatility fit to S\&P 500 index options on October 19, 2006. Note that this is the result of a single calibration to all maturities and not a maturity-by-maturity calibration. Each panel shows the DTM=days to maturity.}}}
\label{fig:ImpVolFit2006}
\end{figure}

\begin{figure}
\centering
\begin{tabular}{ | c | c |}
\hline
\includegraphics[width=0.5\textwidth,height=0.23\textheight]{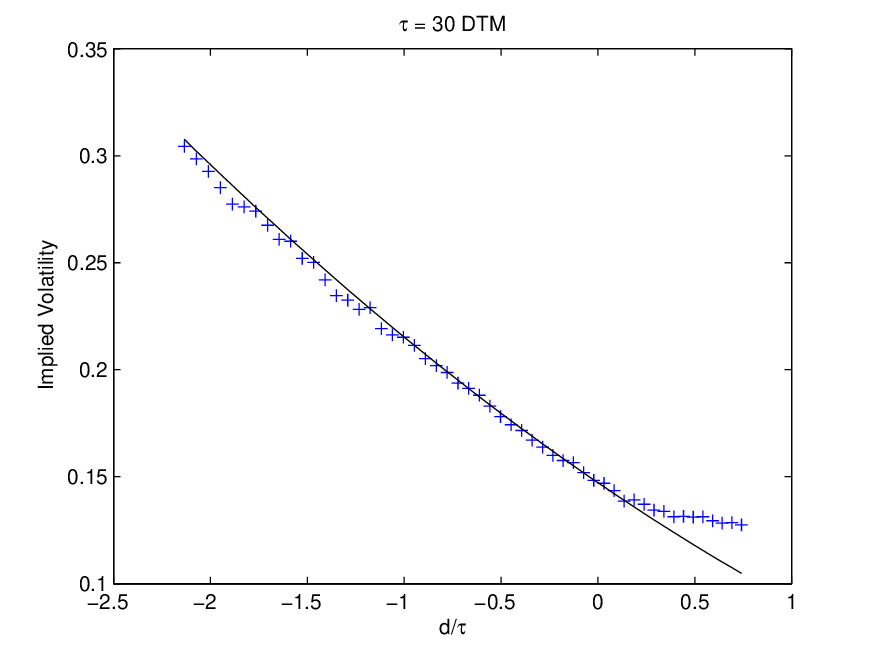} &
\includegraphics[width=0.5\textwidth,height=0.23\textheight]{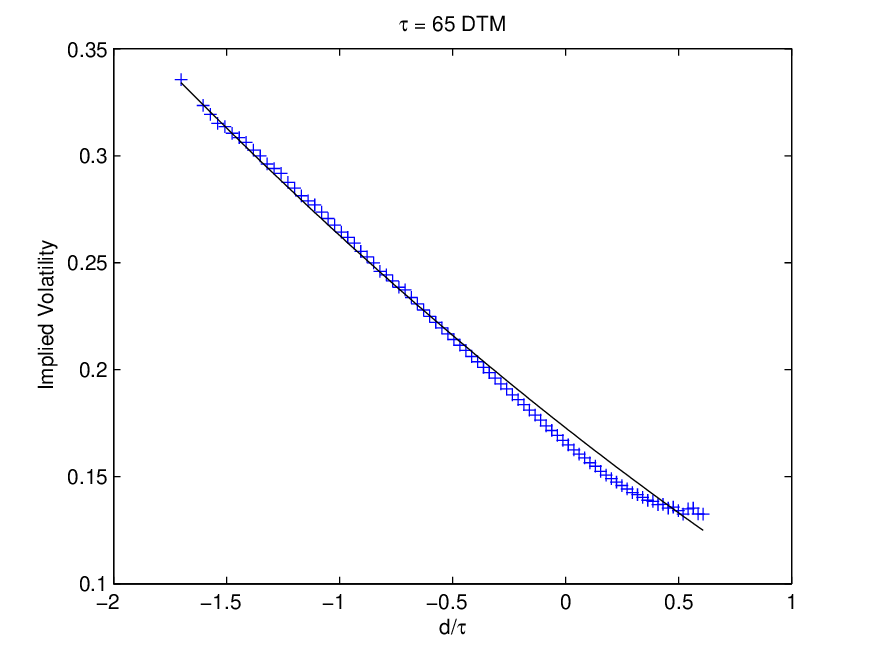} \\ \hline
\includegraphics[width=0.5\textwidth,height=0.23\textheight]{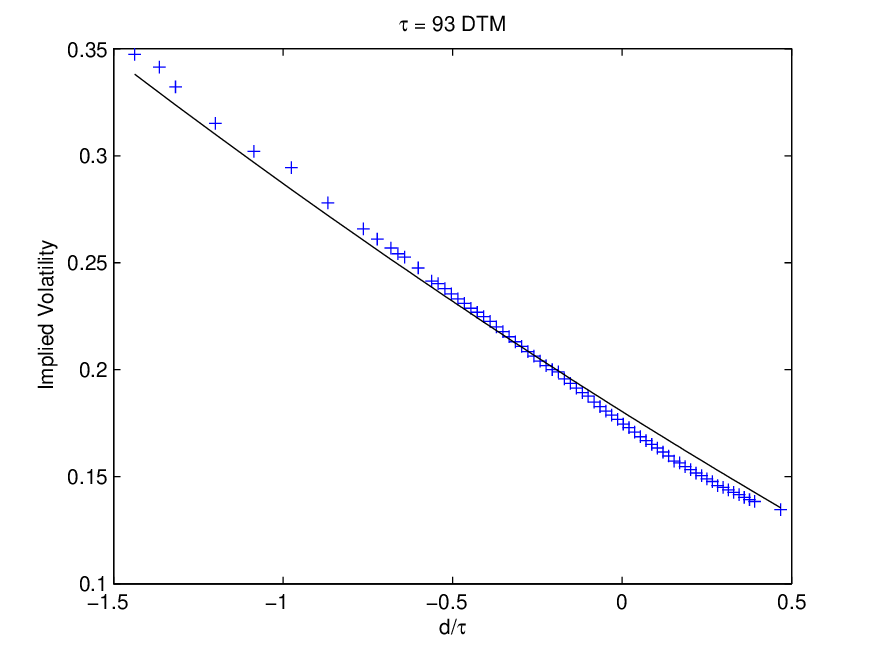} &
\includegraphics[width=0.5\textwidth,height=0.23\textheight]{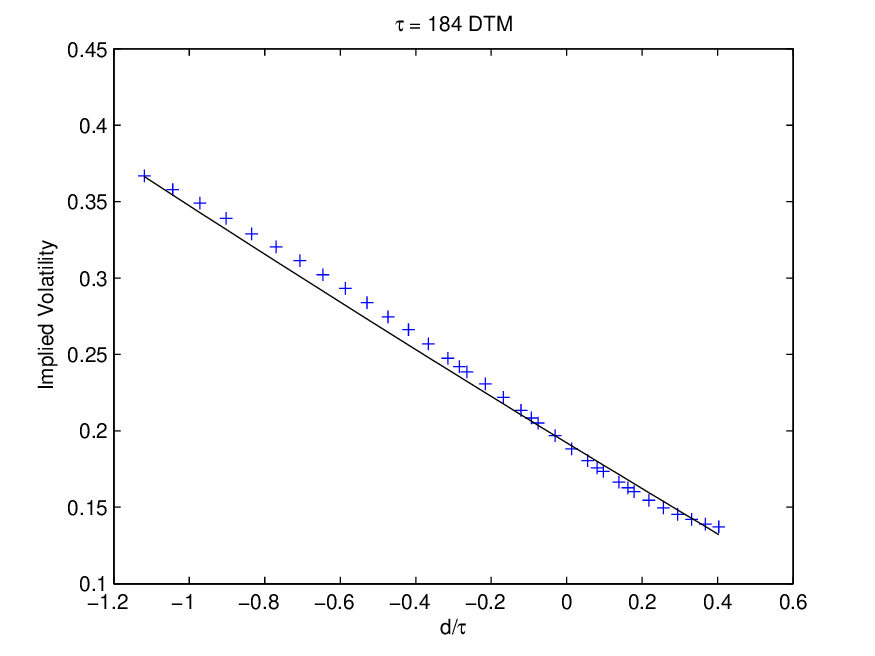} \\ \hline
\includegraphics[width=0.5\textwidth,height=0.23\textheight]{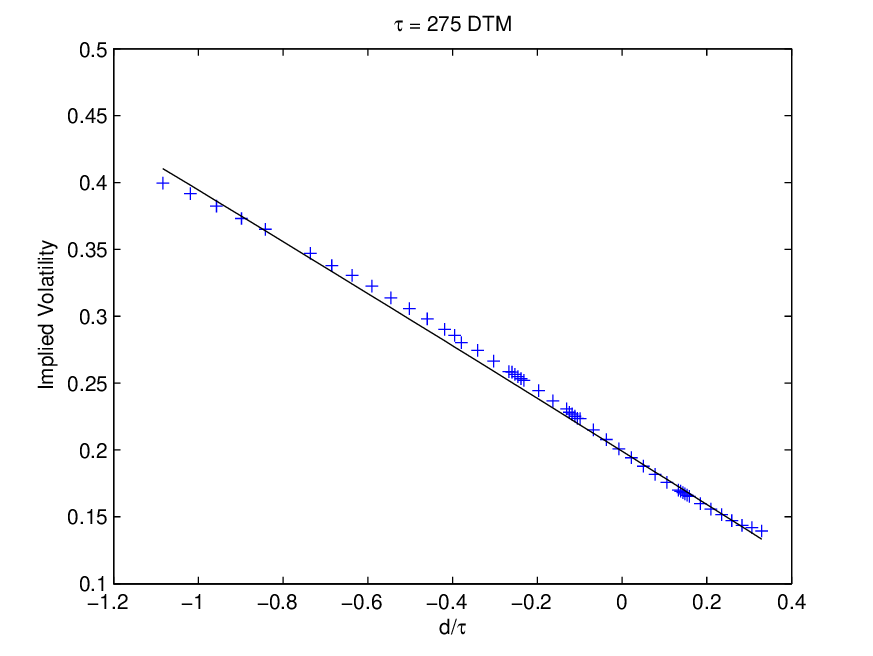} &
\includegraphics[width=0.5\textwidth,height=0.23\textheight]{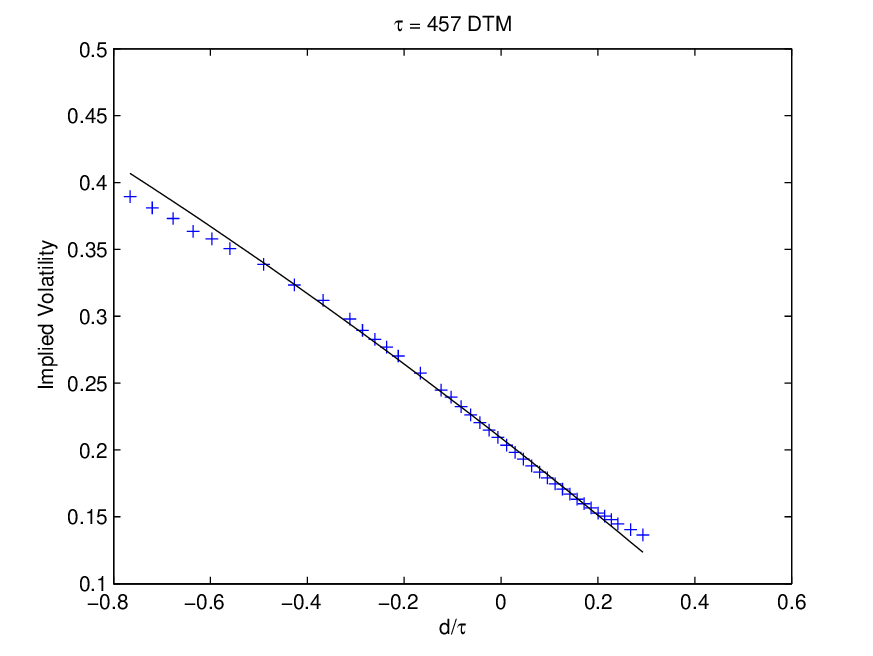} \\ \hline
\includegraphics[width=0.5\textwidth,height=0.23\textheight]{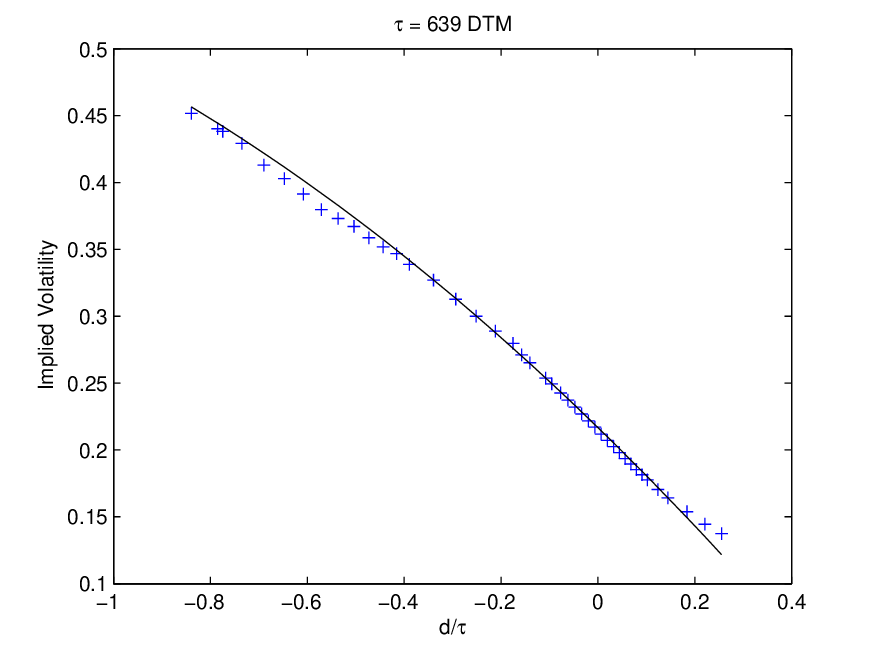} &
\includegraphics[width=0.5\textwidth,height=0.23\textheight]{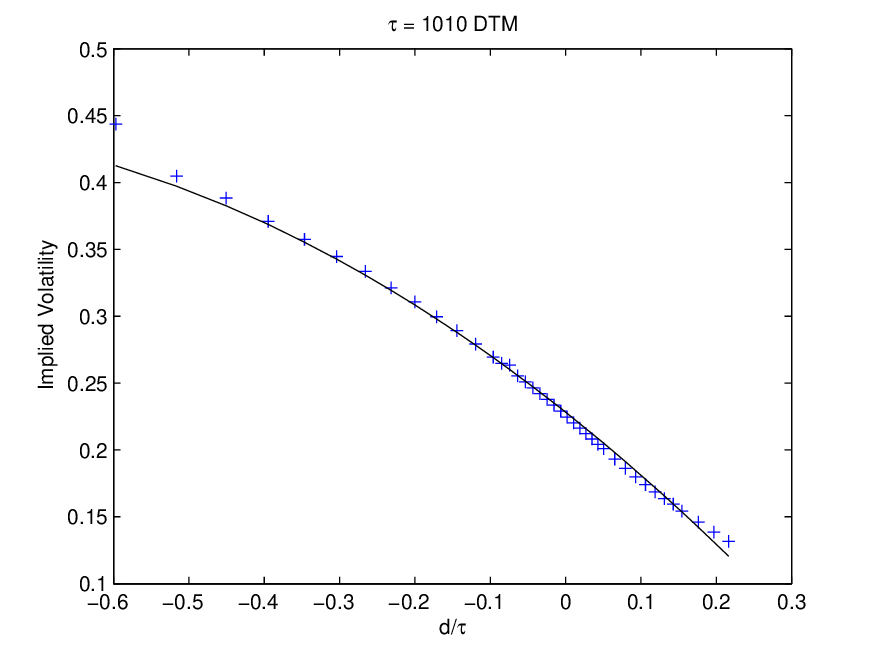} \\ \hline
\end{tabular}
\caption{\small{\emph{Implied volatility fit to S\&P 500 index options on March 18, 2010. Note that this is the result of a single calibration to all maturities and not a maturity-by-maturity calibration.}}}
\label{fig:ImpVolFit2010}
\end{figure}


%
%

\section{Concluding Remarks}\label{sec:conclusion}
We have derived a second order asymptotic approximation for European options under multiscale stochastic volatility models with fast and slow factors. Proof of convergence requires a terminal layer analysis that is developed probabilistically, in contrast to the techniques of matched asymptotic expansions that are more common in fluid mechanics. The price approximation is translated to an implied volatility surface approximation which is quadratic in log-moneyness and highly nontrivial in the term structure direction. 
We have shown that the second order approximation fits the data well across strikes and maturities (Figures \ref{fig:ImpVolFit2006} and \ref{fig:ImpVolFit2010}). Moreover, the extracted parameters are small when they should be small in the regime of the asymptotic analysis (Section \ref{datasec}).

\clearpage

%
%

\appendix


\section{Proof of Accuracy for Smooth Payoffs}\label{appendix:proof}
In this appendix, we derive the accuracy result for options with smooth payoffs $h$ 
as described in Remark \ref{item:cases} following 
Assumption \ref{itm:h}.
This is needed in order to give a meaning to the terminal value $P_{2,0}(T,x,y,z)$ studied in Section \ref{sec:P2} and justify the regularization argument for general payoffs given in Appendix \ref{sec:call}.

In what follows, we will make use of the following Lemma several times. 
\begin{lemma}\label{D1}
Let $h$ be a smooth payoff function,
that is $h$ is $C^\infty(0,\infty)$, and it and all its derivatives grow at most polynomially at $0$ and $\infty$. 
Then its Black-Scholes price $P_{BS}(\tau,x;\sigma)$ is also $C^\infty(0,\infty)$ in $x$, and  $\d_x^kP_{BS}$ ($k\geq 0$)  are also at most polynomially growing at $0$ and $+\infty$ in the current stock price $x$, and is bounded uniformly in $\tau\in[0,T]$ for fixed $x>0$.

\label{niceGreeks}
\end{lemma}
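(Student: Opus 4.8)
The plan is to represent $P_{BS}$ as an average of the payoff over the single standard Gaussian variable that appears in the lognormal transition density of Lemma \ref{thm:commute}, and then to differentiate repeatedly under the integral sign. Setting
\begin{align}
\xi = \xi(z,\tau,\sigma) := \exp\!\Big( \big(r - \tfrac12\sigma^2\big)\tau + \sigma\sqrt{\tau}\,z \Big) ,
\end{align}
the substitution $y = x\,\xi$ in the definition $P_{BS}(\tau,x;\sigma) = \int_{\R^+} p(\tau,x,y;\sigma)\,h(y)\,dy$ turns the lognormal density of Lemma \ref{thm:commute} into the standard Gaussian weight, giving
\begin{align}
P_{BS}(\tau,x;\sigma) = \int_{\R} h(x\,\xi)\,\frac{e^{-z^2/2}}{\sqrt{2\pi}}\,dz \label{eq:plan-rep}
\end{align}
(if a discount factor $e^{-r\tau}$ is carried, it is bounded on $[0,T]$ and plays no role below). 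Because $\xi$ does not depend on $x$, one has $\d_x^k\big[h(x\xi)\big] = \xi^k\,h^{(k)}(x\xi)$ for every $k\geq1$, so the natural candidate for the $x$-derivatives is
\begin{align}
\d_x^k P_{BS}(\tau,x;\sigma) = \int_{\R} \xi^k\,h^{(k)}(x\xi)\,\frac{e^{-z^2/2}}{\sqrt{2\pi}}\,dz . \label{eq:plan-dk}
\end{align}

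First I would justify \eqref{eq:plan-dk} by induction on $k$, producing at each step a $z$-integrable dominating function valid uniformly for $x$ in a compact neighborhood $[x_0 - \eta, x_0 + \eta] \subset \R^+$ of any fixed $x_0 > 0$. By the hypothesis on $h$, for each $k$ there are constants $C_k$ and $m_k$ with $|h^{(k)}(y)| \leq C_k(1 + y^{m_k})$ for $y > 0$, hence $\big|\xi^k\,h^{(k)}(x\xi)\big| \leq C_k\big(\xi^k + x^{m_k}\,\xi^{k+m_k}\big)$. For fixed $(\tau,\sigma)$ each power $\xi^p$ is, as a function of $z$, of the form $e^{a_p z + b_p}$ with $a_p$ and $b_p$ continuous in $(\tau,\sigma)$, and $\int_{\R} e^{a_p z + b_p}\,\tfrac{e^{-z^2/2}}{\sqrt{2\pi}}\,dz = e^{b_p + a_p^2/2} < \infty$; replacing $x^{m_k}$ by its maximum over the neighborhood then yields the desired dominating function. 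Dominated convergence applied to \eqref{eq:plan-rep} licenses the interchange of $\d_x$ with the integral, and the induction delivers both \eqref{eq:plan-dk} and $P_{BS}(\tau,\cdot\,;\sigma) \in C^\infty(\R^+)$.

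Next I would read off the two quantitative claims directly from \eqref{eq:plan-dk}. Since $\xi^p = \exp\big(p(r - \tfrac12\sigma^2)\tau + p\,\sigma\sqrt{\tau}\,z\big)$, the Gaussian identity above gives
\begin{align}
\big| \d_x^k P_{BS}(\tau,x;\sigma) \big| \leq C_k\Big( e^{\gamma_k(\tau)} + x^{m_k}\,e^{\gamma_{k+m_k}(\tau)} \Big), \qquad \gamma_p(\tau) := p\big(r - \tfrac12\sigma^2\big)\tau + \tfrac12\,p^2\sigma^2\tau .
\end{align}
The right side is a polynomial in $x$ of degree $m_k$ (independent of $\tau$), which is the asserted at-most-polynomial growth in $x$; and since each map $\tau \mapsto e^{\gamma_p(\tau)}$ is continuous on the compact interval $[0,T]$, hence bounded there, the right side is bounded uniformly in $\tau \in [0,T]$ for fixed $x > 0$. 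The endpoint $\tau = 0$ is included, since there $\xi \equiv 1$ and \eqref{eq:plan-dk} reduces to $\d_x^k P_{BS}(0,x;\sigma) = h^{(k)}(x)$.

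I do not anticipate a genuine obstacle: the argument is a routine differentiation under the integral sign combined with elementary Gaussian moment estimates. The points that need mild care are (i) setting up a dominating function that is uniform for $x$ near a fixed interior point, so that exchanging $\d_x$ and the integral is rigorous at each such point, and (ii) checking that the constants in the final bound are continuous --- not merely finite --- as functions of $\tau$ up to and including $\tau = 0$, so that the stated uniformity over $[0,T]$ genuinely covers the degenerate terminal time.
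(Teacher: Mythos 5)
Your proof is correct and follows essentially the same route as the paper: both write $P_{BS}$ as a Gaussian average of $h(x e^{(r-\sigma^2/2)\tau+\sigma\sqrt{\tau}z})$, differentiate under the integral to get $\d_x^k P_{BS}$ as a Gaussian average of $\xi^k h^{(k)}(x\xi)$, and conclude from the polynomial growth of $h^{(k)}$ together with finiteness of Gaussian exponential moments. Your version merely supplies the explicit domination argument and the $e^{a_p^2/2}$ moment bounds that the paper leaves implicit.
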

\begin{proof}
From the formula
\eqref{PBSdef}, 
we see that $P_{BS}$ is $C^\infty(0,\infty)$ in $x$, and grows at most polynomially in $x$ at $0$ and $+\infty$ as inherited from the behavior of $h$. Then, we compute
\begin{align}
\d_x^kP_{BS}(\tau,x;\sigma) = e^{-r\tau}\int
h^{(k)}\(xe^{(r-\frac12\sigma^2)\tau + \sigma\sqrt{\tau}\,\xi}\)
\(e^{(r-\frac12\sigma^2)\tau + \sigma\sqrt{\tau}\,\xi}\)^k\frac{e^{-\xi^2/2}}{\sqrt{2\pi}}\,d\xi,
\end{align}
where $h^{(k)}$ is the $k$-the derivative of $h$, which is at most polynomially growing by assumption, and therefore $\d_x^kP_{BS}$ is also at most polynomially growing at $0$ and $+\infty$ in $x$, and uniformly bounded in $\tau\in[0,T]$ for fixed $x>0$.
\end{proof}
We note that this Lemma does not hold for the nonsmooth case of puts and calls where the derivatives of the payoff are singular at the strike price.

\begin{remark}
Since we have $P_{0,0}(t,x,z)=P_{BS}(T-t,x;\sigb(z))$, it follows that $\D_kP_{0,0}=x^k\partial^k_{x}P_{0,0}$ is at most polynomially growing in $x$ and bounded uniformly in $\tau\in[0,T]$ for fixed $x>0$.
\label{Dkgrowth}
\end{remark}
We will also use the fact that $Y$ and $Z$ have moments of all orders uniformly bounded in $\eps$ and $\del$ (thanks to Assumptions \ref{item:Ybound} and \ref{item:Zbound} made on $Y^{(1)}$ and $Z^{(1)}$ in Section \ref{sec:assumptions}):
\begin{lemma}\label{LmD1}
If $J(y,z)$ is at most polynomially growing, then for every $(y,z)$  there exists a positive constant $C<\infty$ such that
\begin{align}\label{eq:momentsrecall}
\sup_{t\leq T}\sup_{\eps,\del \leq 1} \EEE \[  |J(Y_t,Z_t)| \mid Y_0 = y, Z_0 = z \] \leq C .
\end{align}
\end{lemma}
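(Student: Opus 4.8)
The plan is to reduce the claim to separate moment bounds on the two factors and then treat them individually, the fast factor being the delicate one because of its singular $1/\eps$ scaling. Since $J$ is at most polynomially growing there exist $m\in\mathbb{N}$ and $K<\infty$ with $|J(y,z)|\le K(1+|y|^{m}+|z|^{m})$, so by the Cauchy--Schwarz inequality it suffices to show that for every fixed $(y,z)$ and every $k$,
\begin{align}
\sup_{t\le T}\sup_{\eps,\del\le1}\Et\big[\,|Y_t|^{k}\mid Y_0=y,\,Z_0=z\,\big]<\infty
\quad\text{and}\quad
\sup_{t\le T}\sup_{\eps,\del\le1}\Et\big[\,|Z_t|^{k}\mid Y_0=y,\,Z_0=z\,\big]<\infty .
\end{align}

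For $Z$ I would argue directly under $\Pbt$. From \eqref{eq:RiskNeutral}, the drift $\del\,c(Z_t)-\sqrt\del\,\Gam(Y_t,Z_t)\,g(Z_t)$ and the diffusion coefficient $\sqrt\del\,g(Z_t)$ of $Z$ are, for $\del\le1$, bounded in absolute value by $K'(1+|Z_t|)$ with $K'$ depending only on the linear-growth constants of $c$ and $g$ (Assumption \ref{item:strong}) and on $\|\Gam\|_\infty$ (Assumption \ref{item:GammaLambdaBound}); in particular $Y$ enters only through the bounded $\Gam$, so $K'$ is uniform in $\eps,\del\le1$. The standard $L^{k}$ estimate for SDEs with at most linearly growing coefficients (Burkholder--Davis--Gundy for the martingale part together with Gronwall's lemma) then gives $\sup_{t\le T}\Et[\,|Z_t|^{k}\mid Z_0=z\,]\le C(T,k)\,(1+|z|^{k})$, uniformly in $\eps,\del\le1$.

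For $Y$, the idea is to undo the $1/\eps$ scaling by a time change and then absorb the market-price-of-risk drift by Girsanov. Fix $\eps,\del\le1$, $t\le T$, set $S:=t/\eps$, $\hat Y_s:=Y_{\eps s}$ and $B_s:=\eps^{-1/2}\,\Wt^y_{\eps s}$ (a $\Pbt$-Brownian motion, by L\'evy's characterization). Then $\hat Y$ solves, under $\Pbt$ on $s\in[0,S]$,
\begin{align}
d\hat Y_s=\alpha(\hat Y_s)\,ds+\beta(\hat Y_s)\,dB_s-\sqrt\eps\,\Lam_u\,\beta(\hat Y_s)\,ds ,
\qquad \Lam_u:=\Lam(Y_{\eps u},Z_{\eps u}) .
\end{align}
Since $\Lam$ is bounded, $\exp\!\big(\int_0^{S}\sqrt\eps\,\Lam_u\,dB_u-\tfrac12\int_0^{S}\eps\,\Lam_u^{2}\,du\big)$ is a true martingale on $[0,S]$ and defines an equivalent measure $\mathbb{Q}$, on the $\sigma$-field $\mathcal{G}_S$ generated by the time-changed data up to $S$, under which $B'_s:=B_s-\int_0^s\sqrt\eps\,\Lam_u\,du$ is a Brownian motion and $d\hat Y_s=\alpha(\hat Y_s)\,ds+\beta(\hat Y_s)\,dB'_s$; thus under $\mathbb{Q}$ the process $\hat Y$ has the law of the diffusion $Y^{(1)}$ of Assumption \ref{item:gap} started at $y$. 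Using $Y_t=\hat Y_S$ and Cauchy--Schwarz,
\begin{align}
\Et\big[\,|Y_t|^{k}\,\big]
&=\mathbb{E}^{\mathbb{Q}}\!\Big[\,\frac{d\Pbt}{d\mathbb{Q}}\Big|_{\mathcal{G}_S}\,|\hat Y_S|^{k}\,\Big] \\
&\le\Big(\mathbb{E}^{\mathbb{Q}}\!\Big[\big(\tfrac{d\Pbt}{d\mathbb{Q}}\big|_{\mathcal{G}_S}\big)^{2}\Big]\Big)^{1/2}
\Big(\mathbb{E}^{\mathbb{Q}}\big[\,|\hat Y_S|^{2k}\,\big]\Big)^{1/2} .
\end{align}
The second factor equals $\E[\,|Y^{(1)}_S|^{2k}\,]\le C(2k)$, uniformly in $S\ge0$, by Assumption \ref{item:Ybound}. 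Rewriting the density in terms of $B'$ gives $\tfrac{d\Pbt}{d\mathbb{Q}}\big|_{\mathcal{G}_S}=\exp\!\big(-\int_0^{S}\sqrt\eps\,\Lam_u\,dB'_u-\tfrac12\int_0^{S}\eps\,\Lam_u^{2}\,du\big)$, so its square is a mean-one $\mathbb{Q}$-martingale times $\exp\!\big(\eps\int_0^{S}\Lam_u^{2}\,du\big)\le e^{\eps\|\Lam\|_\infty^{2}S}$; hence $\mathbb{E}^{\mathbb{Q}}\big[(\tfrac{d\Pbt}{d\mathbb{Q}}\big|_{\mathcal{G}_S})^{2}\big]\le e^{\eps\|\Lam\|_\infty^{2}S}=e^{\|\Lam\|_\infty^{2}t}\le e^{\|\Lam\|_\infty^{2}T}$, using $\eps S=t$. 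Both bounds are uniform in $\eps\le1$ and $t\le T$, which completes the fast factor and hence the lemma.

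I expect the only genuine obstacle to be this singular $1/\eps$ scaling of the fast dynamics: a direct Gronwall estimate for $\Et[\,|Y_t|^{k}\,]$ would produce constants blowing up like $e^{c/\eps}$. Passing to the $O(1)$ time scale cures this, at the cost of a horizon $S=t/\eps$ that diverges as $\eps\to0$, and the argument closes precisely because the $\eps$ in the exponent of the squared Radon--Nikodym density cancels that divergence, $\eps\,\|\Lam\|_\infty^{2}\,S=\|\Lam\|_\infty^{2}\,t\le\|\Lam\|_\infty^{2}T$; the mean-reversion of $Y$ enters only through the uniform-in-time moment bound of Assumption \ref{item:Ybound}, which is exactly why that assumption is imposed.
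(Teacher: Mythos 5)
Your proof is correct, and it follows essentially the same route as the paper, which does not reproduce an argument but simply defers to Lemma~4.9 of the cited book \cite{fpss}: there, as here, the $Z$-moments come from linear growth of the coefficients and boundedness of $\Gam$, while the $Y$-moments are obtained by rescaling time, removing the bounded market-price-of-risk drift by Girsanov so that the rescaled process has the $\P$-law of $Y^{(1)}$ (to which the uniform-in-time moment bound of Assumption~\ref{item:Ybound} applies), and controlling the Radon--Nikodym density via the cancellation $\eps\cdot(t/\eps)=t\le T$ in its exponential moment. Your identification of that cancellation as the crux of the uniformity in $\eps$ is exactly right.
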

The proof of this lemma can be found following Lemma 4.9 in \cite{fpss}.\\[-0.5em]

The following property will also be used in what follows:
\begin{lemma}\label{LmD2}
For each $k\in \ZZ$, there exists a constant $C_k<\infty$ depending on $x$ and $T$ such that
\begin{align}\label{eq:momentsX}
\sup_{t\leq T}\sup_{\eps,\del \leq 1} \EEE \[  |X_t|^k \mid X_0=x, Y_0 = y, Z_0 = z \] \leq C_k .
\end{align}
\end{lemma}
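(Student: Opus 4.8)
The plan is to use that, by Assumption \ref{item:fbounded}, the volatility $f(Y_t,Z_t)$ is bounded by the constant $c_2$ for \emph{every} $\eps$ and $\del$, so that $X$ is a geometric-type diffusion with a uniformly bounded (random) volatility coefficient; its moments are then controlled by a constant depending only on $x$, $T$ and $k$. Concretely, for each fixed $\eps,\del\le1$ the $X$-SDE in \eqref{eq:RiskNeutral} has the explicit strong solution
\begin{equation*}
X_t = x\exp\left(rt-\tfrac12\int_0^t f^2(Y_s,Z_s)\,ds+\int_0^t f(Y_s,Z_s)\,d\Wt_s^x\right),
\end{equation*}
which exists and is unique by Assumption \ref{item:strong}; in particular $X_t>0$ almost surely, so $|X_t|^k=X_t^k$. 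Raising to the $k$-th power and isolating the exponential-martingale term in the exponent gives
\begin{equation*}
X_t^k = x^k e^{krt}\,\exp\left(\tfrac{k^2-k}{2}\int_0^t f^2(Y_s,Z_s)\,ds\right) M_t,
\qquad
M_t := \exp\left(k\int_0^t f\,d\Wt_s^x-\tfrac{k^2}{2}\int_0^t f^2\,ds\right).
\end{equation*}

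Next, since $0<f\le c_2$ the Novikov condition holds trivially ($\Et[\exp(\tfrac12\int_0^T k^2 f^2\,ds)]\le e^{k^2 c_2^2 T/2}<\infty$), so $(M_t)_{t\le T}$ is a genuine $\Pbt$-martingale with $\Et[M_t]=1$; moreover the prefactor is deterministically at most $\exp(\tfrac{(k^2-k)c_2^2 T}{2})$ for $t\le T$. Taking conditional expectations therefore yields
\begin{equation*}
\Et\left[\,|X_t|^k\mid X_0=x,Y_0=y,Z_0=z\,\right]\le x^k\, e^{krT}\, e^{(k^2-k)c_2^2 T/2}=:C_k,
\end{equation*}
and the right-hand side depends on nothing but $x$, $T$ and $k$, so taking $\sup_{t\le T}\sup_{\eps,\del\le1}$ leaves it unchanged, which is the claim. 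If one prefers to avoid the explicit solution, an equivalent route is to apply It\^o's formula to $X_t^k$ (which gives $dX_t^k=(kr+\tfrac12 k(k-1)f^2)X_t^k\,dt+kf X_t^k\,d\Wt_t^x$), localize with $\tau_n:=\inf\{t:X_t\ge n\}$ to remove the local-martingale part, take expectations to obtain $\varphi_n(t)\le x^k+(kr+\tfrac12 k(k-1)c_2^2)\int_0^t\varphi_n(s)\,ds$ with $\varphi_n(t):=\Et[X_{t\wedge\tau_n}^k]$, apply Gronwall's inequality, and let $n\to\infty$ using Fatou's lemma.

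I do not expect a real obstacle here; the only point requiring attention is that every bound be uniform in $\eps$ and $\del$, and this is automatic precisely because the argument never uses the laws of $Y$ or $Z$, only the pointwise bound $f\le c_2$ from Assumption \ref{item:fbounded}. This is in contrast with the preceding Lemma \ref{LmD1}, where the uniform-in-time moment bounds on $Y^{(1)}$ and $Z^{(1)}$ assumed in Section \ref{sec:assumptions} are genuinely needed: the asset price is in fact the easy case, because its volatility is bounded.
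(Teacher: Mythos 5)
Your proof is correct and follows essentially the same route as the paper's: write the explicit exponential solution for $X_t$, raise it to the $k$-th power, split off the exponential martingale, and bound the remaining factor using $f\leq c_2$ from Assumption \ref{item:fbounded}. Your explicit Novikov check (the paper leaves the unit expectation of the martingale factor implicit) and the alternative It\^o--Gronwall route are fine but add nothing essential.
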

\begin{proof}
This is a simple consequence of \eqref{eq:RiskNeutral} and the boundedness of $f(y,z)$ (Assumption \ref{item:fbounded} of Section \ref{sec:assumptions}):
\begin{align*}
 |X_t|^k&=
 x^k \exp\(krt-\frac{k}{2}\int_0^tf^2(Y_s,Z_s)ds+k\int_0^tf(Y_s,Z_s) dW_s^{\star(0)}\) \\
&=x^k \exp\(krt+\frac{k^2-k}{2}\int_0^tf^2(Y_s,Z_s)ds -\frac{k^2}{2}\int_0^tf^2(Y_s,Z_s)ds+k\int_0^tf(Y_s,Z_s) dW_s^{\star(0)}\) \\
&\leq x^k \exp\(krt+\frac{k^2-k}{2}\overline{c}^2t-\frac{k^2}{2}\int_0^tf^2(Y_s,Z_s)ds+k\int_0^tf(Y_s,Z_s) dW_s^{\star(0)}\),
\end{align*}
where $\overline{c}$ is the upper bound on the volatility function $f$ in Assumption \ref{item:fbounded}.
Therefore, $$\EEE \[ |X_t|^k \] \leq x^k \exp\(krt+\frac{k^2-k}{2}\overline{c}^2t\). $$
\end{proof}

\subsection{Intermediate Lemmas\label{AppA1}}
\begin{lemma}\label{expcon} Let $\xi(x,z)$ and $\chi(y,z)$ be functions that are at most polynomially growing, with $\< \chi(\cdot,z) \> = 0$ for all $z$. Assume further that $\xi(x,z)$ is smooth in $(x,z)$ with derivatives at most polynomially growing and $\chi(y,z)$ is smooth in $z$ with derivatives at most polynomially growing as well. Then we have that
\begin{align}
\EEE_{t,x,y,z} \[ \chi(Y_T,Z_T) \xi(X_T,Z_T)\]=\O(\eps^{q/2}+ \sqrt{\delta})\quad \mbox{for $q<1$}.\label{ld4}
\end{align}
\end{lemma}
In order to establish Lemma \ref{expcon}, we will need the following.

\begin{lemma}\label{nightmare} Let $\chi(y,z)$ be a function that is at most polynomially growing, with $\< \chi(\cdot,z) \> = 0$ for all $z$. Then, for $q<1$ and $z$ fixed, there exists $\bar{\eps}>0$ and a polynomial $C(y)$ such that
$$
\left |\EEE_{t,y} [\chi(Y_s,z) | Y_{s-\eps^q} ] \right |\leq \sqrt{\eps} \left |C(Y_{s-\eps^q})\right |\,\quad\mbox{for any} \quad 0<\eps\leq\bar{\eps}\, \quad\mbox{and} \quad s\geq t+\eps^q.
$$
\end{lemma}
The proof of Lemma \ref{nightmare} is given at the end of this section. 

\begin{proof}[Proof of Lemma \ref{expcon}]
First, we replace $Z_T$ with $z=Z_t$.  This replacement results in an $\O(\sqrt{\del})$ error:
\begin{align}
\EEE_{t,x,y,z} \[ \chi(Y_T,Z_T) \xi(X_T,Z_T)\] - \EEE_{t,x,y,z} \[ \chi(Y_T,z) \xi(X_T,z)\]
&=	\O(\sqrt{\del}) . \label{eq:z.error}
\end{align}
To see this, we observe from \eqref{eq:RiskNeutral} that
\begin{align}
Z_T
&=	z + \del \int_t^T c(Z_s) ds - \sqrt{\del} \int_t^T \Gam(Y_s,Z_s) g(Z_s) ds + \sqrt{\del} \int_t^T g(Z_s) dW_s^{\star(2)} .
\end{align}
The error \eqref{eq:z.error} is then deduced by Taylor expanding $\chi(y,z) \xi(x,z)$ with respect to $z$ and using the linear growth of coefficients in Assumption \ref{item:strong} in Section \ref{sec:assumptions}, the polynomial growth of functions $\chi$, $\xi$ and their derivatives, and the uniform finiteness of moments of all orders in Lemma \ref{LmD1}.

Next, we replace $X_T$ by $X_{T-\eps^q}$ where $q<1$.  This results in an $\O(\eps^{q/2})$ error:
\begin{align}
\EEE_{t,x,y} \[ \chi(Y_T,z) \xi(X_T,z)\] - \EEE_{t,x,y,z} \[ \chi(Y_T,z) \xi(X_{T-\eps^q},z)\]
&=	{\O(\eps^{q/2}) }. \label{eq:x.error}
\end{align}
The error \eqref{eq:x.error} is deduced by using \eqref{eq:RiskNeutral} to write
\begin{align}
X_T
&=	X_{T-\eps^q} + r \int_{T-\eps^q}^T X_s ds + \int_{T-\eps^q}^T f(Y_s,Z_s) X_s {dW_s^{\star(0)} } ,
\end{align}
and then by Taylor expanding $\xi(x,z)$ about the point $x=X_{T-\eps^q}$, and once again using that $\xi(x,z)$ and its derivatives are at most polynomially growing in $x$ and the moments estimate in 
Lemma \ref{LmD2}.

Now, observe that
\begin{align}
\EEE_{t,x,y,z} \[ \chi(Y_T,z) \xi(X_{T-\eps^q},z) \]
&=	\EEE_{t,x,y,z} \[ \xi(X_{T-\eps^q},z) \EEE [\chi(Y_T,z) |\F_{T-\eps^q} ] \] \\
&=
\EEE_{t,x,y,z} \[ \xi(X_{T-\eps^q},z) \EEE [\chi(Y_T,z) |Y_{T-\eps^q} ] \] . \label{eq:conditional} 
\end{align}
Using Lemma \ref{nightmare} at $s=T$, polynomial growth and moment estimates, we deduce that the expectation in \eqref{ld4} is $\O(\eps^{q/2}+ \sqrt{\delta})$ for $q<1$.
\end{proof}

\subsection{Proof of Theorem \ref{thm:accuracy}\label{smoothproof}\label{AppA2}}
Now, we recall our price approximation $\Pt^{\eps,\del}$ from \eqref{eq:Phat}:
\begin{align}
P^{\eps,\del}\approx \Pt^{\eps,\del}
&=	P_{0,0} + \sqrt{\eps} \, P_{1,0} + \sqrt{\del} \, P_{0,1} + \sqrt{\eps \,\del} \, P_{1,1} + \eps \, P_{2,0}
+ \del \, P_{0,2} ,\label{Ptilde}
\end{align}
where $\{P_{i,j}, i+j\leq 2\}$ are given in Proposition \ref{thm:prices}.
The singular perturbation proof involves terms with higher order in $\eps$, and so we introduce
\begin{align}
\Ph^{\eps,\del}
&=	\Pt^{\eps,\del}+\eps^{3/2} P_{3,0} + \eps^2 P_{4,0}+\eps\sqrt{\del}\,P_{2,1}+\eps^{3/2}\sqrt{\del}\,P_{3,1}. \label{phat}
\end{align}	
\begin{remark}\label{poisson7}
The additional terms $P_{3,0},\,P_{4,0},\,P_{2,1},\,P_{3,1}$ are solutions of the Poisson equations \eqref{eq:u3poisson}, \eqref{eq:u4poisson}, \eqref{eq:u21poisson} and \eqref{eq:u31poisson} whose centering conditions have been used to obtain lower order terms in the price expansion. Since these four additional terms are not part of our approximation, but instead are used only for the proof of accuracy, we simply need them to be any solution of these four Poisson equations, which are all of the form 
$$ \L_0 P = \sum_{k\geq1} c_k(t,y,z)\D_kP_{0,0}, $$
where the sum is \emph{finite}, the $c_k(t,y,z)$ are at most polynomially growing in $y$ and $z$, and bounded uniformly in $t\in[0,T]$, and the $\D_kP_{0,0}$ are at most polynomially growing in $x$, and  bounded uniformly in $t\in[0,T]$ for fixed $x>0$ by Remark \ref{Dkgrowth}.
Therefore, by Assumption \ref{item:Poisson}, the solutions $P_{3,0},\,P_{4,0},\,P_{2,1},\,P_{3,1}$ are at most polynomially growing in $(x,y,z)$, and are bounded uniformly in $t\in[0,T]$.
\end{remark}
Next, we define the residual
\begin{align}	
R^{\eps,\del}
&:=	P^{\eps,\del} - \Ph^{\eps,\del} .
\end{align}
The proof of Theorem \ref{thm:accuracy} consists of showing that $R^{\eps,\del}=\O(\eps^{1+q/2}+\eps\sqrt{\del}+\del\sqrt{\eps}+\delta^{3/2})$ for $q<1$. 
By the choices made in Sections \ref{sec:one}, \ref{sec:one.half} and \ref{Odel}, 
when applying the operator $\L^{\eps,\del}$ to the function $ R^{\eps,\del}$,
all of the terms of order $\eps^{-1},\eps^{-1/2},1,\eps^{1/2},\eps, \del^{1/2}\eps^{-1},\del^{1/2}\eps^{-1/2},\del^{1/2},\del^{1/2}\eps^{1/2},\del\eps^{-1},\del\eps^{-1/2},\del$
cancel, as does the term $\L^{\eps,\del}P^{\eps,\del}$.  Hence,
we deduce that the residual $R^{\eps,\del}$ satisfies the following PDE:
\begin{align}\label{eq:Rpde}
\L^{\eps,\del} R^{\eps,\del} + \S^{\eps,\del}
&=0 ,
\end{align}
pointwise in $(t,x,y,z)$, where  
the source term $\S^{\eps,\del} $in \eqref{eq:Rpde} is quite lengthy to write explicitly. However, it is straightforward to check that it is a {finite} sum of the form
\begin{align}\label{dummylabel}
\S^{\eps,\del}=\sum_{i,j:\, i + j \geq 3}
\sqrt{\eps}^{\,i} \sqrt{\del}^{\,j} 
\sum_{k \geq 1} C_{i,j,k}(t,y,z) \D_k P_{0,0} ,
\end{align}
where the coefficients $C_{i,j,k}(t,y,z)$ are bounded uniformly in  $t \in [0,T]$ and at most polynomially growing in $y$ and $z$. We know the terms $\D_k P_{0,0}$, are at most polynomially growing in $x$ and bounded uniformly in $t\in[0,T]$ for fixed $x$ by Lemma \ref{niceGreeks} and the observation in Remark \ref{Dkgrowth}.
Consequently the source term in \eqref{eq:Rpde} is at most polynomially growing in $x, y$ and $z$, uniformly bounded in $t\in[0,T]$ and $\eps,\del\leq 1$. Thus we have $\S^{\eps,\del}=\O(\eps^{3/2}+\eps\sqrt{\del}+\del\sqrt{\eps}+\delta^{3/2})$.

Using the terminal conditions for $\{P_{i,j}, i+j\leq 2\}$, we deduce the terminal condition for the residual:
\begin{align}\label{eq:Rbc}
R^{\eps,\del}(T,x,y,z)
&=	-\eps P_{2,0}(T,x,y,z) + \S^{\eps,\del}_T,
\end{align}
pointwise in $(x,y,z)$, where, again, the terms in $\S^{\eps,\del}_T$ come from the Poisson equations discussed in Remark \ref{poisson7}. It is straightforward to check that $ \S^{\eps,\del}_T$ is of the form 
\begin{align}
\S^{\eps,\del}_T(x,y,z)=\sum_{i,j:\, i + j \geq 3}
\sqrt{\eps}^{\,i} \sqrt{\del}^{\,j} 
\sum_{k \geq 1} C_{i,j,k}(y,z) \D_k h(x), \label{SedT}
\end{align}
where again the sum is finite and the coefficients $C_{i,j,k}(y,z)$ are at most polynomially growing in $y$ and $z$.
The terms $\D_kh(x)$, are at most polynomially growing in $x$ by the assumption in Theorem
\ref{thm:accuracy}. Consequently the term $\S^{\eps,\del}_T$ in \eqref{eq:Rbc} is 
at most polynomially growing in $x, y$ and $z$, uniformly in $\eps,\del\leq 1$. Thus we have $\S^{\eps,\del}_T=\O(\eps^{3/2}+\eps\sqrt{\del}+\del\sqrt{\eps}+\delta^{3/2})$. 
The same polynomial growth condition holds for 
\begin{align}\label{P2explicit}
P_{2,0}(T,x,y,z)=-\frac{1}{2}\phi(y,z)\D_2P_{0,0}(T,x,z)=-\frac{1}{2}\phi(y,z)\D_2h(x).
\end{align}
It is important to note that the non-vanishing terminal value $P_{2,0}(T,x,y,z)$ plays a particular role since it appears at the $\eps$ order.
The probabilistic representation of $R^{\eps,\del}$, solution to the Cauchy problem \eqref{eq:Rpde}-\eqref{eq:Rbc}, is therefore
\begin{align}
R^{\eps,\del}(t,x,y,z)
&=	\frac{\eps}{2}\, \EEE_{t,x,y,z} \[ e^{-r(T-t)} \phi(Y_T,Z_T) \D_2h(X_T)\] + \O(\eps^{3/2}+\eps\sqrt{\del}+\del\sqrt{\eps}+\delta^{3/2}), \label{eq:R}
\end{align}
where $\EEE_{t,x,y,z}$ denotes expectation under the $(\eps,\del)$-dependent dynamics \eqref{eq:RiskNeutral} starting at time $t<T$ from $(x,y,z)$. 
The term denoted by $\O(\eps^{3/2}+\eps\sqrt{\del}+\del\sqrt{\eps}+\delta^{3/2})$ comes from $\S^{\eps,\del}$ in \eqref{eq:Rpde} and $\S^{\eps,\del}_T$ in \eqref{eq:Rbc} and it retains the same order because of the uniform control of the moments of $X$, $Y$ and $Z$ recalled in Lemmas \ref{LmD1} and \ref{LmD2} at the beginning of this section.
We next examine the above expectation in \eqref{eq:R} detail.

From Lemma \ref{expcon} with $\xi= \D_2h$ and $\chi=\phi$, 
where 
smoothness in $z$ follows from the smoothness of $f$ (Assumption \ref{item:smooth} in Section \ref{sec:assumptions}), 
we have
$$\EEE_{t,x,y,z} \[ e^{-r(T-t)} \phi(Y_T,Z_T) \D_2h(X_T)\]=\O(\eps^{q/2}+ \sqrt{\delta})\quad \mbox{for $q<1$},$$
by our choice \eqref{centeringphi}.
We then conclude from \eqref{eq:R} that $R^{\eps,\del}$ is $\O(\eps^{1+q/2}+\eps\sqrt{\del}+\del\sqrt{\eps}+\delta^{3/2})$ for any $q < 1$, which establishes Theorem \ref{thm:accuracy}. 
\begin{remark}\label{Rmk5}
This is exactly where we see that our choice of terminal condition \eqref{eq:u2BC} for $P_{2,0}$, which leads to  \eqref{centeringphi}, was necessary because if $\< \phi(\cdot,z) \> \ne 0$, then 
the expectation in \eqref{eq:R} would be of order $1$ and 
the residual would be of order $\eps$. 
\end{remark}

\subsection{Proof of Lemma \ref{nightmare}}\label{AppA3}
Let us first consider the case $\Lambda=0$. For $z$ fixed, $\chi(y,z)$ being at most polynomially growing in $y$, there exists $a>0$ and an integer $k$ such that $|\chi(y,z)|\leq a(y^{2k}+1)$. By Assumption \ref{item:gap} in Section \ref{sec:assumptions}, we are in position to apply Theorem 6.1 of \cite{MeynTweedie} (note that by assuming Feller property and a strict positive density for the invariant distribution,  the condition that every petite set is compact for some skeleton chain is satisfied). Therefore,
there exists $b<\infty$ and $\lambda>0$ such that
$$
|\EEE_{y}[\chi(Y^{(1)}_t,z)]-\langle \chi(\cdot,z)\rangle|=|\EEE_{y}[\chi(Y^{(1)}_t,z)]|\leq ab(y^{2k}+1)e^{-\lambda t} \quad \mbox {for every } \, t.
$$
By stationarity one deduces that for $s-\eps^q\geq 0$,
$$
|\EEE_{y,s-\eps^q}[\chi(Y_s,z)]-\langle \chi(\cdot,z)\rangle|=|\EEE_y[\chi(Y^{(1)}_{1/\eps^{1-q}},z)]|\leq ab(y^{2k}+1)e^{-\lambda /\eps^{1-q}},
$$
and consequently
$$
|\EEE_{t,y}[\chi(Y_s,z)|Y_{s-\eps^q}]|\leq ab(Y_{s-\eps^q}^{2k}+1)e^{-\lambda /\eps^{1-q}}.
$$
Lemma \ref{nightmare} follows by using $e^{-\lambda /\eps^{1-q}}\leq \sqrt{\eps}$ for $\eps\leq 1$. Note that this last inequality is what we need for the second order  accuracy  studied in this paper but can be improved (in fact, to any power of $ \eps$ up to a multiplicative constant or for $\eps$ small enough).

However, under the pricing measure $\PPP$, due to the presence of the possibly nonzero market price of volatility risk $\Lam(y)$,
we need to deal with the perturbed infinitesimal generator $\L_0-\sqrt{\eps}\beta(y)\Lambda(y)\partial_y$ and its associated diffusion process denoted by $Y^{(1,\eps)}_t$ which satisfies
\begin{align}
dY^{(1,\eps)}_t
&=	\( \alpha(Y^{(1,\eps)}_t) - \sqrt{\eps} \beta(Y^{(1,\eps)}_t) \Lam(Y_t^{(1,\eps)}) \) dt + \beta(Y^{(1,\eps)}_t)\,dW_t^{\star(1)} , &
Y_0^{(1,\eps)}
&=	y . \label{eq:Y.again}
\end{align}
The process $Y^{(1,\eps)}$ in \eqref{eq:Y.again} admits the invariant distribution $\Pi_\eps$ with density
\begin{align}
\pi_\eps(y)
&=	\frac{J_\eps}{\beta^2(y)}\exp\(2 \int_0^y \frac{\alpha(u) - \sqrt{\eps}\beta(u)\Lam(u)}{\beta^2(u)}du \) ,
\end{align}
where $J_\eps$ is a normalization factor. Using Assumption \ref{item:gapeps} and following the argument given above in the case $\Lambda=0$, we obtain that there exists $b<\infty$ and $\lambda>0$ independent of $\eps\leq 1$ such that
$$
|\EEE_{t,y}[\chi(Y_s,z)|Y_{s-\eps^q}]-\<\chi(\cdot,z)\>_\eps|\leq ab(Y_{s-\eps^q}^{2k}+1)e^{-\lambda /\eps^{1-q}}\leq ab(Y_{s-\eps^q}^{2k}+1)\sqrt{\eps}.
$$
 Now, expanding $\pi_\eps$ (including $J_\eps$), we derive for any $g \in L_1(\Pi_\eps)$
\begin{align}
\<g\>_\eps
&=  \<g\> - 2 \sqrt{\eps} \< \(\int_0^\cdot \frac{\Lam(u)}{\beta(u)}du\) \( g(\cdot) - \<g\> \)\> + {\O(\eps)}.
\label{eq:<phi>eps}
\end{align}
Hence, using the fact that $\<\chi(\cdot,z)\>=0$ and the triangle inequality, Lemma \ref{nightmare} follows. Note that the term in $\sqrt{\eps}$ in \ref{eq:<phi>eps} would generate a contribution of order $\sqrt{\eps}$ from $P_2$ which would contribute a term of order $\eps^{3/2}$ if one would seek an expansion of the price at that order.


\section{Proof of Theorem \ref{thm:accuracy} \label{sec:call}}
In this appendix, we consider payoffs $h$ satisfying Assumption \ref{itm:h}.
We regularize such a payoff $h$ by replacing it with its Black-Scholes price with time to maturity $\gam>0$ and volatility $\sigb(z)$ which appears as a constant volatility, $z$ being a parameter.
Accordingly, we define {
\begin{align}
h^\gam(x,z)
	&=	P_{BS}(\gam,x;\sigb(z)),\label{eq:hD}
\end{align}
where $P_{BS}(\tau,x;\sigma)$ is  the Black-Scholes price of an option with payoff $h$ as a function of the time to maturity $\tau$, the stock price $x$, and the volatility $\sigma$.  }
We note that, for $\gam>0$, the regularized payoff $h^\gam$, as a function of $x$, is $C^\infty$, at most polynomially growing at $0$ and $+\infty$ as well as its derivatives.  As such, $h^\gam$ is smooth, as considered in Appendix \ref{appendix:proof}.

The price $P^{\eps,\delta,\gam}(t,x,y,z)$ of the option with the regularized payoff satisfies
\begin{align}
\L^{\eps,\delta} P^{\eps,\delta,\gam}
	&=	0 , &
P^{\eps,\delta,\gam}(T,x,y,z)
	&=	h^\gam(x,z) ,
\end{align}
where the operator $\L^{\eps,\delta}$ is given in \eqref{eq:L,eps,del}.
Corresponding to the price approximation $\Pt^{\eps,\del}$ given in \eqref{eq:Phat}, we introduce the second order approximation of the regularized option price denoted by $\Pt^{\eps,\del,\gam}$:
\begin{align}
\Pt^{\eps,\del,\gam}
	&=	P_{0,0}^\gam + \sqrt{\eps} P_{1,0}^\gam +  \sqrt{\delta} P_{0.1}^\gam+\eps P_{2,0}^\gam+ \sqrt{\eps}\sqrt{\delta} P_{1,1}^\gam +\delta P_{0,2}^\gam,\label{PtildeDelta}
\end{align}
where, from Proposition \ref{thm:prices}, $P_{0,0}^\gam$ is the Black-Scholes price  of the option maturing at $T$ with payoff $h^\gam(x,z)$, evaluated at volatility $\sigb(z)$. 
{Since we have regularized the payoff in \eqref{eq:hD} by using the Black-Scholes price with volatility $\sigb(z)$, it follows that  $P_{0,0}^\gam$ is given by  
\begin{align}
P_{0,0}^\gam(t,x,z)=P_{0,0}(t-\gam,x,z)
	&=	P_{BS}(T-t+\gam,x;\sigb(z)). \label{P0D}
	\end{align}
Similarly, the other terms in \eqref{PtildeDelta} are solutions of the PDE problems in \eqref{eq:key} with $h$ replaced by $h^\Delta$, and they are given explicitly in Proposition \ref{thm:prices}.}
Note that the term $\eps P_{2,0}^\gam$ in \eqref{PtildeDelta} plays a particular role. From \eqref{P2explicit}, it is given by
\begin{align}
\eps P_{2,0}^\gam(t,x,y,z)=-\frac{1}{2}\eps \phi(y,z)\D_2P^\gam_{0,0}(t,x,z),\label{P2Delta}
\end{align}
where $\phi$ is centered, and at maturity, this term becomes $-\frac{1}{2}\eps\phi(y,z)\D_2h^\gam(x,z)$.

The proof of Theorem \ref{thm:accuracy} will rely on the following three Lemmas, which we prove below.
\begin{lemma}
\label{thm:P}
For a fixed point $(t,x,y,z)$ with $t<T$, there exist constants $\gamb_1>0$, $\epsb_1 > 0$ and $c_1>0$ such that
\begin{align}
| P^{\eps,\delta}(t,x,y,z) - P^{\eps,\delta,\gam}(t,x,y,z) | \leq c_1 \gam ,
\end{align}
for all $0< \gam \leq \gamb_1$ and $0 < \eps \leq \epsb_1$.
\end{lemma}
Lemma \ref{thm:P} controls the error between the model price  and the model price with the regularized payoff.
\begin{lemma}
\label{thm:Q}
For a fixed point $(t,x,y,z)$ with $t<T$, there exist constants $\gamb_2>0$, $\epsb_2 > 0$ and $c_2>0$ such that
\begin{align}
| \Pt^{\eps,\del}(t,x,y,z) - \Pt^{\eps,\del,\gam}(t,x,y,z) | \leq c_2 \gam ,
\end{align}
for all $0< \gam \leq \gamb_2$ and $0 < \eps \leq \epsb_2$.
\end{lemma}
Lemma \ref{thm:Q} controls the error between the approximated price and the approximated price with the regularized payoff.
\begin{lemma}
\label{thm:PQ}
For a fixed point $(t,x,y,z)$ with $t<T$, there exist constants $\gamb_3>0$, $\epsb_3 > 0$ and $c_3>0$ such that
\begin{align}
| P^{\eps,\delta,\gam}(t,x,y,z) - \Pt^{\eps,\del,\gam}(t,x,y,z) | \leq c_3 
\( \eps^{1+q/2}  +\eps \sqrt{\delta}+\delta\sqrt{\eps}+\delta^{3/2}\),
\end{align}
for all $0 < \eps \leq \epsb_3$, any $q<1$, and uniformly in $\Delta\leq\gamb_3$.
\end{lemma}
Lemma \ref{thm:PQ} controls the error between the model price and the approximated price, both  with the regularized payoff.

\subsection{Proof of Theorem \ref{thm:accuracy}}
The proof follows directly from Lemmas \ref{thm:P}, \ref{thm:Q} and \ref{thm:PQ}.  Take $\epsb = \min(\epsb_1,\epsb_2,\epsb_3)$ and choose $\Delta=\eps^{3/2}$.
Then, using Lemmas \ref{thm:P}, \ref{thm:Q} and \ref{thm:PQ}, we find
\begin{align}
|P^{\eps,\delta} - \Pt^{\eps,\del}|
	&\leq	|P^{\eps,\delta} - P^{\eps,\delta,\gam}| + | P^{\eps,\delta,\gam} - \Pt^{\eps,\del,\gam}| + | \Pt^{\eps,\del,\gam}- \Pt^{\eps,\del} | \\
	&\leq 2 \max(c_1,c_2) \eps^{3/2} + c_3 \( \eps^{1+q/2}  +\eps \sqrt{\delta}+\delta\sqrt{\eps}+\delta^{3/2}\) ,\\
	&=\O(\eps^{3/2-}+\eps\sqrt{\del}+\del\sqrt{\eps}+\delta^{3/2}),
\end{align}
where the functions are evaluated at a fixed $(t,x,y,z)$ with $t<T$.

\subsection{Proofs of Lemmas \ref{thm:P} and \ref{thm:Q}}\label{sec:lemmas}
\begin{proof}[Proof of Lemma \ref{thm:P}]
The proof is a straightforward extension of {\cite[Lemma 4.1]{fouque2003proof}}. It requires a multi-factor ``correlated Hull-White formula"  with general payoffs which is in \cite[Section 2.5.4]{fpss}.
We give some details here since it introduces notations that will also be used in the proof of Lemma \ref{thm:4} below. 
Conditioning on the volatility path $(Y_u,Z_u)_{t\leq u\leq T}$ (or their driving brownian motions $(W_u^{\star(1)},W_u^{\star(2)})_{t\leq u\leq T}$), we obtain the representations
\begin{align}
&P^{\eps,\delta}(t,x,y,z)= \EEE_{t,x,y,z} \{P_{BS}(t,xe^{\zeta_{t,T}}; \sigb_{\perp,t,T})\},\\
&P^{\eps,\delta,\gam}(t,x,y,z)=\EEE_{t,x,y,z} \{P_{BS}(t,xe^{\zeta_{t,T}+r\Delta}; \sigb^\Delta_{\perp,t,T})\},
\end{align}
where $P_{BS}$ is the Black-Scholes price with payoff $h$ and maturity $T$, and for $t< s\leq T$:
\begin{align}
&\zeta_{t,s}=\rho_1\int_t^sf(Y_u,Z_u)dW_u^{\star(1)}+\rho_2\int_t^sf(Y_u,Z_u)dW_u^{\star(2)}-\half(\rho_1^2+\rho_2^2)  \int_t^sf(Y_u,Z_u)^2du , \label{def:zeta} \\
&\sigb_{\perp,t,s}^2=\frac{c_0^2}{s-t} \int_t^sf(Y_u,Z_u)^2du, \quad \mbox{with}\quad 0< c_0^2:=
\frac{1-\rho_1^2-\rho_2^2-\rho_{12}^2+2\rho_1\rho_2\rho_{12}}{1-\rho_{12}^2}\leq 1,\label{def:sigperp}\\
&(\sigb^\Delta_{\perp,t,s})^2=\sigb_{\perp,t,s}^2+\frac{\Delta\sigb^2(z)}{s-t}.\nonumber
\end{align}
Therefore,
\begin{align}
 | P^{\eps,\delta}(t,x,y,z) - P^{\eps,\delta,\gam}(t,x,y,z) | &\leq e^{-r(T-t)} \EEE_{t,x,y,z} \(\int |h(xe^{\eta'+\zeta_{t,T}})|\cdot |p_1(\eta')-p_2(\eta')|d\eta'\),
\end{align}
where $p_1$ denote the Gaussian density of  ${\cal N}\((r-\frac{\sigb_{\perp,t,T}^2}{2})(T-t), \sigb_{\perp,t,T}^2(T-t)\)$ and $p_2$ the Gaussian density of ${\cal N}\(r\Delta+(r-\frac{(\sigb^\Delta_{\perp,t,T})^2}{2})(T-t), (\sigb^\Delta_{\perp,t,T})^2(T-t)\)$. Observe that the variance $\sigb_{\perp,t,T}^2(T-t)$ is bounded and bounded below by 
$c_0^2\underline{c}^2(T-t)$  where $0<\underline{c}\leq f(y,z)$ from Assumption \ref{item:fbounded} in Section \ref{sec:assumptions}. Lemma \ref{thm:P} follows by using polynomial growth of $h$ at $0$ and $\infty$, and exponential moments of $\zeta_{t,T}$.
\end{proof}

\begin{proof}[Proof of Lemma \ref{thm:Q}]
{From Proposition \ref{thm:prices}, we can express each $P_{i,j}$ ($i+j\leq 2$) as an operator acting on $P_{BS}(T-t,x;\sigb(z))$, and since derivatives with respect to $\sigma$ can be converted to derivatives with respect to $x$ by the 
Vega-Gamma relation \eqref{eq:Dsig}, we can write the price approximation $\Pt^{\eps,\del}$ in \eqref{eq:Phat} as $\Pt^{\eps,\del}(t,x,y,z)={\cal G}P_{BS}(T-t,x;\sigb(z))$, where the operator ${\cal G}$ is a polynomial in $\{\D_i\}$ with bounded coefficients for $(y,z)$ given. Similarly we can express $\Pt^{\eps,\del,\gam}$ as $\Pt^{\eps,\del,\gam}={\cal G}P_{BS}(T-t+\Delta,x;\sigb(z))$, and therefore 
\begin{align}
\Pt^{\eps,\del}- \Pt^{\eps,\del,\gam}
	&=	
	{\cal G}\(P_{BS}(T-t,x;\sigb(z))-P_{BS}(T-t+\Delta,x;\sigb(z))\).
\end{align}
Using 
the differentiability of $P_{BS}$ and $\{\D_i\} P_{BS}$ with respect to $t$ at $t<T$,  Lemma \ref{thm:Q} follows easily.}
\end{proof}

\subsection{Estimates on Greeks}
The key to proving Lemma \ref{thm:PQ} is the following Lemma providing uniform estimates.
\begin{lemma}\label{thm:4}
{As in Lemma \ref{thm:PQ}, in what follows, $t$ is fixed such that $t<T$.} 
{Let $\chi(y,z)$ be a function which is at most polynomially growing in $(y,z)$, and  that is smooth in $z$ with partial derivatives with respect to $z$ that are at most polynomially growing in $(y,z)$.}
Denote by $\eta_s=\log X_s$ the log-process and by $\eta=\log x$ the corresponding log-variable. Then, 
for any integer $k$,
there exists a finite constant $c > 0$,
which may depend on $(t,x,y,z,T)$, such that  
uniformly in $\eps$, $\delta$, $\Delta >0$ and $t\leq s\leq T$:
\begin{align}
 \Big|  {\EEE_{t,x,y,z}\left[ \chi(Y_s,Z_s)\d_\eta^k P_{0,0}^\gam(s,e^{\eta_s},Z_s)\right] } \Big| 
	&\leq c ,  \label{eq:first.bound0} \\
\end{align}
and, for a given $p\geq 0$, 
\begin{align}
 \Big|  \EEE_{t,x,y,z} \int_t^T (T-s)^p  e^{-r(s-t)} \chi(Y_s,Z_s) \d_\eta^k P_{0,0}^\gam(s,e^{\eta_s},Z_s) ds \Big| 
	&\leq c .
		\label{eq:second.bound0}
\end{align}
Additionally, if $\chi$ is centered, $\< \chi(\cdot,z) \> =0$ for all $z$, then, for any $q<1$ and any  integer $k$,  there exists a finite constant $c > 0$, which may depend on $(t,x,y,z,T)$ such that for any $\eps$ {satisfying} $\eps^q\leq T-t$ and any $s$ {satisfying} $t+\eps^q\leq s \leq T$, uniformly in $\Delta>0$
we have 
\begin{align}
 \Big|  {\EEE_{t,x,y,z}\left[ \chi(Y_s,Z_s) \d_\eta^k P_{0,0}^\gam(s,e^{\eta_s},Z_s)\right] } \Big| 
	&\leq c (\eps^{q/2}+\sqrt{\delta}) , \label{eq:first.bound}
\end{align}
and, for a given $p\geq 0$,
\begin{align}
\Big|  \EEE_{t,x,y,z} \int_t^T (T-s)^p  e^{-r(s-t)} \chi(Y_s,Z_s) \d_\eta^k P_{0,0}^\gam(s,e^{\eta_s},Z_s) ds \Big| 
	&\leq c (\eps^{q/2}+ \sqrt{\delta}).
		\label{eq:second.bound}
\end{align}
\end{lemma}
\begin{proof}[Proof of Lemma \ref{thm:4}]
This is an improved version of Lemma 5.2 in \cite{fouque2003proof} where the proof consisted in an explicit computation of $\d_\eta^k P_{0,0}^\gam$ in the case of a call payoff. Here, we aim at estimates which are uniform in $\Delta$. Conditioning on the volatility path $(Y_u,Z_u)_{t\leq u\leq s}$ and using the notations introduced in the proof of Lemma \ref{thm:P} in Section \ref{sec:lemmas}, we get:
\begin{align}
\EEE_{t,x,y,z}\left[ \chi(Y_s,Z_s)\d_\eta^k P_{0,0}^\gam(s,e^{\eta_s},Z_s)\right] &=
\EEE_{t,x,y,z}\left[ \chi(Y_s,Z_s)\int h(e^{\eta'+\zeta_{t,s}})\d_\eta^k p(\eta'-\eta)d\eta'\right] ,\label{eq:density}
\end{align}
where $p$ is 
the Gaussian density of 
\begin{align}
&{\cal N}\((r-\half\sigb_{\perp,t,s}^2)(s-t)+(r-\half \sigb(Z_s)^2)(T+\Delta-s), \sigb_{\perp,t,s}^2(s-t)+\sigb(Z_s)^2(T+\Delta-s)\),\label{density}
\end{align}
and, $\zeta_{t,s}$ and $\sigb_{\perp,t,s}^2$ are defined for $s>t$ in \eqref{def:zeta} and \eqref{def:sigperp} respectively. Note that for $s=t$,  $\zeta_{t,t}=0$ and the Gaussian distribution is simply 
$
{\cal N}\((r-\half \sigb(z)^2)(T+\Delta-t), \sigb(z)^2(T+\Delta-t)\).
$
The uniform bound \eqref{eq:first.bound0} follows from the uniform lower bound of the variance of $p$, polynomial growth of $h$, uniform moments of $Y$ and $Z$ (Lemma \ref{LmD1}), and exponential moments of $\zeta_{t,s}$.
The bound \eqref{eq:second.bound0} is a direct consequence of \eqref{eq:first.bound0}.

If, in addition $\chi$ is centered, we 
define
$$\xi_s=\EEE\left[ \d_\eta^k P_{0,0}^\gam(s,e^{\eta_s},Z_s)\mid (Y_u,Z_u)_{t\leq u\leq s}\right],$$ 
and we  write for $s\geq t+\eps^q$,
\begin{align}
&\EEE_{t,x,y,z}\left[ \chi(Y_s,Z_s) \d_\eta^k P_{0,0}^\gam(s,e^{\eta_s},Z_s)\right] =\EEE_{t,x,y,z}\left[ \chi(Y_s,Z_s) \xi_s\right],\nonumber\\
&\hskip 1cm =\EEE_{t,x,y,z}\left[ \chi(Y_s,Z_s)( \xi_s-\xi_{s-\eps^q})\right]+\EEE_{t,x,y,z}\left[ \xi_{s-\eps^q}\EEE[\chi(Y_s,Z_s)\mid {\cal F}_{s-\eps^q}]\right]\label{twoterms}
\end{align} 
The second term $\EEE_{t,x,y,z}\left[ \xi_{s-\eps^q}\EEE[\chi(Y_s,Z_s)\mid {\cal F}_{s-\eps^q}]\right]$ in \eqref{twoterms}  is treated as in the proof of Lemma \ref{expcon}. Replacing $\EEE[\chi(Y_s,Z_s)\mid {\cal F}_{s-\eps^q}]$ with $\EEE[\chi(Y_s,z)\mid {\cal F}_{s-\eps^q}]$ results in an ${\cal O}(\sqrt{\delta})$ error. Lemma \ref{nightmare} (using the centering condition) and the argument given above to prove \eqref{eq:first.bound0} give
\begin{align}
\big|\EEE_{t,x,y,z}\left[ \xi_{s-\eps^q}\EEE[\chi(Y_s,Z_s)\mid {\cal F}_{s-\eps^q}]\right]\big|&\leq c(\sqrt{\eps}+\sqrt{\delta}).\label{secondterm}
\end{align}
Regarding the first term $\EEE_{t,x,y,z}\left[ \chi(Y_s,Z_s)( \xi_s-\xi_{s-\eps^q})\right]$ in \eqref{twoterms}, we write as in \eqref{eq:density}
\begin{align*}
\EEE_{t,x,y,z}\left[ \chi(Y_s,Z_s)( \xi_s-\xi_{s-\eps^q})\right]&=
\EEE_{t,x,y,z}\left[ \chi(Y_s,Z_s)\int h(e^{\eta'+\zeta_{t,s}})\d_\eta^k \(p-\tilde{p})(\eta'-\eta\)d\eta'\right] ,
\end{align*}
where $p$ is the Gaussian  density of \eqref{density} and $\tilde{p}$ is the Gaussian density of
\begin{align}
&{\cal N}\(-\zeta_{\tilde{s},s}+(r-\half\sigb_{\perp,t,\tilde{s}}^2)(\tilde{s}-t)+(r-\half \sigb(Z_{\tilde{s}})^2)(T+\Delta-\tilde{s}), \sigb_{\perp,t,\tilde{s}}^2(\tilde{s}-t)+\sigb(Z_{\tilde{s}})^2(T+\Delta-\tilde{s})\),\label{densitytilde}
\end{align}
where $\tilde{s}=s-\eps^q$. Using differentiability with respect to the mean and variance of a normal density (with variance bounded away from zero), and, as in the proof of \eqref{eq:first.bound0}, polynomial growth of $h$, uniform moments of $Y$ and $Z$ (Lemma \ref{LmD1}), and exponential moments of $\zeta_{t,s}$, we deduce that
\begin{align}
\big|\EEE_{t,x,y,z}\left[ \chi(Y_s,Z_s)( \xi_s-\xi_{s-\eps^q})\right]\big|&\leq c\,\eps^{q/2}.\label{firstterm}
\end{align}
Combining \eqref{secondterm} and \eqref{firstterm} with $q<1$ gives \eqref{eq:first.bound}.

The uniform bound \eqref{eq:second.bound} follows easily by decomposing the integral over $[t,T]$ into two integrals, one over $[t,t+\eps^q]$ and using the bound \eqref{eq:first.bound0}, and the other one over $[t+\eps^q, T]$ and using the bound \eqref{eq:first.bound}. Note that the factor $(T-s)^p$ in the integral is simply uniformly bounded by $(T-t)^p$.
\end{proof}

\subsection{Proof of Lemma \ref{thm:PQ}\label{AppB4}}
The proof essentially follows the proof of Theorem \ref{thm:accuracy} in Appendix \ref{smoothproof}.
We define the residual $R^{\eps,\delta,\gam}$ for the regularized payoff via the following equation
\begin{align}
P^{\eps,\delta,\gam}
&=\Pt^{\eps,\del,\gam}+\eps^{3/2} P_{3,0}^\gam + \eps^2 P^\gam_{4,0}+\eps\sqrt{\delta}P^\gam_{2,1}+\eps^{3/2}\sqrt{\delta}P^\gam_{3,1}+ R^{\eps,\delta,\gam}, \label{def:Repsdelgam}
\end{align}
where the approximation $\Pt^{\eps,\del,\gam}$ is given by \eqref{PtildeDelta}, and, as in the proof in the smooth case in Section \ref{smoothproof}, we have introduced the additional terms $(P_{3,0}^\gam, P^\gam_{4,0}, P^\gam_{2,1}, P^\gam_{3,1})$. As we discussed in Remark \ref{poisson7} in that section, they are solutions of the Poisson equations \eqref{eq:u3poisson}, \eqref{eq:u4poisson}, \eqref{eq:u21poisson} and \eqref{eq:u31poisson} (augmented with the $\gam$ superscript), whose centering conditions have been used to obtain lower order terms in the price expansion.

More precisely, applying the operator $\L^{\eps,\delta}$ to $R^{\eps,\delta,\gam}$,
we find the analog of \eqref{eq:Rpde}:
\begin{align}
\L^\eps R^{\eps,\delta,\gam}
	&=	{G^{\eps,\gam} + J^{\eps,\del,\gam} },
\end{align}
where the source terms $G^{\eps,\gam}$ and $J^{\eps,\del,\gam}$ are given by
\begin{align}
G^{\eps,\gam}
	&=	- \( \eps^{3/2}(\L_1 P^\gam_{4,0} + \L_2 P^\gam_{3,0}) + \eps^2 \L_2 P_{4,0}^\gam \) , \label{eq:G.eps.gam} \\
J^{\eps,\del,\gam}
	&=	- \sqrt{\del} \Big(
			\eps (\L_2 P^\gam_{2,1}+\L_1 P^\gam_{3,1}+\M_3 P^\gam_{3,0}+ \M_1 P_{2,0}^\gam) 
			+ \eps^{3/2} ( \L_2 P^\gam_{3,1}+\M_1 P^\gam_{3,0}+\M_3 P^\gam_{4,0})
			+ \eps^1 (\M_1 P^\gam_{4,0})
			\Big) \\ &\qquad
			- \del \Big( 
					\sqrt{\eps} ( \M_2 P_{1,0}^\gam+ \M_1 P_{1,1}^\gam+ \M_3 P_{2,1}^\gam )
					+ \eps ( \M_1 P^\gam_{2,1}+ \M_3 P^\gam_{3,1} +  \M_2 P_{2,0}^\gam ) \\ & \qquad \qquad
					+ \eps^{3/2}( \M_2 P^\gam_{3,0} + \M_1 P^\gam_{3,1} )
					+ \eps^2 \M_2 P^\gam_{4,0} 
					\Big)  \\ &\qquad
			- \del^{3/2} \( \M_2 P_{0,1}^\gam+ \M_1 P_{0,2}^\gam + \sqrt{\eps} \M_2  P_{1,1}^\gam +\eps \M_2 P_{2,1}^\gam + \eps^{3/2} \M_2 P_{3,1}^\gam\) 
			\\ & \qquad
			- \del^2  \M_2 P_{0,2}^\gam .
			\label{eq:J.eps.gam.gam}
\end{align}
We have separated the terms involving singular perturbation only, that is $G^{\eps,\gam}$, and the terms involving regular perturbation as well, that is $J^{\eps,\del,\gam}$.
With the same decomposition in mind, at the maturity date $T$, we have
\begin{align}
R^{\eps,\delta,\gam}(T,x,y,z)
	&=	H^{\eps,\gam}(x,y,z) + { K^{\eps,\del,\gam}(x,y,z) },
\end{align}
where the functions $H^{\eps,\gam}$ and $K^{\eps,\del,\gam}$ are given by
\begin{align}
H^{\eps,\gam}(x,y,z)	
	&= - \eps P_{2,0}^\gam(T,x,y,z) - \eps^{3/2} P_{3,0}^\gam(T,x,y,z) - \eps^2 P_{4,0}^\gam(T,x,y,z)  . \label{eq:H.eps.gam} \\
K^{\eps,\del,\gam}(x,y,z)
	&=	{  -\eps\sqrt{\delta}P^\gam_{2,1}(T,x,y,z) - \eps^{3/2}\sqrt{\delta}P^\gam_{3,1}(T,x,y,z) ,}	\label{eq:K.eps.del.gam}
\end{align}
and the particular term $\eps P_{2,0}^\gam(T,x,y,z)$ is given in \eqref{P2Delta}.
The residual $R^{\eps,\delta,\gam}$ has the following stochastic representation
\begin{align}
R^{\eps,\delta,\gam}(t,x,y,z)
	&=	\EEE_{t,x,y,z} \[ - \int_t^T e^{-r(s-t)} G^{\eps,\gam}(X_s,Y_s,Z_s) ds + e^{-r(T-t)} H^{\eps,\gam}(X_T,Y_T,Z_T) \] \\ &
			+{  \EEE_{t,x,y,z} \[ - \int_t^T e^{-r(s-t)} J^{\eps,\del,\gam}(X_s,Y_s,Z_s) ds + e^{-r(T-t)} K^{\eps,\del,\gam}(X_T,Y_T,Z_T) \] } ,\label{rep:Repsdelgam}
\end{align}
At this point, in order to apply the bounds in Lemma \ref{thm:4}, it is useful to change variables to $\eta(x) = \log x$. We note that, for a function $\xi$ that is at least $(n+2m)$-times differentiable, we have
\begin{align}
\D_1^n \D_2^m \xi (\eta(x))
	&=	\sum_{k=n+m}^{n+2m} a_k \d_\eta^{k}\xi(\eta(x)) ,
\end{align}
where the $\{ a_k \}$ are integers.  
Denoting $\tau=T-t$, a direct computation shows that $G^{\eps,\gam}$ is of the form
\begin{align}
G^{\eps,\gam}(t,e^\eta,y,z)
	=& \, \eps^{3/2} \( \sum_{k=1}^5 g_k^{(0)}(y,z) \d_\eta^k + \tau \sum_{k=1}^7 g_k^{(1)}(y,z) \d_\eta^k
			+ \tau^2 \sum_{k=1}^9 g_k^{(2)}(y,z) \d_\eta^k \) P_{0,0}^\gam(t, e^\eta,z) \\ 
		& + \eps^2 \( \sum_{k=1}^6 g_k^{(3)}(y,z) \d_\eta^k + \tau \sum_{k=1}^8 g_k^{(4)}(y,z) \d_\eta^k 
			+ \tau^2 \sum_{k=1}^{10} g_k^{(5)}(y,z) \d_\eta^k \) P_{0,0}^\gam(t, e^\eta,z). 
			\label{eq:G.eps.gam.2}
\end{align}
Likewise, 
one finds that $H^{\eps,\gam}$ is of the form
\begin{align}
{H^{\eps,\gam}(e^\eta,y,z) }
	&=	{ \( \eps \sum_{k=1}^2h_k^{(0)}(y,z) \d_\eta^k
			+ \eps^{3/2} \sum_{k=1}^3 h_k^{(1)}(y,z) \d_\eta^k
			+ \eps^2 \sum_{k=1}^4 h_k^{(2)}(y,z) \d_\eta^k \) P_{0,0}^\gam(T,e^\eta,z) } . \label{eq:H.eps.gam.2} 
\end{align}
where $\langle h_1^{(0)} \rangle = \langle h_2^{(0)} \rangle=0$.
Then, by expressions \eqref{eq:G.eps.gam.2} and \eqref{eq:H.eps.gam.2}, and Lemma \ref{thm:4} (bounds \eqref{eq:first.bound} and \eqref{eq:second.bound} for the terms in $\eps$, and bounds \eqref{eq:first.bound0} and \eqref{eq:second.bound0} for the other terms),
there exists a constant $c>0$ such that uniformly in $\Delta>0$:
\begin{align}
\Big| \EEE_{t,x,y,z} \[ H^{\eps,\gam}(X_T,Y_T,Z_T) \] \Big|
	&\leq  c  (\eps^{1+q/2} +\eps\sqrt{\delta}) ,\label{boundH}\\
	\Big| \EEE_{t,x,y,z} \[ \int_t^T e^{-r(s-t)} G^{\eps,\gam}(X_s,Y_s,Z_s) ds \]\Big|
	&\leq  c  (\eps^{1+q/2} +\eps\sqrt{\delta}) . \label{boundG}
\end{align}
Next, analyzing  the terms $J^{\eps,\del,\gam}$ and $K^{\eps,\del,\gam}$ given by \eqref{eq:J.eps.gam.gam} and \eqref{eq:K.eps.del.gam} respectively, 
we find there exists a constant $c>0$ such that uniformly in $\Delta>0$:
\begin{align}
\Big| \EEE_{t,x,y,z} \[ K^{\eps,\del,\gam}(X_T,Y_T,Z_T) \] \Big|
&\leq  { c \, \eps \sqrt{\del} },\label{boundK}\\
\Big| \EEE_{t,x,y,z} \[ \int_t^T e^{-r(s-t)} J^{\eps,\del,\gam}(X_s,Y_s,Z_s) ds \]\Big|
&\leq   c \( \eps  \sqrt{\del} + \del\sqrt{\eps} + \del^{3/2} \)  . \label{boundJ}
\end{align}
Here, we omit the lengthy details which consist in writing decomposition formulas for $J^{\eps,\del,\gam}$ and $K^{\eps,\del,\gam}$ similar to the ones obtained for $G^{\eps,\gam}$ and $H^{\eps,\gam}$ in \eqref{eq:G.eps.gam.2} and \eqref{eq:H.eps.gam.2}.
 $J^{\eps,\del,\gam}$ and $K^{\eps,\del,\gam}$ correspond to performing first a regular perturbation bringing a factor $\sqrt{\del}$ and then performing a first order singular perturbation which does not involve boundary layer terms.

Putting together the definition  \eqref{def:Repsdelgam}, the representation formula \eqref{rep:Repsdelgam}, and the bounds \eqref{boundH}, \eqref{boundG}, \eqref{boundK}, \eqref{boundJ},  we deduce that
 for fixed $(t,x,y,z)$ with $t<T$, and $q<1$, there exists a constant $c$ such that
\begin{align}
| P^{\eps,\delta,\gam} - \Pt^{\eps,\del,\gam} |
	&=	{ |\eps^{3/2} P_{3,0}^\gam + \eps^2 P^\gam_{4,0}+\eps\sqrt{\delta}P^\gam_{2,1}+\eps^{3/2}\sqrt{\delta}P^\gam_{3,1}+ R^{\eps,\delta,\gam}| } \\
	&\leq c \( \eps^{1+q/2} +\eps \sqrt{\delta}+\delta\sqrt{\eps}+\delta^{3/2} \) ,
\end{align}  
{which concludes the proof of Lemma \ref{thm:PQ}.}


\section{{Proof of Accuracy after Parameter Reduction in Section \ref{sec:parameter reduction}}}
\label{sec:reduction}
Throughout this Section we use the notation $\O(\eps^{3/2-})$ to indicate terms that are of order $\O(\eps^{1+q/2})$ for any $q<1$.  
Recall from \eqref{eq:replacement} that $\sig^{*2}=\sigb^2+2\sqrt{\eps}V_2$ where, we do not show the $z$-dependence for simplicity of notation. 

We show that 
replacing $\Pt^{\eps,\del}$ in Theorem \ref{thm:accuracy} by $P^{*,\eps,\del}$ defined in \eqref{Pstardef}
does not alter the order of accuracy of the approximation. Note that we are in fact performing a regular perturbation on the volatility. We provide here a PDE based proof assuming smooth payoffs as in Appendix A and we omit the details of the regularization argument which is a simple application of Lemma \ref{thm:Q} and its extension to the regularization of the approximation $P^{*,\eps,\del}$.

First, we note that $\( P_{0,0} - P_{0,0}^* \)= \O(\sqrt{\eps})$ since
\begin{align}
\< \L_2 \> \( P_{0,0} - P_{0,0}^* \)
		&=	\sqrt{\eps}\,V_2 \D_2 P_{0,0}^* , &
P_{0,0}(T,x,z) - P_{0,0}^*(T,x,z)
		&=	0 . 
\end{align}
Next, we define $E_1^{\eps,\del}(t,x,z)$ by
\begin{align}
E_1^{\eps,\del}
&:= \( P_{0,0} + \sqrt{\eps}P_{1,0} + \sqrt{\del} P_{0,1} \) - \( P_{0,0}^* + \sqrt{\eps}P_{1,0}^* + \sqrt{\del} P_{0,1}^* \),
\end{align}
the difference in the first order approximations.
Note that $E_1^{\eps,\del}(T,x,z)=0$ and
\begin{align}
\<\L_2\> E_1^{\eps,\del}
&=	\[ \sqrt{\eps} \( \V^* + V_2 \D_2 \) + \sqrt{\del} \< \M_1 \> \] \( P_{0,0}^* - P_{0,0} \) 
		+ \eps V_2 \D_2 P_{1,0}^* +  \sqrt{\eps \del} V_2 \D_2 P_{0,1}^*.
\end{align}
Thus, we conclude that $E_1^{\eps,\del} = \O(\eps + \sqrt{\eps \del})$.  

Similarly incorporating the order $\eps$ term, we define $E_2^\eps(t,x,y,z)$ by
\begin{align}
E_2^\eps :=	\( P_{0,0} + \sqrt{\eps}P_{1,0} + \eps P_{2,0} \) - \( P_{0,0}^* + \sqrt{\eps}P_{1,0}^* + \eps P_{2,0}^* \) .
\end{align}
{From equation \eqref{eq:<phi>eps} and by using $\D_2\( P_{0,0} - P_{0,0}^* \) = \O(\sqrt{\eps})$ one can show that $E_2^{\eps}(T,x,y,z)=\O(\eps^{3/2-})$}. We then compute
\begin{align}
\<\L_2\> E_2^\eps 
&=	\sqrt{\eps} \V \[ \( P_{0,0}^* + \sqrt{\eps}P_{1,0}^* \) - \( P_{0,0} + \sqrt{\eps}P_{1,0} \) \] 
		+ \eps \A \( P_{0,0}^* - P_{0,0} \) + \eps^{3/2} V_2 \D_2 P_{2,0}^* .
\end{align}

Incorporating the order $\sqrt{\eps\del}$ term, we define $E_3^\eps(t,x,z)$ by
\begin{align}
E_3^\eps :=	\( P_{0,1} + \sqrt{\eps} P_{1,1} \) - \(  P_{0,1}^* + \sqrt{\eps } P_{1,1}^* \) .
\end{align}
Note that $E_3^\eps(T,x,z)=0$ and
\begin{align}
\<\L_2\> E_3^\eps
&=	\< \M_1 \> \[ \( P_{0,0}^* + \sqrt{\eps} P_{1,0}^* \) - \( P_{0,0} + \sqrt{\eps} P_{1,0}^* \) \] 
		+ \sqrt{\eps} \frac{1}{\sigb'} \C \d_z \( P_{0,0}^* - P_{0,0} \)
		+ \sqrt{\eps} \V \( P_{0,1}^* - P_{0,1} \) .
\end{align}
Now define $E_4^\eps(t,x,z)$ by
\begin{align}
E_4^\eps :=	P_{0,2}  - P_{0,2}^* .
\end{align}
Note that $E_4^\eps(T,x,z)=0$ and
\begin{align}
\<\L_2\> E_4^\eps
&=	\<\M_1\> \( P_{0,1}^* -P_{0,1} \) + \M_2 \( P_{0,0}^* -P_{0,0} \) + \sqrt{\eps} V_2 \D_2 P_{0,2}^* .
\end{align}
Finally,
\begin{align}
\<\L_2\> \( E_2^\eps + \sqrt{\del} E_3^\eps + \del E_4^\eps \)
=&	\( \sqrt{\eps} \V + \sqrt{\del} \< \M_1 \> \) E_1^{\eps,\del}
		+ \eps^{3/2} V_2 \D_2 P_{2,0}^* + \sqrt{\eps} \del V_2 \D_2 P_{0,2}^* \\ 
&+ \( \eps \A + \sqrt{\eps\del} \frac{1}{\sigb'} \C \d_z\) \( P_{0,0}^* - P_{0,0} \)
		+ \del \M_2 \( P_{0,0}^* -P_{0,0} \).
\end{align}
Hence, we conclude
\begin{align}
E_2^\eps + \sqrt{\del} E_3^\eps + \del E_4^\eps
&= \O( \eps^{3/2-} + \eps \sqrt{\del} + \sqrt{\eps} \, \del) .
\end{align}

%
%

\bibliographystyle{plain}
\small{\bibliography{secorder}}

\end{document}